\documentclass{amsproc}

\usepackage{algorithmic}
\usepackage{algorithm}
\usepackage{listings}
\usepackage{graphicx}
\usepackage{graphics}
\usepackage{url}
\usepackage{amsmath}
\usepackage{amsthm}
\usepackage{amssymb}
\usepackage{complexity}
\usepackage{bm}
\usepackage{units}
\usepackage{xcolor}
\usepackage{colortbl}
\usepackage{mathdots}
\usepackage{amsrefs}

\newtheorem{theorem}{Theorem}[section]

\newtheorem{corollary}[theorem]{Corollary}

\newtheorem{property}[theorem]{Property}

\theoremstyle{definition}
\newtheorem{definition}[theorem]{Definition}
\newtheorem{example}[theorem]{Example}

\theoremstyle{remark}
\newtheorem{remark}[theorem]{Remark}

\numberwithin{equation}{section}

\newcommand{\sgn}{\text{sgn}}

\numberwithin{equation}{section}
\numberwithin{equation}{subsection}

\begin{document}
\title[Combinatorial Spaces And Order Topologies]{Combinatorial Spaces And Order Topologies}
\author{Philon Nguyen}
\address{Technical Report}
\date{Montreal, 2012.} 

\begin{abstract}
An archetypal problem discussed in computer science is the problem of searching for a given number in a given set of numbers. Other than sequential search, the classic solution is to sort the list of numbers and then apply binary search. The binary search problem has a complexity of $O(logN)$ for a list of $N$ numbers while the sorting problem cannot be better than $O(N)$ on any sequential computer following the usual assumptions. Whenever the problem of deciding partial order can be done in $O(1)$, a variation of the problem on some bounded list of numbers is to apply binary search without resorting to sort. The overall complexity of the problem is then $O(log R)$ for some radius $R$. The following upper-bound for finite encodings is shown:
$$
O(\log\lvert\textbf{X}\rvert_\infty \log \log\ N)
$$
Also, the topology of orderings can provide efficient algorithms for search problems in combinatorial spaces. The main characteristic of those spaces is that they have typical space complexities of $O(2^N)$, $O(N!)$ and $O(N^N)$. The factorial case describes an order topology that can be illustrated using the combinatorial polytope . When a known order topology can be combined to a given formulation of a search problem, the resulting search problem has a polylogarithmic complexity. This logarithmic complexity can then become useful in combinatorial search by providing a logarithmic break-down. These algorithms can be termed as the class of search algorithms that do not require read and are equivalent to the class of logarithmically recursive functions. Also, the notion of order invariance is discussed.
\end{abstract}
\keywords{Complexity theory, Number theory, Combinatorics, Order topologies}

\maketitle

\section{Computable Structures}

A few terminological remarks can be made on the algebra of computable structures.

\begin{remark}
A computable space is defined using the distinguishability problem.
\end{remark}

\begin{remark}
A computable space is said to be complete if it can represent a Turing-complete space.
\end{remark}

Furthermore, the following can be noted:

\begin{remark}
The equality operator is defined using the cut metric of Equation \ref{cutd}. This is denoted $\underset{\mathcal{N}}{=}$.
\end{remark}

Whenever $\mathcal{N}=0$, the conventional equality operator can be used.

\begin{remark}
Let $\textbf{T}^n$ be a sequence such that $|\textbf{T}|=n$. A computable sequence is such that the following holds:
\begin{equation}
f(\textbf{T}^n) = \textbf{S}
\end{equation}
for some computable function $f$ and for a set $\textbf{S}$ such that:
\begin{equation}
\textbf{T}^n_i\neq \textbf{T}_j^n,\ \ \textbf{S}_i\neq \textbf{S}_j
\end{equation}
and:
\begin{equation}
\textbf{T}^n_i=\textbf{T}_j^n,\ \ \textbf{S}_i=\textbf{S}_j
\end{equation}
\end{remark}

\begin{remark}
Let $f$ be a computable function, then $f$ has a representation in a complete computable space.
\end{remark}

\begin{definition}\label{fac}
A domain $\mathbb{F}^n$ on a function $f$ is termed a factorial domain if the following holds:
\begin{equation}
f(\textbf{x}_i) \underset{\mathcal{N}}{\neq} f(\textbf{x}_j),\ \ \textbf{x}_i \neq \textbf{x}_i
\end{equation}
and:
\begin{equation}
f(\textbf{x}_i) \underset{\mathcal{N}}{=}f(\textbf{x}_j),\ \ \textbf{x}_i = \textbf{x}_i
\end{equation}
\end{definition}

\section{Order Topologies}
An order can be best described as a subset of the integral line. Let $\textbf{X}$ define a discrete space such that $f:\textbf{X}\rightarrow\textbf{Y}$ defines an orderable discrete space $\textbf{Y}$. $\text{Or}(\textbf{Y})$ then define the order of $\textbf{Y}$ and a compact order is such that:
\begin{equation}\label{compact}
\frac{\Delta\text{Or}(\textbf{Y})}{\Delta f(\textbf{Y})} = 1
\end{equation}
Whenever $\textbf{Y}$ is termed orderable, then the following holds:
\begin{equation}
\textbf{Y}_i-\textbf{Y}_j=k
\end{equation}
for some value $k$. The difference operator $\Delta\textbf{Y}$ is then defined on a differentiable representation of $\textbf{X}$. Equation \ref{compact} defines the sequence of integers $(n,n+1,\ldots,n+k)$ or, alternatively, a sequence of ordered incremental values $(a_0, a_0+\delta,\ldots,a_0+n\delta)$ for some unit incremental operator $\delta$. This can be denoted $\text{Or}_1(\textbf{X})$. Given an arbitrary order $\text{Or}(\textbf{Y})$, the following holds:
\begin{equation}
\text{Or}_1(\textbf{Y})=\pi(\text{Or}(\textbf{Y}))
\end{equation}
for some permutation $\pi$. Whenever the permutation can be found \emph{seeminglessly}, it can be said that $\text{Or}_1(\textbf{Y})$ is equivalent to $\text{Or}(\textbf{Y})$. 
\begin{figure*}[t]
\centering
{\includegraphics[height=5cm]{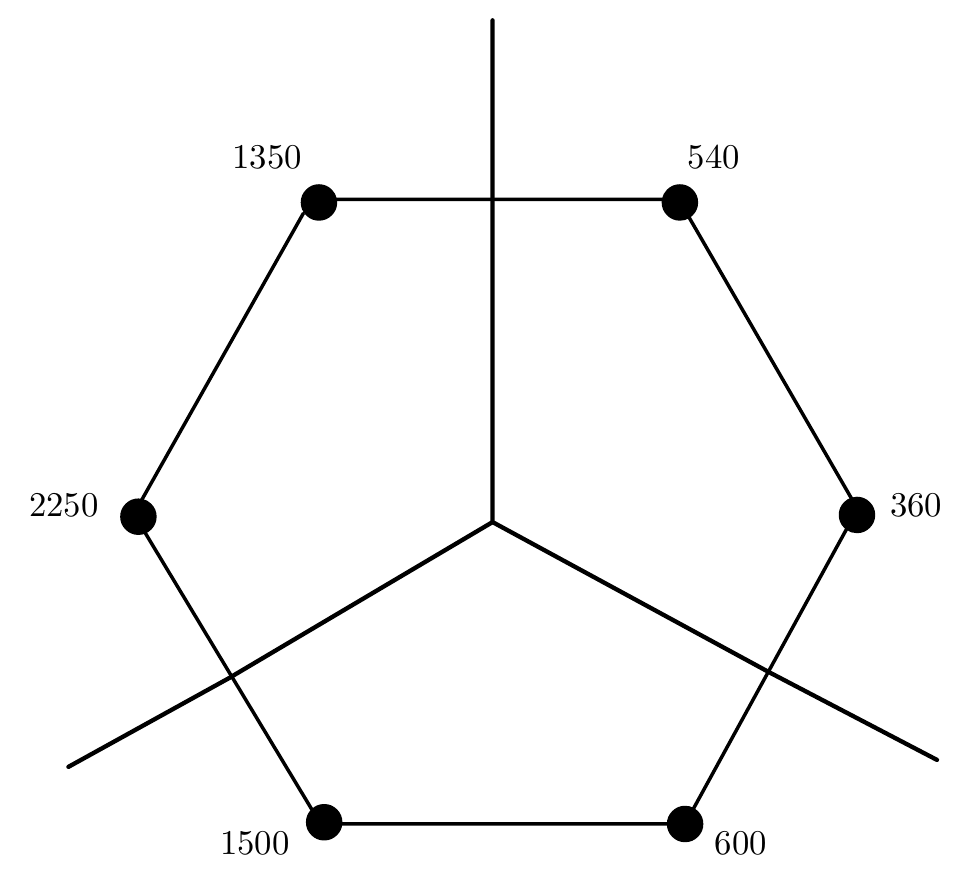}}
\caption{Polytope of $\text{Or}(\pi(1,2,3))$ in a prime factorial domain.}
\label{poly}
\end{figure*}
\begin{example}
The space of rank orders is given by the set of all permutations of the form $\pi(1,2,\ldots,n)$ \cite{schulman:geo}. Figure \ref{poly} illustrates the Cartesian representation of an order over a prime factorial domain given by:
\begin{equation}
\text{G}^3_3(\pi_i(a_1,a_2,a_3))=p_1^{a_{i,1}}p_2^{a_{i,2}}p_3^{a_{i,3}}
\end{equation} 
where the $\text{G}$-operator defines a $g$-code of length $3$ on a set of cardinality $3$. A candidate distance measure in Cartesian space which preserves $\text{Or}\circ\text{G}^3_3(\pi(1,2,3))$ is given by:
\begin{equation}
d(\textbf{x},\textbf{y})=\lVert \textbf{w}\textbf{x}-\textbf{w}\textbf{y} \rVert
\end{equation}
for some given weight vector $\textbf{w}$ and $\textbf{x},\textbf{y}\in\pi(1,2,3)$. It follows that:
\begin{equation}
\text{Or}_1(\pi(1,2,3))=\pi(\text{Or}\circ\text{G}^3_3(\pi({1,2,3})))
\end{equation}
for $\textbf{p}=(2,3,5)$. Equivalently, an $l$-code representation can be given as:
\begin{equation}
\begin{split}
\text{L}^3_3(\pi_i(a_1,a_2,a_3))=&\left\lfloor{a_{i,1}}\log p_1\right\rfloor_{\log}+\\
                                                  &\left\lfloor{a_{i,2}}\log p_2\right\rfloor_{\log}+\left\lfloor{a_{i,3}}\log p_3\right\rfloor_{\log}
\end{split}
\end{equation} 
where the log-floor operator keeps a logarithmic number of digits in regards to the largest value in the code. In practice, this can be computed using a digit-by-digit radix representation.
\end{example}

\begin{example}
Let the following log-prime factorial domain on an index function $\mathcal{I}=(1,\ldots,2^N)_2$ be defined as:
\begin{equation}\label{lcode}
\text{L}^n_N\left(\mathcal{I}_i\right)=\sum_{j=1}^{N}a_i \left\lfloor\log p_i\right\rfloor_{\log},\ \ a_i\in(0,1)
\end{equation}
It follows that the subsets of the $l$-codes of Equation \ref{lcode} are given by:
\begin{equation}
\begin{split}
\text{L}^n_N\left(\mathcal{I}\left(\textbf{1}_N\times\mathcal{I}_i\right)\right) &=\mathcal{I}_i\text{L}^n_N\left(\mathcal{I}\right)\\
                                                                                                                   &=\sum_{i=j}^{N}k_j \left\lfloor\log p_j\right\rfloor_{\log}
\end{split}
\end{equation}
for $ k_j\in(0, \ldots, N)$ and some arbitrary sequence of ordered primes. An illustration of the case $N=4$ is shown in Figure \ref{korder}.
\end{example}

Let $\textbf{X}^N$ define the product space $\textbf{X}\times\ldots\times\textbf{X}$ such that $\textbf{Y}^N\subseteq\textbf{X}^N$. If there exists some ordering differentiable to one, then then the order is termed cyclical. For some given $l$-code representation, we have:
\begin{equation}
\frac{\Delta\text{Or}(\pi_{i,j,k\ldots}(\text{L}^n_N(\textbf{X})))}{\Delta \text{L}(\textbf{X}}) = 1
\end{equation}
given some reshaping permutation operator $\pi_{i,j,k\ldots}$
\begin{example}
The indicator function $\mathcal{I}=(1,2,\ldots,2^N)_2$ has a $(g,l)$-code representation such that any point on in $\text{Or}(\text{L}^n_N(\mathcal{I}))$ can be computed using $O(N\log 2)$ recursions, or alternatively, $O(1)$ when the order is known.
\end{example}

\begin{example}
The $O(2KM(K+M))$ Jackson-Sheridan-Tseitin transform \cite{jackson:cnf,tseitin:cnf} is given by:
\begin{equation}
\textbf{S} = \left[
\renewcommand{\arraystretch}{2}
\begin{array}{c|c}
    \left(\textbf{I}_K^T\otimes\textbf{1}_2\right)^T & \textbf{B} \\ \hline
     \textbf{0} &\left(\textbf{I}_M^T\otimes\textbf{1}_2\right)^T 
  \end{array}\right]
\end{equation}
for some random matrix $\textbf{B}$ such that $b_{i,j}\in(0,1)$
\end{example}
\begin{figure*}[t]\label{korder}
\centering
{\includegraphics[height=6.5cm]{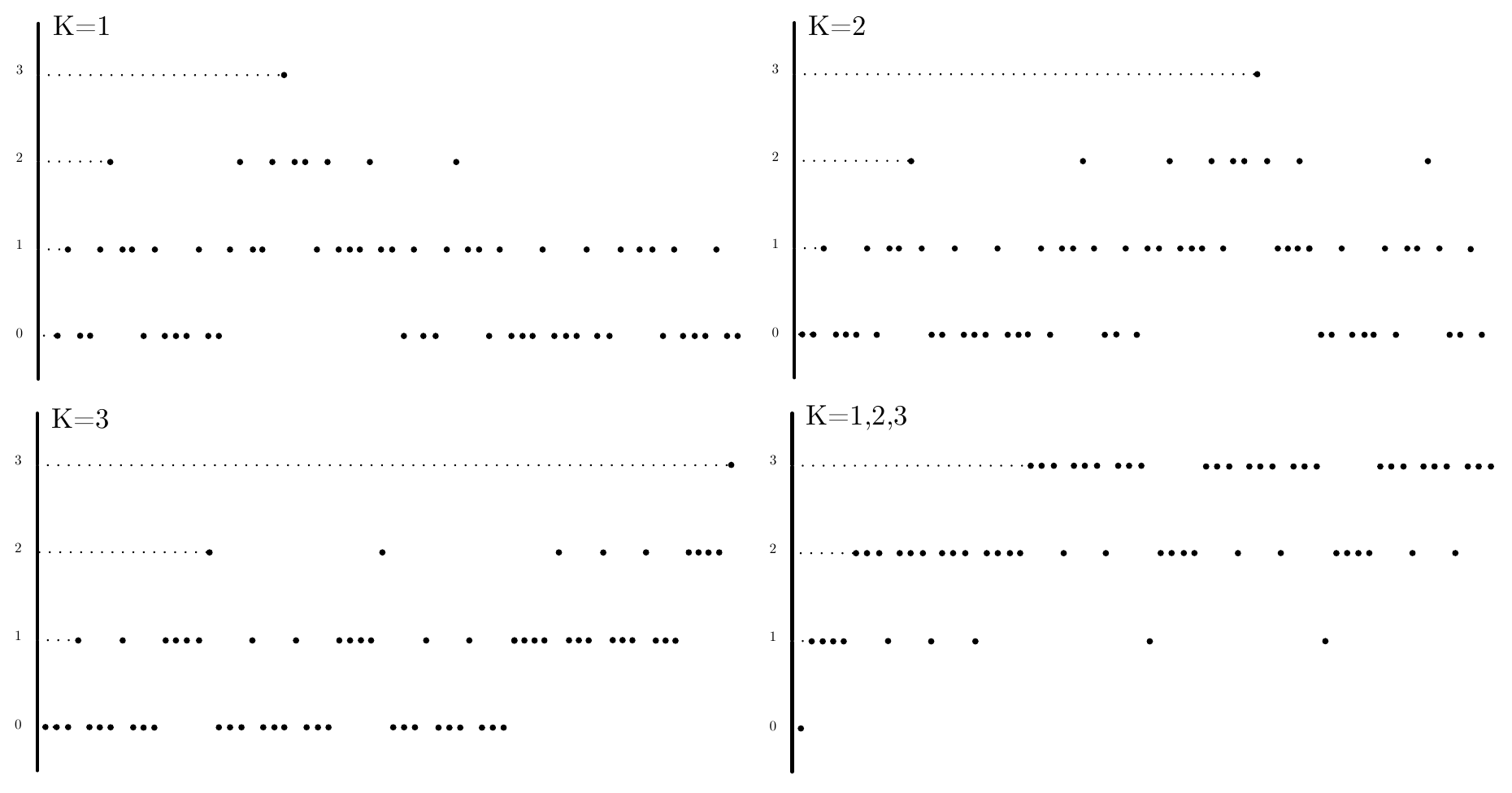}}
\caption{Order curve.}
\end{figure*}

\section{Distance Equations}

Formally, the notions of distance spaces, norms and metrics are defined in the following definitions.

\begin{definition}\label{distance}
A distance measure $d:\textbf{X}\times\textbf{X}\to \mathbb{A}$ is a non-invertible mapping that has the following properties: 	
\begin{equation}\label{zero}
d({{\mathbf{x}}_{i}},{{\mathbf{x}}_{i}})=0,\;{{\mathbf{x}}_{i}}\in \mathbf{X}
\end{equation}		
\begin{equation}\label{inverse}
d({{\mathbf{x}}_{i}},{{\mathbf{x}}_{j}})=d({{\mathbf{x}}_{j}},{{\mathbf{x}}_{i}}),\;{{\mathbf{x}}_{i}},{{\mathbf{x}}_{j}}\in \mathbf{X}		
\end{equation}	
for some non-empty set \textbf{X} and an alphabet $\mathbb{A}$ isomorphic to an arbitrary subset of $\mathbb{R}_{+}^{{}}$, (X,d) defines a distance space.
\end{definition}

Given a distance matrix $\mathbf{D}\in \mathbb{R}_{+}^{N\times N}$ on some finite discrete space \textbf{X}, the mapping  $f \circ g$ is invertible whenever $g:\mathbf{X}\times \mathbf{X}\to \Delta $ is invertible. The distance measure on the constructed space is then given by $f$.

\begin{theorem}
A distance space is uniquely defined by $\Delta $. 
\end{theorem}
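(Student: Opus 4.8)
The plan is to recover the full distance space $(\textbf{X}, d)$ from the difference structure $\Delta$ and then to show that this recovery is forced, hence unique. The starting point is the factorization $d = f \circ g$, in which $g : \textbf{X} \times \textbf{X} \to \Delta$ is the invertible coordinate map and $f : \Delta \to \mathbb{A}$ is the value assignment introduced immediately above the statement; the invertibility of $g$ is the hypothesis that does all the work.

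First I would use $g^{-1} : \Delta \to \textbf{X} \times \textbf{X}$ to reconstruct the set of ordered pairs, and hence the carrier $\textbf{X}$ itself. By the vanishing axiom \eqref{zero}, the diagonal $\{(\textbf{x}_i, \textbf{x}_i)\}$ is exactly the $g^{-1}$-image of the distinguished fibre $\Delta_0 = f^{-1}(0)$, so the elements of $\textbf{X}$ are identified, up to relabeling, with $\Delta_0$. Next I would invoke the symmetry axiom \eqref{inverse}: swapping the two arguments of a pair corresponds, through $g$, to an involution $\iota : \Delta \to \Delta$ under which $f$ is invariant, i.e. $f \circ \iota = f$. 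Thus both the zero element and the symmetric structure of the distance are already encoded inside $\Delta$, and the distance is recovered as $d(\textbf{x}_i, \textbf{x}_j) = f(g(\textbf{x}_i, \textbf{x}_j))$, so that $(\textbf{X}, d)$ is determined the moment $\Delta$ is fixed.

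For uniqueness proper, I would take two distance spaces $(\textbf{X}, d)$ and $(\textbf{X}', d')$ presenting the same $\Delta$ via invertible maps $g$ and $g'$, and form the bijection $(g')^{-1} \circ g : \textbf{X} \times \textbf{X} \to \textbf{X}' \times \textbf{X}'$. Because this map carries the diagonal of $\textbf{X} \times \textbf{X}$ onto that of $\textbf{X}' \times \textbf{X}'$ (both being the preimages of $\Delta_0$) and commutes with the coordinate swap $\iota$, it descends to a bijection $\textbf{X} \to \textbf{X}'$ that intertwines $d$ with $d'$. This is precisely an isomorphism of distance spaces, establishing that $\Delta$ is a complete invariant.

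The main obstacle is conceptual rather than computational: I must argue that $f$ is not a free parameter but is canonically pinned down by $\Delta$, so that $\Delta$ genuinely carries the metric values and not merely the combinatorics of the pairs. Concretely, I would show that Definition \ref{distance}, together with the requirement that $g$ be invertible while $d = f \circ g$ remains non-invertible, forces $f$ to be the canonical value map of $\Delta$, namely its inclusion into $\mathbb{R}_+$. Establishing that this is \emph{compelled} by the setup, rather than smuggling $f$ in as additional data, is the crux; once it is secured, the reconstruction and the uniqueness argument above are routine.
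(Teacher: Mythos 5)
The paper offers no proof of this theorem: it is asserted bare, and the sentence that follows it adopts the convention that ``$(\mathbf{X},d)$ or $\Delta$ then denote an arbitrary distance space,'' with $\Delta$ ranging over concrete value-carrying sets such as $\mathbb{R}_{+}^{N\times N+1}$. The intended reading is evidently that $\Delta$ already contains the distance values (essentially the distance matrix up to the relabeling $g$), so in the paper the claim functions as a definitional convention, not as a statement requiring the reconstruction you attempt. Judged on its own terms, your reconstruction has two concrete failures before reaching the step you flag as the crux. First, Definition \ref{distance} imposes only Equations \ref{zero} and \ref{inverse}; there is no identity of indiscernibles, so $d(\mathbf{x}_i,\mathbf{x}_j)=0$ is permitted for $\mathbf{x}_i\neq\mathbf{x}_j$, and the fibre $g^{-1}(f^{-1}(0))$ may strictly contain the diagonal. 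Your identification of $\mathbf{X}$ with $\Delta_0$ therefore fails at the first step. Second, the descent step in your uniqueness argument is false as stated: a bijection $\mathbf{X}\times\mathbf{X}\to\mathbf{X}'\times\mathbf{X}'$ preserving diagonals and commuting with the swap $\iota$ need not have the form $h\times h$. Take $\mathbf{X}=\mathbf{X}'=\{1,2,3\}$ with all off-diagonal distances equal; the map fixing the diagonal, exchanging the swap-orbits $\{(1,2),(2,1)\}$ and $\{(1,3),(3,1)\}$, and fixing $\{(2,3),(3,2)\}$ satisfies every hypothesis you invoke, yet any inducing $h$ would need $h(2)=3$ (from $(1,2)\mapsto(1,3)$) and $h(2)=2$ (from $(2,3)\mapsto(2,3)$) simultaneously. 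In this example the theorem's conclusion happens to survive via a different bijection, but your argument does not produce one, so the hole is in the proof, not merely the example.

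The deeper problem is the one you correctly identify as the crux and then leave undischarged: nothing in the setup canonically pins down $f$. The only constraint available is that $g$ be invertible while $d=f\circ g$ is not, which says merely that $f$ is non-injective and otherwise leaves $f$ a free parameter. If $\Delta$ is treated as an abstract set --- and invertibility of $g$ alone allows $\Delta$ to be any set of cardinality $|\mathbf{X}|^2$ --- then distinct choices of $f$ on the same $\Delta$ yield non-isomorphic distance spaces, so a bare $\Delta$ simply is not a complete invariant, and no refinement of your reconstruction can make it one. The statement is tenable only under the paper's implicit convention that $\Delta$ denotes value-laden data (in the finite case, the distance matrix), in which case it is essentially tautological and needs no reconstruction argument at all. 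You should either adopt that convention explicitly, which makes the bulk of your proposal unnecessary, or recognize that for abstract $\Delta$ the theorem as stated is not provable.
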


\begin{figure*}[t]
\centering
\includegraphics[height=4cm]{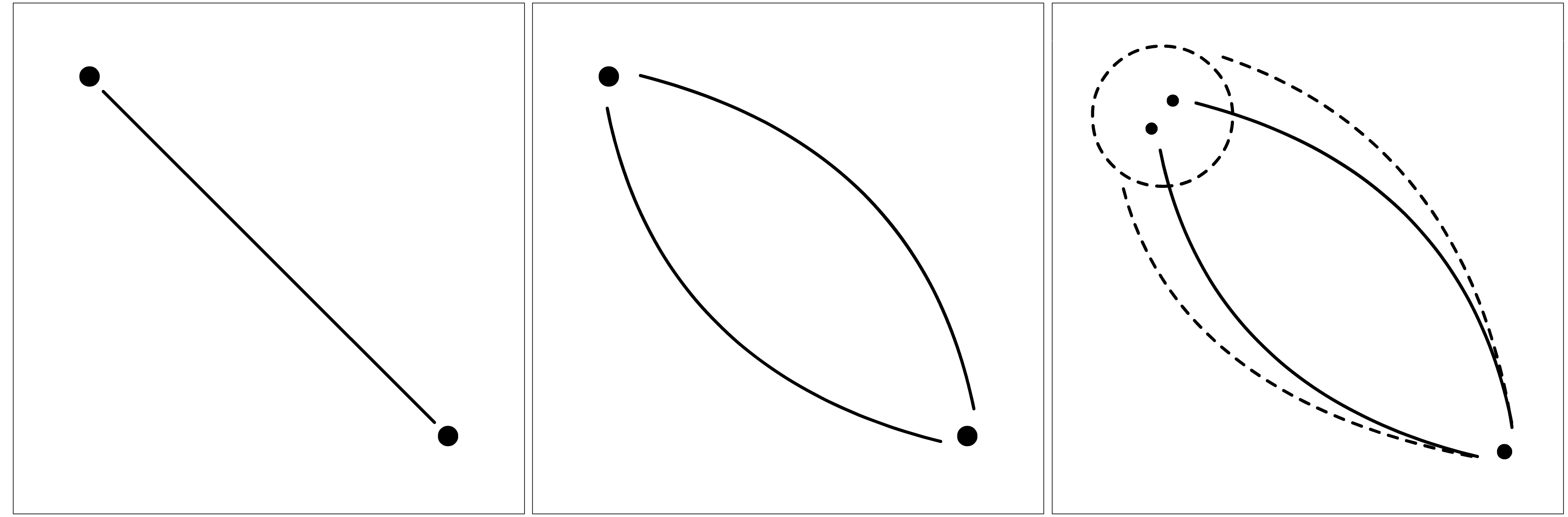}
\caption{Relation diagrams.}
\end{figure*}

By convention, $(\mathbf{X},d)$ or $\Delta $ then denote an arbitrary distance space. It can also be shown that $\Delta \in {{\mathbb{R}}}$, $\Delta \in {{\mathbb{R}}^{2}}$, $\Delta \in \mathbb{R}_{+}^{N\times N+1}$ and $\Delta \in \mathbb{R}\cup \mathbb{R}_{+}^{N\times N}$ are sufficient, under different assumptions. Any binary function that satisfies Equation \ref{inverse} can form a distance space by imposing the following:
\begin{equation}
d(\mathbf{x}_i,\mathbf{x}_j) = f'\left(f(\mathbf{x}_i,\mathbf{x}_j),f(\mathbf{x}_j,\mathbf{x}_i)\right)
\end{equation}	
for some function $f’$ that satisfies Equation \ref{inverse}. Other such measures have been used. A distance space embeds a collection of elements onto the real line and induces a poset topology and an order topology.

\begin{definition} 
Given some non-empty set $\textbf{X}$, a metric space $(\textbf{X}, d)$ is a space equipped with a metric distance measure d defined using Equations \ref{zero} and \ref{inverse} with the additional property:
\begin{equation}
d(\mathbf{x}_i,\mathbf{x}_k) \le d(\mathbf{x}_i,\mathbf{x}_j)+d(\mathbf{x}_j,\mathbf{x}_k),\  \mathbf{x}_i,\mathbf{x}_j,\mathbf{x}_k\in\mathbf{X}	
\end{equation}	 
\end{definition}

Metricity preserves heuristic notions such as distance minimization. Similarly, semimetrics and ultrametrics can be defined by constraining or varying the standard properties. 

\begin{definition}
A norm $\lVert\cdot\rVert$ is a function $f:\mathbf{X}\rightarrow\mathbb{A}$ such that:
\begin{equation}
\big\lVert k\mathbf{x}_i\big\rVert = \big\lvert k \big\rvert \big\lVert \mathbf{x}_i \big\rVert
\end{equation}
\begin{equation}
\big\lVert \mathbf{x}_i+\mathbf{x}_j\big\rVert \leq \big\lVert \mathbf{x}_i \big\rVert +  \big\lVert \mathbf{x}_j \big\rVert  
\end{equation}
\begin{equation}
\big\lVert \mathbf{0}\big\rVert = \mathbb{A}_0
\end{equation}
for some zero vector $\mathbf{0}$, a zero element $\mathbb{A}_0$, $k\in\mathbf{K}$ and $\mathbf{x}_i,\mathbf{x}_j\in\mathbf{X}$.  $\mathbf{X}$ then defines a normed space. 
\end{definition}

Norms preserve metricity and linearity. Also, the notion of mapping of metric spaces is defined as follows.

\begin{definition}
A mapping $f:\textbf{X}\rightarrow\textbf{Y}$ of a space $(\textbf{X},d)$ into a space $(\textbf{Y},d')$ exists if the following holds:
\begin{equation}	                                                 
 \frac{1}{{{k}_{1}}}d({{\mathbf{x}}_{i}},{{\mathbf{x}}_{j}})\le {d}'(f({{\mathbf{x}}_{i}}),f({{\mathbf{x}}_{j}}))\le {{k}_{2}}d({{\mathbf{x}}_{i}},{{\mathbf{x}}_{j}})
\end{equation}
such that ${{\mathbf{x}}_{i}},{{\mathbf{x}}_{j}}\in \mathbf{X}$, for some ${{k}_{1}},{{k}_{2}}\ge 1$. 
\end{definition}

An isomorphic mapping is then defined as a (1,1)-distortion mapping. An isometric mapping then defines an isomorphic mapping on a metric space. Mappings of metric spaces can be generalized to mappings of arbitrary spaces using the same definition.

On finite discrete spaces, mappings are given by the set of functions $f:\mathbb{R}_{+}^{N\times N}\to \mathbb{R}_{+}^{N\times N}$ such that some parameters $(k_1, k_2)$ hold. 

Whenever $(k_1, k_2)$-distortions can be defined, the distance matrix of the image is said to be bounded by the distance matrix of the domain on some arbitrary function. In the context of similarity search, distortions usually refer to a lower-bounding distortion whenever ${{k}_{2}}<1$. An upper-bounding distortion is then given by ${{k}_{2}}\ge 1$. 

In the context of compression, the usual metric is given on some vector space by:
\begin{equation}
k=\lVert f(\textbf{x})-\textbf{x}\rVert_{\infty },\;x\in \textbf{X} 
\end{equation}
The distortion is then either $(k,k)$, $({{k}_{1}},k)$ or $(k,{{k}_{2}})$. 

Typical application domains of mappings include computational geometry, approximation algorithms and functional analysis. 

Other problems of practical importance include mappings of random walks into the plane, mappings into binary codes and mappings onto the $n$-sphere.

A finite discrete space is described in Definition \ref{deffds} and refers to finite computable discrete spaces. 

\begin{definition}\label{deffds}
Let $f:\mathbf{X}\to \mathbf{Y}$ be a mapping from some arbitrary space $(\mathbf{X},d)$ into a space $(\mathbf{Y},d)$.
 Whenever the following holds for all ${{\mathbf{y}}_{i}},{{\mathbf{y}}_{j}}\in \mathbf{Y}$:
\begin{equation}
\sum\limits_{i,j}{d}({{\mathbf{y}}_{i}},{{\mathbf{y}}_{j}})=0
\end{equation}
given a cut metric defined as:
\begin{equation}\label{cutd}
d(\textbf{y}_i,\textbf{y}_j) = \left\{ \begin{array}{rl}
				         1, &  \mathcal{N}(\textbf{y}_i)\cap\mathcal{N}(\textbf{y}_j)\neq\O
\\
				         0, & \mathcal{N}(\textbf{y}_i)\cap\mathcal{N}(\textbf{y}_j)=\O
				       \end{array}\right.
\end{equation}
and some arbitrary neighborhood $\mathcal{N}$, $(\mathbf{Y},d)$ defines a discrete space. When $\mathbf{Y}$ is finite, $f$ generates a finite discrete space. $(\mathbf{X},d)$ is then defined as the represented space.
\end{definition}

\begin{corollary} 
A finite discrete space can be constructed if and only if its elements have an arbitrary neighborhood that can be computed exactly in the classical sense.
\end{corollary}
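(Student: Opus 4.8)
The plan is to treat this as a biconditional whose entire content is localized in the cut metric of Equation \ref{cutd}. The construction of a finite discrete space in Definition \ref{deffds} succeeds precisely when the distinguishability condition $\sum_{i,j} d(\mathbf{y}_i,\mathbf{y}_j)=0$ can be certified, and since $d$ takes the value $1$ or $0$ according to whether $\mathcal{N}(\mathbf{y}_i)\cap\mathcal{N}(\mathbf{y}_j)$ is nonempty or empty, the whole question reduces to whether the neighborhood-intersection predicate can be evaluated for every pair. So both directions amount to relating constructibility of the space to decidability of this predicate, i.e.\ to the exact computability of the neighborhoods $\mathcal{N}$ in the classical sense. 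Throughout I would adopt the convention that the summation ranges over distinct indices, so that the diagonal is governed by Equation \ref{zero} and does not conflict with a nonempty self-intersection.

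For necessity ($\Rightarrow$), I would assume that a finite discrete space has been constructed. By Definition \ref{deffds} this means the cut metric is well-defined on all of $\mathbf{Y}\times\mathbf{Y}$ and the summation condition has been verified. A well-defined cut metric assigns a definite Boolean value to each pair, and by Equation \ref{cutd} that value is exactly the truth value of $\mathcal{N}(\mathbf{y}_i)\cap\mathcal{N}(\mathbf{y}_j)\neq\emptyset$. Deciding this membership question for arbitrary pairs forces each neighborhood to be available as an exactly computed object rather than as an approximant; otherwise the intersection test could not return a classical $0/1$ answer. This is the same distinguishability requirement flagged at the outset, so exact classical computability of the neighborhoods is necessary.

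For sufficiency ($\Leftarrow$), I would assume each element admits an arbitrary neighborhood computable exactly in the classical sense. Then for a finite image $\mathbf{Y}$ the cut metric of Equation \ref{cutd} is computable on each of the finitely many pairs: compute the two neighborhoods exactly, test their intersection, and record $0$ or $1$. Evaluating all $O(\lvert\mathbf{Y}\rvert^2)$ pairs and verifying $\sum_{i,j} d(\mathbf{y}_i,\mathbf{y}_j)=0$ is a finite classical procedure that terminates; when the condition fails one refines the neighborhoods, shrinking them until the intersections vanish, and re-tests, which again terminates by finiteness. The outcome realizes the construction of Definition \ref{deffds} and exhibits $(\mathbf{Y},d)$ as a finite discrete space.

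The main obstacle I anticipate is not in either implication separately but in pinning down the phrase \emph{computed exactly in the classical sense} so that the two halves meet. I would read it as classical decidability of the intersection predicate, and the crux is to argue that this decidability is simultaneously forced by the $0/1$-valuedness of the cut metric (necessity) and sufficient to drive the finite verification loop (sufficiency). A secondary technical point is the self-intersection of a neighborhood on the diagonal, which I would neutralize by the distinct-index convention above, or equivalently by taking punctured neighborhoods, so that Equation \ref{zero} and the cut metric remain consistent and the biconditional stays clean.
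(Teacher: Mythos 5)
Your overall route is the paper's own: the paper states this corollary \emph{without} proof, as an immediate unwinding of Definition \ref{deffds}, and the evidence for the intended reading is the proof of the very next corollary, where non-computability of a space containing transcendentals is deduced precisely from the undecidability of $\mathcal{N}(\mathbf{x}_i)\cap\mathcal{N}(\mathbf{x}_j)=\varnothing$, followed by the remark that Equation \ref{cutd} cannot be decided on a classical machine for arbitrary subsets of the real line. So localizing the entire content of the biconditional in the decidability of the intersection predicate of Equation \ref{cutd}, as you do, is exactly the argument the paper has in mind; your distinct-index convention for the diagonal is a sensible repair of a real wrinkle the paper leaves silent, since $\mathcal{N}(\mathbf{y}_i)\cap\mathcal{N}(\mathbf{y}_i)\neq\varnothing$ would otherwise defeat the summation condition. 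One caution on your necessity direction: a $\{0,1\}$-valued function can be mathematically well defined without being decidable, so the step from ``the cut metric is well-defined'' to ``the predicate is decidable'' goes through only because \emph{constructed} here means carried out on a classical machine --- Definition \ref{deffds} explicitly ``refers to finite computable discrete spaces'' --- and you should say so rather than let $0/1$-valuedness carry the weight.

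The one genuine flaw is the refinement loop in your sufficiency direction. Finiteness of $\mathbf{Y}$ bounds the number of pairs tested in a single round, not the number of rounds of shrinking, so ``terminates by finiteness'' does not follow; worse, if two listed elements coincide (or are computable reals whose equality cannot be decided), no amount of shrinking makes the intersections vanish and your loop runs forever --- correctly reflecting that no discrete space exists there, but contradicting your termination claim. The loop is also unnecessary: the hypothesis grants an \emph{arbitrary} neighborhood computed exactly, so for distinct elements you may choose the neighborhoods disjoint at the outset (for instance of radius below half the minimum pairwise separation, which exact computation makes available), evaluate the finitely many cut distances of Equation \ref{cutd} once, and verify $\sum_{i,j} d(\mathbf{y}_i,\mathbf{y}_j)=0$ directly. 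With that single step replaced by a direct choice of disjoint neighborhoods, your proof coincides with the paper's intended argument.
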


\begin{corollary} 
A countable subset of the real line containing arbitrary transcendentals is not a computable discrete space whenever the value of the transcendentals are considered. 
\end{corollary}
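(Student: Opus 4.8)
The plan is to derive this as the contrapositive of the preceding corollary, which characterizes a discrete space by the requirement that its elements admit an arbitrary neighborhood computable \emph{exactly in the classical sense}. It therefore suffices to show that a countable subset $\mathbf{X}\subset\mathbb{R}$ containing arbitrary transcendentals, whose values are taken into account, fails this exact-neighborhood condition, so that the cut metric of Equation \ref{cutd} cannot be evaluated by any classical (Turing) procedure on such a set.

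First I would fix the reading of \emph{arbitrary transcendental}. Since the computable reals form a countable set while the transcendentals do not, a transcendental drawn without any computability assumption is, generically, not a computable number; that is, no Turing machine outputs its exact value to arbitrary precision in finite time. The phrase \emph{whenever the value is considered} I would interpret as the requirement that the neighborhood relation $\mathcal{N}(\mathbf{y}_i)\cap\mathcal{N}(\mathbf{y}_j)$ underlying Equation \ref{cutd} be resolved from the exact real values of the points, rather than from finite symbolic labels attached to them.

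Next I would connect this to Definition \ref{deffds}. Evaluating $d(\mathbf{y}_i,\mathbf{y}_j)$ requires deciding whether two neighborhoods intersect, which, once the value is considered, reduces to comparing transcendental reals under the order or equality of the line. For a generic transcendental no finite prefix of its expansion determines such a comparison, and equality against a prescribed target is already undecidable for computable reals and outright unrepresentable for uncomputable ones. Hence the neighborhood of such a point cannot be computed exactly in the classical sense, and by the preceding corollary the set cannot be realized as a computable discrete space.

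The main obstacle I anticipate is precisely the formalization of \emph{considering the value}: one must separate transcendentals that happen to be computable, such as $\pi$ or $e$, from the generic case, and must reconcile the countability of $\mathbf{X}$ (only countably many values) with the fact that each such value can still demand infinite precision. Making this distinction rigorous — and showing that the cut metric genuinely depends on the withheld infinite information rather than on a finite encoding — is the step where the argument must be pinned down carefully.
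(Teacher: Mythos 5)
Your proposal is correct and follows essentially the same route as the paper: the paper's proof likewise reduces the evaluation of the cut metric of Equation \ref{cutd} to deciding $\mathcal{N}(\mathbf{x}_i)\cap\mathcal{N}(\mathbf{x}_j)=\varnothing$ for arbitrary transcendental values, and dismisses this as undecidable by appeal to the halting and machine-equivalence problems --- precisely the undecidability of equality/comparison of machine-presented reals that you invoke. Your case-split between computable transcendentals (undecidable equality) and uncomputable ones (unrepresentable values) is a finer elaboration than the paper's one-line citation, but it is the same argument.
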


\begin{proof}
Let $\mathbf{X}=\{{{\mathbf{x}}_{1}},\ldots ,{{\mathbf{x}}_{n}}\}$ define some space $(\mathbf{X},d)$ for some transcendentals ${{\mathbf{x}}_{i}}$. Given that the halting problem and the machine equivalence problem are undecidable, the following problem cannot be decided for arbitrary values:
\begin{equation}
\mathcal{N}({{\mathbf{x}}_{i}})\cap \mathcal{N}({{\mathbf{x}}_{j}})=\varnothing 
\end{equation}
\end{proof}

Equivalently, Equation \ref{cutd} cannot be decided on a classical machine for arbitrary subsets of the real line.

\begin{corollary}
The problem of defining any discrete space can be decided upon whenever an arbitrary relation on all its elements can be decided. An arbitrary graph is then defined.
\end{corollary}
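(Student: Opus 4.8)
The plan is to read Definition~\ref{deffds} as asserting that a finite discrete space is determined entirely by the cut metric of Equation~\ref{cutd}, and then to observe that this metric is nothing more than the indicator of a single binary predicate on pairs of elements. First I would note that Equation~\ref{cutd} assigns $d(\mathbf{y}_i,\mathbf{y}_j)\in\{0,1\}$ according to whether $\mathcal{N}(\mathbf{y}_i)\cap\mathcal{N}(\mathbf{y}_j)=\varnothing$; consequently the distance matrix $\mathbf{D}\in\mathbb{R}_{+}^{N\times N}$ is precisely the adjacency matrix of this predicate, and the space $(\mathbf{Y},d)$ carries no information beyond it. Thus ``defining the discrete space'' is, after unwinding the definition, exactly the task of filling in each of the finitely many entries of $\mathbf{D}$.

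Next I would abstract the intersection predicate into an arbitrary binary relation $R(\mathbf{y}_i,\mathbf{y}_j)$ playing the role of $\mathcal{N}(\mathbf{y}_i)\cap\mathcal{N}(\mathbf{y}_j)\neq\varnothing$. By the first corollary following Definition~\ref{deffds}, the space can be constructed precisely when such a neighborhood predicate can be evaluated in the classical sense, so the construction reduces to deciding $R$ on every pair. Conversely, if $R$ is decidable on all of $\mathbf{Y}\times\mathbf{Y}$, then each entry $d(\mathbf{y}_i,\mathbf{y}_j)$ is obtained by a single decision of $R$, and since $\mathbf{Y}$ is finite only $O(N^2)$ such decisions are needed; hence the whole of $\mathbf{D}$, and with it the space, is decidable. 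This supplies the stated ``whenever'' direction, with the reverse reduction furnished by the preceding corollary.

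Finally I would identify the relation with a graph. A binary relation $R\subseteq\mathbf{Y}\times\mathbf{Y}$ is by definition an edge set on the vertex set $\mathbf{Y}$, so it specifies a graph $G=(\mathbf{Y},R)$; the symmetry imposed by Equation~\ref{inverse} makes $G$ undirected, while relaxing that symmetry through the construction $d(\mathbf{x}_i,\mathbf{x}_j)=f'\!\left(f(\mathbf{x}_i,\mathbf{x}_j),f(\mathbf{x}_j,\mathbf{x}_i)\right)$ recovers the directed relation diagrams. Because $R$ is subject to no constraint other than decidability, $G$ is an arbitrary graph, which is the second assertion.

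The step I expect to be the main obstacle is not any calculation but the formalization itself: the statement is phrased informally, so the real work is to pin down ``the problem of defining any discrete space can be decided upon'' as decidability of the cut-metric matrix, and to justify that an arbitrary decidable relation may legitimately stand in for the specific intersection predicate of Equation~\ref{cutd}. Once that substitution is licensed by the preceding corollary, the graph-theoretic conclusion is immediate, since a relation and a graph are the same object viewed two ways.
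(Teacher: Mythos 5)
Your proposal is correct and matches the paper's intent: the paper actually supplies no proof for this corollary, treating it as immediate from Definition~\ref{deffds} and the cut metric of Equation~\ref{cutd}, and your reconstruction --- deciding the neighborhood-intersection predicate (or any binary relation standing in for it) on each of the finitely many pairs determines every entry of the distance matrix and hence the space, with the relation itself constituting the edge set of an arbitrary graph on the elements --- is precisely the argument the paper leaves implicit. The only caution is that your $O(N^2)$ counting step presupposes finiteness, whereas the corollary says ``any discrete space''; for infinite spaces the conclusion still holds pointwise under the stated hypothesis, so nothing essential is lost.
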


\begin{example}
The space defined by:
\begin{equation}
\mathbf{X}=\bigcup\limits_{i=1}^{N}{\left\{ \pi +1/i,1/i \right\}}
\end{equation}
is a computable finite discrete space under the following distance matrix:
\begin{equation}
\textbf{D} = \left[
\renewcommand{\arraystretch}{1.2}
\begin{array}{c|cccc}
      & i=1 & \cdots & 2N & \cdots \\ \hline
   i=1  & 0& \cdots & a & \cdots\\
   i+1  & b & \cdots &\lvert a - b \rvert & \cdots\\
    \vdots & \vdots & \ddots & \vdots & \cdots\\
    2N &  a& \cdots &0 &\cdots \\
    2N+1 & \lvert a+b  \rvert &  \cdots&  b  & \cdots \\
      \vdots & \vdots & \vdots  & \vdots  & \vdots  \\
  \end{array}\right]
\end{equation}
for some element $a$ of an arbitrary alphabet and b given as follows:
\begin{equation}
b=\frac{1}{i}-\frac{1}{1+i}
\end{equation}
such that $a\pm b\in {{\mathbb{R}}_{+}}$.
\end{example}

\begin{example}
 Let the following arbitrary curve be defined on the following sequence of iterates:
\begin{equation}
{{f}_{i}}(x)={{f}_{i-1}}(x)+\delta 
\end{equation}
for some infinitessimal operator $\delta $. Then, $f$ defines a discrete space under the following distance matrix:
\begin{equation}
\left[ 
\renewcommand{\arraystretch}{1.2}
\begin{array}{c|cc}
   {} & {{f}_{0}}(x) & \ldots   \\\hline
   {{f}_{0}}(x) & 0 & \cdots   \\
   \vdots  & \vdots  & \cdots   \\
   {{f}_{i}}(x) & \sum{{{\delta }_{i}}} & \cdots   \\
   \vdots  & \vdots  & \ddots   \\
\end{array} \right]	
\end{equation}
\end{example}

\begin{example}
A discrete space can be defined on the following sequence of iterates:
\begin{equation}
{{f}_{i}}(x)={{f}_{i}}(x)+\Omega\delta 
\end{equation}
for some uncomputable number $\Omega \in [0,1]$ \cite{calude:omega}. A distance matrix can then be given as: 
\begin{equation}
\left[ 
\renewcommand{\arraystretch}{1.2}
\begin{array}{c|cc}
   {} & {{f}_{0}}(x) & \ldots   \\\hline
   {{f}_{0}}(x) & 0 & \cdots   \\
   \vdots  & \vdots  & \cdots   \\
   {{f}_{i}}(x) & \Omega \sum{{{\delta }_{i}}} & \cdots   \\
   \vdots  & \vdots  & \ddots   \\
\end{array} \right] 	
\end{equation}
\end{example}

\section{Distance Equations On Finite Discrete Spaces}

Some known notions often associated to search algorithms can be formalized using the distance equation concept. 

An example is similarity search in some $n$-dimensional space equipped with the ${{\ell }_{p}}$ metric, which has found many interesting algorithmic solutions.

\begin{definition}
Let $\mathbf{X}\cup \mathbf{q}$ be some finite discrete space. A distance equation is defined as follows:
\begin{equation}
\delta (\mathbf{q}):_{{}}^{{}}\;d(\mathbf{X},\mathbf{q})+k=0
\end{equation}
for all values of $k\in \mathbb{K}$ such that $\mathbb{K}\subseteq \mathbb{R}$.
\end{definition}

\begin{corollary}
Any distance equation on some finite discrete space defines a set of open or closed intervals.
\end{corollary}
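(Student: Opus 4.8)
The plan is to reduce the statement to the one-dimensional picture guaranteed by the embedding of a distance space onto the real line, and then to read off the interval structure directly from the level sets of the distance function. First I would invoke the earlier observation that a distance space embeds its collection of elements onto the real line and induces an order topology; this permits representing each $\mathbf{x}_i\in\mathbf{X}$ by a real coordinate and extending the distance equation to an ambient query coordinate on $\mathbb{R}$. Because the alphabet $\mathbb{A}$ is isomorphic to a subset of $\mathbb{R}_+$, the equation $d(\mathbf{X},\mathbf{q})+k=0$ is solvable only when $-k\ge 0$, so I would set $r=-k$ and treat $r$ as a nonnegative threshold throughout.

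The core step is to analyze the solution set for a single value of $k$. Using the symmetry of Equation \ref{inverse} together with the vanishing diagonal of Equation \ref{zero}, the distance measured from a fixed reference element, viewed as a function of the linear coordinate, attains the value $r$ at the coordinates lying at distance $r$ on either side of that element. Consequently the sublevel set $\{q : d(\mathbf{x}_i,q)\le r\}$ is a closed interval and its strict counterpart is an open interval, while the level set itself consists of the interval endpoints. Since $d(\mathbf{X},\mathbf{q})$ is assembled from these per-element distances, the zero set of $d(\mathbf{X},\mathbf{q})+k$ is a finite Boolean combination of such sublevel and level sets, hence again a finite union of intervals.

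I would then sweep $k$ across $\mathbb{K}$. As $\mathbb{K}\subseteq\mathbb{R}$, the reflected range $r=-k$ sweeps a subinterval of $\mathbb{R}_+$, and taking the union of the level sets over this range fills exactly the gaps between consecutive endpoints. Whether each endpoint is attained is decided by whether $\mathbb{K}$ is closed or open at the corresponding extremity, which reproduces precisely the open-or-closed dichotomy asserted by the corollary; a value attained only once appears as a degenerate closed interval. Collecting the finitely many elements of $\mathbf{X}$ then yields the claimed finite set of open or closed intervals.

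The principal obstacle is that a distance space is assumed only to be symmetric with a vanishing diagonal, so the triangle inequality is unavailable and I cannot appeal to convexity of metric balls. The crux is therefore to justify the monotonicity used above from the order topology alone. I would handle this through the compact-order condition of Equation \ref{compact}, whose unit ratio $\Delta\text{Or}(\mathbf{Y})/\Delta f(\mathbf{Y})=1$ forces the embedded distance from a fixed reference to be order-compatible, that is, nondecreasing as the query coordinate moves away along the line. It is exactly this order compatibility that turns the level sets into interval endpoints and underwrites the entire argument.
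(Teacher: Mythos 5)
The paper states this corollary without proof (it is immediately followed by the remark that a distance equation ``defines a $k$-partition of the real numbers''), so there is no official argument to compare against; judged on its own terms, your proposal has a genuine gap in its final, load-bearing step. You correctly identify that Definition \ref{distance} grants only the symmetry of Equation \ref{inverse} and the vanishing diagonal of Equation \ref{zero}, so no convexity or monotonicity of ``balls'' is available. But your repair --- invoking the compact-order condition of Equation \ref{compact} to force the embedded distance to be ``order-compatible'' --- is not legitimate: Equation \ref{compact} is the \emph{definition} of a compact order, a special property that a particular orderable image $\mathbf{Y}$ may or may not have, not a consequence of being a distance space. The corollary quantifies over \emph{any} distance equation on \emph{any} finite discrete space, so you are assuming a hypothesis the statement does not grant. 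Without it, your per-element claim fails outright: for a merely symmetric $d$ with vanishing diagonal, a sublevel set $\{q : d(\mathbf{x}_i,q)\le r\}$ over a continuum of query coordinates need not be an interval at all (take $d(x,q)=\lvert\sin(q-x)\rvert$ on a subset of the line: symmetric, zero on the diagonal, with periodic, disconnected sublevel sets). The same objection defeats the ``Boolean combination of intervals is a finite union of intervals'' step, since its inputs are no longer intervals.

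There is also a misplacement of where the interval structure lives. In the definition, $\mathbf{X}\cup\mathbf{q}$ is a \emph{finite} discrete space and $\mathbf{q}$ is an element of it, not a coordinate sweeping $\mathbb{R}$; the intervals the corollary asserts are in the $k$-line (equivalently, in the real line onto which $\mathbb{A}$ embeds), which is exactly the $k$-partition the paper names in the following sentence. Once this is recognized, finiteness does all the work and your machinery is unnecessary: $d(\mathbf{X},\mathbf{q})$ takes finitely many values $r_1<\cdots<r_m$ in $\mathbb{A}\cong\mathbb{R}_+$, the equation $d(\mathbf{X},\mathbf{q})+k=0$ is satisfiable precisely for $k\in\{-r_1,\ldots,-r_m\}$, and this finite point set partitions $\mathbb{R}$ into degenerate closed intervals (the attained values) and their complementary open intervals; intersecting with an arbitrary $\mathbb{K}\subseteq\mathbb{R}$, or sweeping $k$ over an interval of $\mathbb{K}$ as in your third paragraph, yields exactly the open-or-closed dichotomy claimed. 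Your sweep-over-$k$ paragraph is essentially this correct argument; I recommend discarding the embedding, monotonicity, and Equation \ref{compact} entirely and letting finiteness carry the proof.
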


\begin{corollary}
A distance equation is defined on a continuous space if and only if $k\in \mathbb{K}$ for some $\mathbb{K}\subseteq\bar{\mathbb{R}}$.
\end{corollary}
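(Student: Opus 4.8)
The plan is to prove the biconditional by contrasting the parameter set $\mathbb{K}$ of the finite discrete case, where $\mathbb{K}\subseteq\mathbb{R}$, with the completed line $\bar{\mathbb{R}}$ that continuity forces. The starting observation is that, by Definition \ref{distance}, the distance measure $d$ takes values in an alphabet $\mathbb{A}$ isomorphic to a subset of $\mathbb{R}_+$, so the solution set of $\delta(\mathbf{q})$ in the unknown $k$ is exactly $\{-d(\mathbf{x}_i,\mathbf{q}) : \mathbf{x}_i\in\mathbf{X}\}$. For a finite discrete space this set is a finite collection of reals, which the preceding corollary resolves into a union of open or closed intervals living in $\mathbb{R}$. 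The proof will show that the passage from \emph{finite discrete} to \emph{continuous} is precisely the passage from $\mathbb{R}$ to its completion $\bar{\mathbb{R}}$.

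For the forward direction I would assume the distance equation is defined on a continuous space and deduce $\mathbb{K}\subseteq\bar{\mathbb{R}}$. Continuity of the represented space $(\mathbf{X},d)$ means that the image of $d$ is not a discrete finite set but fills a continuum of $\mathbb{R}_+$; in the order topology of Section 2 this continuum accumulates at its boundary and may be unbounded. Since a solution $k=-d(\mathbf{x}_i,\mathbf{q})$ must exist for every point of that continuum, and since $\mathbb{R}$ fails to contain the suprema, infima and points at infinity at which such a continuum can accumulate, the index $k$ cannot be confined to $\mathbb{R}$; the smallest line capable of indexing the whole continuum together with its limits is the completion $\bar{\mathbb{R}}$. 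Hence $k\in\mathbb{K}$ for some $\mathbb{K}\subseteq\bar{\mathbb{R}}$.

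For the converse I would assume $k\in\mathbb{K}\subseteq\bar{\mathbb{R}}$ and build a continuous space. Using the preceding corollary, each finite $k$ contributes an open or closed interval $\{\mathbf{x}_i : d(\mathbf{x}_i,\mathbf{q})=-k\}$; letting $k$ sweep the completion $\bar{\mathbb{R}}$ adjoins exactly the limit and boundary values that these intervals were missing in the discrete case, so their union has no isolated points and is connected in the induced order topology. By the cut metric of Definition \ref{deffds}, overlap of neighborhoods $\mathcal{N}(\mathbf{x}_i)\cap\mathcal{N}(\mathbf{x}_j)$ persists across the adjoined limits, which is the defining condition for $(\mathbf{X},d)$ to be continuous rather than discrete. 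This closes the biconditional.

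The step I expect to be the main obstacle is fixing a working definition of \emph{continuous space} inside this framework and showing that the order-theoretic completeness supplied by $\bar{\mathbb{R}}$ coincides with that topological notion. The delicate point is that $\mathbb{R}\subseteq\bar{\mathbb{R}}$ already, so the content of the ``only if'' cannot be mere set inclusion; it must be that continuity forces the parameter to realize values outside $\mathbb{R}$ — the accumulation points and points at infinity adjoined in the completion — and making this forcing rigorous requires tying the cut-metric distinguishability of Definition \ref{deffds} to the failure of $\mathbb{R}$ to be order-complete.
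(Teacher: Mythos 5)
The paper gives no proof of this corollary at all --- it is asserted bare, immediately after the definition of a distance equation --- so there is no argument of the paper's to compare yours against; your proposal has to stand on its own, and it does not. The gap you flag in your final paragraph is not a delicate point to be handled later; it is where the proof collapses. Your forward direction rests on the claim that a continuum of distance values ``accumulates'' at suprema and infima that $\mathbb{R}$ fails to contain, so that only $\bar{\mathbb{R}}$ can index it. But $\mathbb{R}$ is Dedekind complete: every bounded set of reals already has its supremum and infimum in $\mathbb{R}$, and the only points $\bar{\mathbb{R}}$ adjoins are $\pm\infty$. Concretely, take $\mathbf{X}=[0,1]$ with the Euclidean metric and any $\mathbf{q}\in\mathbf{X}$: the space is continuous in any reasonable sense, yet the solution values $k=-d(\mathbf{x},\mathbf{q})$ fill the compact interval $[-1,0]\subseteq\mathbb{R}$, so continuity does not force the parameter outside $\mathbb{R}$. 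What $\bar{\mathbb{R}}$ buys is unboundedness, not continuity, and your argument conflates the two. Since $\mathbb{R}\subseteq\bar{\mathbb{R}}$, the condition ``$k\in\mathbb{K}$ for some $\mathbb{K}\subseteq\bar{\mathbb{R}}$'' is satisfied by every distance equation, including those on finite discrete spaces, so without some reinterpretation (e.g., $\mathbb{K}$ a continuum, or $\mathbb{K}\not\subseteq\mathbb{R}$) the ``if'' direction is outright false, and no proof strategy can rescue the literal statement.

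Your converse direction has an independent defect: letting $k$ ``sweep'' $\bar{\mathbb{R}}$ cannot build a continuous space, because the parameter set does not act on the space. For each $k$ the solution set $\{\mathbf{x}\in\mathbf{X} : d(\mathbf{x},\mathbf{q})=-k\}$ is a subset of the \emph{fixed} space $\mathbf{X}$; values of $k$ outside the (negated) image of $d(\cdot,\mathbf{q})$ simply contribute empty solution sets, and the union over all $k\in\bar{\mathbb{R}}$ is just $\mathbf{X}$ itself. If $\mathbf{X}$ is a finite discrete space, enlarging $\mathbb{K}$ from a subset of $\mathbb{R}$ to a subset of $\bar{\mathbb{R}}$ adjoins no limit points, removes no isolated points, and leaves the cut-metric neighborhoods of Definition \ref{deffds} untouched, so the claimed connectedness ``in the induced order topology'' never materializes. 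A defensible proof would have to go the other way around: first fix a definition of continuity internal to this framework (say, that the image of $d(\mathbf{X},\mathbf{q})$ is a continuum, or that the $k$-partition is uncountable), then show this is equivalent to whatever nontrivial condition on $\mathbb{K}$ the corollary actually intends --- your proposal gestures at this but substitutes the false order-incompleteness claim for the needed equivalence.
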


The complexity of defining a distance equation on a discrete space can then be seen as equivalent to the complexity of defining an equivalent distance equation on a continuous space whenever the corresponding representations of $\mathbf{X}$ are of the same order.

A distance equation is then said to define a $k$-partition of the real numbers.

\begin{definition}
Two distance equations are isomorphic if they define two $k$-partitions $p,q$ such that $\lvert p\rvert = \lvert q\rvert$.
\end{definition}

\begin{definition}
A composite distance equation is defined using standard set operations on distance equations:
\begin{equation}
\left( \bigcup ,\bigcap  \right)_{{}}^{{}}\delta (\mathbf{q}):\;_{{}}^{{}}d(\mathbf{X},\mathbf{q})+k=0
\end{equation}
\end{definition}

The solution of the equation is the subset of X that satisfies the equation and defines subregions of the distance space induced by the order topology. 

Whenever the set is viewed as a matrix, it is given by:
\begin{equation}
{{\mathbf{X}}_{\mathcal{I}}}=\mathcal{I}\mathbf{X}		
\end{equation}
for some indicator function $\mathcal{I}$. And more specifically, by absorption:
\begin{equation}
{{\mathbf{X}}_{\mathcal{I}}}={{\mathcal{I}}_{\mathbf{i}}}{{\mathbf{X}}_{\mathbf{i}}},\;\;\;\mathbf{i}\subset \mathbb{N}		
\end{equation}

\begin{theorem}
Let $\mathbf{X}$ be a finite discrete space, for any distance equation $\delta $ defined on a computable distance function, $\mathcal{I}$ is well-defined.
\end{theorem}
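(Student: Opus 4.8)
The plan is to show that $\mathcal{I}$ is a total, single-valued map on the index set of $\mathbf{X}$ and that the value it assigns to each index is determinate. Since $\mathbf{X}_\mathcal{I}=\mathcal{I}\mathbf{X}$ is by definition the solution set of $\delta$, establishing well-definedness amounts to proving that membership in the solution set is a decidable predicate on each element of $\mathbf{X}$.

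First I would invoke the corollary following Definition \ref{deffds}: because $\mathbf{X}$ is a finite discrete space, each of its elements admits an arbitrary neighborhood that can be computed exactly in the classical sense. Consequently the cut metric of Equation \ref{cutd} is decidable on $\mathbf{X}$, and so is the associated equality operator $\underset{\mathcal{N}}{=}$. This is the step that isolates $\mathbf{X}$ from the pathological cases of the transcendentals corollary, where the intersection $\mathcal{N}(\mathbf{x}_i)\cap\mathcal{N}(\mathbf{x}_j)$ is undecidable.

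Next, fix the query $\mathbf{q}$ and a value $k\in\mathbb{K}$. By hypothesis $d$ is a computable distance function, so for every $\mathbf{x}_i\in\mathbf{X}$ the quantity $d(\mathbf{x}_i,\mathbf{q})$ is computed exactly, and the defining predicate $d(\mathbf{x}_i,\mathbf{q})+k=0$ reduces to the zero-test under $\underset{\mathcal{N}}{=}$, which the previous step shows to be decidable. Setting $\mathcal{I}_i=1$ when the predicate holds and $\mathcal{I}_i=0$ otherwise therefore fixes a single value at each index, and since $\mathbf{X}$ is finite the map $\mathcal{I}$ is total. The absorption identity $\mathbf{X}_\mathcal{I}=\mathcal{I}_\mathbf{i}\mathbf{X}_\mathbf{i}$ then follows by restricting to the support $\mathbf{i}=\{i:\mathcal{I}_i=1\}\subset\mathbb{N}$. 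For a composite distance equation the same argument applies termwise, and a finite Boolean combination through $\bigcup$ and $\bigcap$ preserves decidability, so $\mathcal{I}$ remains well-defined.

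The step I expect to be the main obstacle is the bridge from \emph{computable distance function} to a decidable zero-test $d(\mathbf{x}_i,\mathbf{q})+k=0$. Exact equality of real values is in general undecidable, as the transcendentals corollary records, so the argument must route the comparison through the cut-metric equality $\underset{\mathcal{N}}{=}$ guaranteed by the finite discrete structure rather than through equality on $\mathbb{R}$ directly. Making that reduction precise, and checking that it is uniform in $k$ across $\mathbb{K}$, is where the care is required.
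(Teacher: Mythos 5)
The paper states this theorem bare, with no proof at all --- it appears between the absorption identity $\mathbf{X}_{\mathcal{I}}=\mathcal{I}_{\mathbf{i}}\mathbf{X}_{\mathbf{i}}$ and the discussion of the complexity classes $C^{\delta}$, and the text moves on immediately. So there is nothing in the paper to compare your argument against line by line; your proposal has to stand on its own, and on the whole it does. Your route --- reduce well-definedness of $\mathcal{I}$ to decidability of the membership predicate $d(\mathbf{x}_i,\mathbf{q})+k=0$ for each of the finitely many $\mathbf{x}_i$, then discharge the zero-test via the corollary after Definition \ref{deffds} (exactly computable neighborhoods, hence a decidable cut metric \ref{cutd} and a decidable $\underset{\mathcal{N}}{=}$) rather than via equality on $\mathbb{R}$ --- is exactly the reading the paper's framework supports, and your explicit contrast with the transcendentals corollary correctly identifies why the finite-discrete hypothesis is doing real work rather than being decoration. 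The termwise treatment of composite equations under finite $\bigcup$ and $\bigcap$ is also fine.

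The one genuine residual gap is the point you flag but do not close: the definition of a distance equation quantifies over all $k\in\mathbb{K}$ for an \emph{arbitrary} $\mathbb{K}\subseteq\mathbb{R}$. Since $\mathbf{X}$ is finite, the family of equations collapses to finitely many tests of the form $-d(\mathbf{x}_i,\mathbf{q})\in\mathbb{K}$, but nothing in the theorem's hypotheses makes membership in $\mathbb{K}$ decidable --- a computable $d$ and a decidable $\underset{\mathcal{N}}{=}$ on $\mathbf{X}$ give you exact values on the left of the test, not a decision procedure for the right. If $\mathbb{K}$ is, say, a non-recursive subset of the rationals, your $\mathcal{I}$ is still single-valued set-theoretically (so \emph{well-defined} in the weakest sense, which may be all the theorem claims) but not computable, and your proof silently slides from the former to the latter. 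To make the argument airtight you should either state that you prove well-definedness only in the set-theoretic sense, or add the hypothesis, consistent with every instance the paper actually uses (e.g., $k\in[\epsilon_1,\epsilon_2]$ in Equation \ref{simh}), that $\mathbb{K}$ is an interval or finite union of intervals with computable endpoints, in which case each membership test is a pair of comparisons and your decidability claim goes through uniformly in $k$.
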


Also, if $\delta $ falls in the complexity class $C$ whenever ${{\mathcal{I}}_{i}}$ can be computed in $O(1)$, then it is customary to denote the resulting complexity class as ${{C}^{\delta }}$. 

By extension, any low-complexity computation of ${{\mathcal{I}}_{i}}$ denotes a complexity class ${{C}^{\delta }}$.

Let ${{\delta }_{0}}(\mathbf{q})$ define a distance equation with a solution equal to an arbitrary element chosen uniformly at random from the subset of $\mathbf{X}$ satisfying the adjoint $\delta $ equation. This is also referred to as the decision version of the search problem. 

The complexity of solving a distance equation $\delta $ is equal to a multiple of the complexity of solving its corresponding ${{\delta }_{0}}$ equation. The multiple is in the range $[0,|\mathbf{X}|]$. 

\subsection{Examples}

The similarity search problem in some finite discrete space $\mathbf{X}$ equipped with the ${{\ell }_{1}}$ metric is usually defined as follows:
 \begin{equation}\label{sim1}
\delta (\mathbf{q}):\;_{{}}^{{}}\|\mathbf{X}-\mathbf{q}{{\|}_{1}}\le\epsilon 
\end{equation}
for some $\epsilon \in {{\mathbb{R}}_{+}}$. The same problem using ${{\ell }_{2}}$ metrics is given as:
\begin{equation}\label{sim2}
\delta (\mathbf{q}):_{{}}^{{}}\;\|\mathbf{X}-\mathbf{q}{{\|}_{2}}\le\epsilon
\end{equation}	
Similarly, the orthogonal range search problem can be defined as follows:
\begin{equation}\label{sim3}
\bigcup\limits_{i}{^{{}}\delta }({{\mathbf{q}}_{i}}):_{{}}^{{}}\;\|\mathbf{X}-{{\mathbf{q}}_{i}}{{\|}_{\infty }}{}_{{}}^{{}}\le _{{}}^{{}}\epsilon 
\end{equation}	
The random convex polytope search problem, also called the point location in arrangement of hyperplanes problem, is done on the discrete space $\mathbf{X}$ defined by the following arrangement:
\begin{equation}\label{simpoly}
\bigcup\limits_{i}{^{{}}\mathcal{A}}:\;_{{}}^{{}}{{\mathbf{A}}_{i}}\mathbf{q}\le{{\mathbf{b}}_{i}}
\end{equation}	
for some non-trivial matrices ${{\mathbf{A}}_{i}}$ with elements $a_{j,k}^{i}$ and $\mathbf{q},\mathbf{b}\in {{\mathbb{R}}^{n}}$.  Equation 2.1.4 can be translated to a distance equation such as: 
\begin{equation}\label{polydim}
\bigcup\limits_{i}{^{{}}\delta }(\mathbf{q}):_{{}}^{{}}\;\dim\mathbf{q}+\sum\limits_{j}{\text{ }}{{\sgn }^{*}}(\mathbf{A}_{j}^{i}{{q}_{j}}-b_{j}^{i})=0
\end{equation}
It can also be noted that the ${{\sgn }^{*}}$ function is here defined on the vertices of the $n$-cube ${{\mathbb{H}}^{n}}$ as:
\begin{equation} 
\sgn^*= \left\{ \begin{array}{rl}
				          1, & x>0  \\
  					   -1, & x\le 0  
				       \end{array}\right. 
\end{equation}
The solution of Equation \ref{polydim} returns the indexes $i$ of the discrete space \textbf{X} that solves the distance equation, or alternately the polytopes defined by $({{\mathbf{A}}_{i}},{{\mathbf{b}}_{i}})$. 

It has been shown to be equivalent to the search formulation of a Jackson-Sheridan-Tseitin transform. 

Let $(\mathbf{Y},d)$ be some finite discrete space such that ${{\mathbf{X}}_{i}},\mathbf{Q}\subseteq \mathbf{Y}$. Then, using the Hausdorff distance, the following distance equation can be defined:
\begin{equation}\label{simh}
\bigcup\limits_{i}{^{{}}\delta }(\mathbf{Q}):_{{}}^{{}}\;\|{{\mathbf{X}}_{i}}-\mathbf{Q}{{\|}_{H}}{{+}^{{}}}k=0
\end{equation}
for $k\in [{{\epsilon }_{1}},{{\epsilon }_{2}}]$. The Hausdorff distance is defined as usual:	
\begin{equation}
\|\mathbf{X}-\mathbf{Y}{{\|}_{H}}=\max \left\{ {{\sup }_{\mathbf{x}\in \mathbf{X}}}{{\inf }_{\mathbf{y}\in \mathbf{Y}}}d(\mathbf{x},\mathbf{y}),\;{{\sup }_{\mathbf{y}\in \mathbf{Y}}}{{\inf }_{\mathbf{x}\in \mathbf{X}}}d(\mathbf{x},\mathbf{y}) \right\}
\end{equation}
The overall complexity of Equations \ref{sim1}, \ref{sim2} and \ref{sim3} can be done, loosely, in at most $O(|\mathbf{X}|\dim\mathbf{X})$, where $|\mathbf{X}|$ denotes the cardinality of the set enclosing the space, following the usual assumptions. 

The overall complexity of Equation \ref{simh} is at most $O(|\mathbf{X}||\mathbf{Q}|\dim\mathbf{X})$. On the other hand, Equation \ref{simpoly} is said to have a search complexity of $O({{n}^{5}}\log N)$ \cite{goodman:handbook, salton:vector}.

\begin{example}
Let the finite discrete space $\textbf{X}\subset\mathbb{R}^3$ be an arbitrary set of $N$ points defining some arbitrary surface using some spherical coordinates $r$, $\theta$, $\phi$. The following radial basis function interpolates the nonlinear transformation $f:\textbf{X}\rightarrow\mathbb{S}^2$ on an orthogonal grid $r\times\theta\times\phi$:
\begin{equation}
f(\textbf{x}) = \sum_{k=1}^N c_k \varphi\left(\left\lvert \frac{1}{\lVert\textbf{x}\rVert} - \frac{1}{\lVert\textbf{x}_{k}\rVert} \right\rvert\right) 
\end{equation} 
where the interpolating coefficients are given by:
\begin{equation}
\textbf{c}=\Phi^{-1}\textbf{X}
\end{equation}
for some radial basis matrix $\Phi(\textbf{x})$ defined as:
\begin{equation}
\begin{bmatrix}
\varphi\left(\left\lvert \frac{1}{\lVert\textbf{x}_1\rVert} - \frac{1}{\lVert\textbf{x}_{1}\rVert} \right\rvert\right)  &  \varphi\left(\left\lvert \frac{1}{\lVert\textbf{x}_1\rVert} - \frac{1}{\lVert\textbf{x}_{2}\rVert} \right\rvert\right)  & \ldots & \varphi\left(\left\lvert \frac{1}{\lVert\textbf{x}_1\rVert} - \frac{1}{\lVert\textbf{x}_{N}\rVert} \right\rvert\right)  \vspace{2 mm} \\ 
\varphi\left(\left\lvert \frac{1}{\lVert\textbf{x}_2\rVert} - \frac{1}{\lVert\textbf{x}_{1}\rVert} \right\rvert\right)  & \varphi\left(\left\lvert \frac{1}{\lVert\textbf{x}_2\rVert} - \frac{1}{\lVert\textbf{x}_{2}\rVert} \right\rvert\right) & \ldots & \varphi\left(\left\lvert \frac{1}{\lVert\textbf{x}_2\rVert} - \frac{1}{\lVert\textbf{x}_{N}\rVert} \right\rvert\right)  \\
\vdots & \vdots & \ddots & \vdots \\
\varphi\left(\left\lvert \frac{1}{\lVert\textbf{x}_N\rVert} - \frac{1}{\lVert\textbf{x}_{1}\rVert} \right\rvert\right)  & \varphi\left(\left\lvert \frac{1}{\lVert\textbf{x}_N\rVert} - \frac{1}{\lVert\textbf{x}_{2}\rVert} \right\rvert\right)  & \ldots & \varphi\left(\left\lvert \frac{1}{\lVert\textbf{x}_N\rVert} - \frac{1}{\lVert\textbf{x}_{N}\rVert} \right\rvert\right)  \\
\end{bmatrix}
\end{equation}
The following distance equation defined on the finite discrete spaces $\textbf{X}\subseteq\textbf{X}'$ then returns the subregion corresponding to the interpolation of $\textbf{X}$:
\begin{equation}
\delta(\textbf{q}=\textbf{0}):\ \big\lVert f(\textbf{X}')-\textbf{q}\big\rVert = 1
\end{equation}
All points that can be interpolated to $\textbf{X}$ using the radial basis function approximation are mapped to the boundary of the unit circle centered at the origin.
\end{example}
\begin{example}
Sibuya's method for generating uniform random points on $\mathbb{S}^{n-1}$, the unit $n$-sphere, is given as follows:
\begin{equation}
f(x_i) = \left\{ \begin{array}{rl}
				        \cos{(2x_i\pi)}(x_i-x_{i-1})^{0.5} , & i\ is\ odd\\
 					\sin{(2x_i\pi)}(x_i-x_{i-1})^{0.5}, & i\ is\ even
				       \end{array}\right.
\end{equation}
where $x_i$ is a uniform random value in $[0,1]$ \cite{sibuya:sphere}. 

The distance distributions of the Sibuya distribution are well-defined and follow the results of the concentration of measure phenomenon. As dimension tends to infinity, the distance distribution tends towards a constant. The distribution then defines a countable subset of $\mathbb{S}^{n-1}$. 

As dimension tends to infinity, an isometric mapping of the$n$-dimensional Sibuya distribution in $(\mathbb{S}^1\cup\textnormal{\textbf{0}}, \ell_p)$, the set of points on the 2-dimensional centered circle equipped with the $\ell_p$ norm can be given as:
\begin{equation}
d(\textbf{x},\textbf{y}) = k\big\lVert \textbf{x}-\textbf{0} \big\rVert_p
\end{equation}
where $\textbf{0}$ denotes the zero vector , $k$ is a given constant and $\textbf{x}$ a vertex in the polygon projection of the simplex mapping of the Sibuya distribution. 

The numbers generated by the distribution can then be seen as a cyclical permuation.
\end{example}
\subsection{Classification, Equivalences and Dilations}
\subsubsection{Notation}

For some distance measure d, a search problem denoted $\delta $ on some finite discrete space $\mathbf{X}$ that can be solved using $d$ is said to be in the class $\mathcal{C}(\delta ,\mathbf{X},d)$. 
Also, let the dilation of an arbitrary mapping $f$ of $(\mathbf{X},d)$ into $(\mathbf{Y},{d}')$ be given on the following relation:
\begin{equation}
\frac{1}{c_1} \max_i{\big\lvert  \textbf{X}_i \big\rvert}  \leq  \max_i{\big\lvert  \textbf{Y}_i \big\rvert} \leq c_2 \max_i{\big\lvert  \textbf{X}_i \big\rvert}
\end{equation}
for some ${{c}_{1}},{{c}_{2}}\ge 1$ and for all ${{\mathbf{q}}_{i}}$ and for all subsets ${{\mathbf{X}}_{i}}\subseteq \mathbf{X}$ and ${{\mathbf{Y}}_{i}}\subseteq \mathbf{Y}$ such that:
\begin{equation}\label{c1}
\delta(\textbf{q}_i):\ d(\textbf{X}_i,\textbf{q}_i) + k = 0
\end{equation}
and:
\begin{equation}\label{c2}
{\delta }'({{\mathbf{q}}_{i}}):{{\;}^{{}}}{d}'({{\mathbf{Y}}_{i}},f({{\mathbf{q}}_{i}}))+{k}'=0
\end{equation}
and:
\begin{equation}\label{c3}
f({{\mathbf{X}}_{i}})\subseteq {{\mathbf{Y}}_{i}}
\end{equation}
If there is a $(1,O(1))$-dilation, then $\mathcal{C}(\delta ,\mathbf{X},d)$ is said to be equivalent to $\mathcal{C}({\delta }',\mathbf{Y},{d}')$. The later is denoted as follows:
\begin{equation}
\mathcal{C}(\delta ,\mathbf{X},d)\Rightarrow \mathcal{C}({\delta }',\mathbf{Y},{d}')
\end{equation}
More specifically, ${{c}_{2}}$ is expected to grow as follows:
\begin{equation}
{{c}_{2}}=1+\frac{\log |\mathbf{X}|}{|\mathbf{X}|}
\end{equation}
$\mathcal{C}(\delta ,\mathbf{X},d)$ is said to be probabilistically equivalent and bounded to $\mathcal{C}({\delta }',\mathbf{Y},{d}')$ whenever dilations can be probabilistically bounded. This is denoted:
\begin{equation}
\mathcal{C}(\delta ,\mathbf{X},d)\underset{k}{\mathop{\to }}\,\mathcal{C}(\delta ,\mathbf{Y},{d}')
\end{equation}
Two finite discrete distance spaces $(\mathbf{X},d)$ and $(\mathbf{Y},{d}')$ such that $f:\mathbf{X}\to \mathbf{Y}$ are isomorphically embeddable if the following holds:
\begin{equation}
\big\lvert  \textbf{Y}_i \big\rvert = \big\lvert  \textbf{X}_i \big\rvert
\end{equation}
for all ${{\mathbf{q}}_{i}}$ and for all subsets ${{\mathbf{X}}_{i}}\subseteq \mathbf{X}$ and ${{\mathbf{Y}}_{i}}\subseteq \mathbf{Y}$ such that Equations \ref{c1}, \ref{c2} and \ref{c3} hold. Given two distance measures $d$ and $d’$ on some finite discrete spaces $\mathbf{X}$ and $\mathbf{Y}$, if $(\mathbf{X},d)$ embeds isomorphically into $(\mathbf{Y},{d}')$, then $\mathcal{C}(\delta ,\mathbf{X},d)$ is said to be strictly equivalent to $\mathcal{C}({\delta }',\mathbf{Y},{d}')$. This is denoted as follows:
\begin{equation}
\mathcal{C}(\delta ,\mathbf{X},d)\Leftrightarrow \mathcal{C}({\delta }',\mathbf{Y},{d}')
\end{equation}		

\subsubsection{Examples}

Let the following equation be the general formulation of similarity search on some finite discrete metric space $(\mathbf{X},d)$ \cite{ciaccia:mtree,faloutsos:gemini,goodman:handbook}:
\begin{equation}\label{simsimple}
\delta (\mathbf{q}){{:}^{{}}}\;d(\mathbf{X},\mathbf{q})\le \epsilon 	
\end{equation}
with $\mathbf{X}\subset {{\mathbb{R}}^{n}}$. Given some $f:\mathbf{X}\to \mathbf{Y}$, it follows that there exists finite discrete spaces $\mathbf{Y}$ for which Equation \ref{simsimple} is equivalent to is \ref{sim1}, \ref{sim2} and \ref{sim3}. The set of finite discrete spaces for which this is the case is such that the following holds:
\begin{equation}
\begin{split}
O(\log |\mathbf{X}|) =\  &{{\max }_{i}}\left| \mathcal{B}_{\epsilon }^{n}\left( f({{\mathbf{q}}_{i}}) \right)\cap {{\mathbf{Y}}^{{}}} \right|\text{ }{{-}^{{}}}\\
&{{\max }_{i}}\left| \mathcal{B}_{\epsilon {{k}_{1}}}^{n}({{\mathbf{q}}_{i}})\cap {{\mathbf{X}}^{{}}} \right|
\end{split}
\end{equation}
for all possible values of $\mathbf{q}\in \mathbf{X}$, some ${{\ell }_{p}}$ ranges $\mathcal{B}_{\epsilon }^{n}$ of dimension $n$ and radius $\epsilon $ with distortion value $k$. Also, the following holds whenever a $({{k}_{1}},{{k}_{2}})$-distortion is defined:
\begin{equation}\label{balls}
\begin{split}
{{\max }_{i}}\left| \mathcal{B}_{\epsilon }^{n}\left( f({{\mathbf{q}}_{i}}) \right)\cap {{\mathbf{Y}}^{{}}} \right|\le\ & {{\max }_{i}}\left| \mathcal{B}_{\epsilon {{k}_{1}}}^{n}({{\mathbf{q}}_{i}})\cap {{\mathbf{X}}^{{}}} \right|-\\
&{{\max }_{i}}\left| \mathcal{B}_{\epsilon /{{k}_{2}}}^{n}({{\mathbf{q}}_{i}})\cap {{\mathbf{X}}^{{}}} \right|\text{ }
\end{split}
\end{equation}

Let $f:\mathbb{R}_{*}^{n}\to {{\mathbb{H}}^{n}}$ be a mapping to the vertices of the n-cube defining the following nearest neighbor problem:
\begin{equation}
f(\mathbf{q})=\{\mathbf{v}\in {{\mathbb{H}}^{n}}:\;{{\min }_{i}}\|{{q}_{i}}-{{v}_{i}}\|\}
\end{equation}	
The mapping $f$ chooses the vertex of ${{\mathbb{H}}^{n}}$ that is closest in terms of component-wise distances. The point location in arrangement of hyperplanes problem described by Equation \ref{polydim} can then be rewritten as follows:
\begin{equation}
\delta (\mathbf{q}){{:}^{{}}}\;\dim\mathbf{q}+\dim\mathbf{q}\cos \left( f(\mathbf{q}),\mathbf{1} \right)=0
\end{equation}
Therefore, Equation \ref{polydim} is strictly yet perhaps trivially equivalent to a linear function of the cosine distance, given some prior nearest-neighbor mapping. Equivalence to ${{\ell }_{p}}$ can be shown if the following holds:
\begin{equation}
\max \|\mathbf{X}-\mathbf{q}{{\|}^{{}}}\le k
\end{equation}	
for some constant $k$. The cosine distance is then strictly equivalent to a bounded ${{\ell }_{p}}$ metric, denoted ${{\bar{\ell }}_{p}}$, which in turn defines a subset of ${{\mathbb{R}}^{n}}$ of finite radius. It is also metrizable by using a convergent sequence that will yield an integral over a domain given by the range of the distance space. The range is then unbounded and cosine distances can be said to be strictly equivalent to ${{\ell }_{p}}$ metrics for finite discrete spaces. One such mapping for finite vector spaces that can serve this purpose is the limit:
\begin{equation}\label{dist2}
\cos(\textbf{x},\textbf{y})=\int_{0}^{\tan^2\nicefrac{\theta}{2}} \frac{\lVert\textbf{x}-\textbf{y}\rVert_p^{1/2}}{1+\lVert\textbf{x}-\textbf{y}\rVert_p}\, d\lVert\textbf{x}-\textbf{y}\rVert_p
\end{equation}
whenever Equation \ref{dist2} generates a computable distance space. As the differential element tends to infinity, the integral yields the range of the cosine. The formulas provided by S. Ramanujan for the computation of $\pi$ can also serve as an example to the metrization of directional distance spaces.

\subsection{Classic Theorems On Discrete Spaces}

The class of distance equations that are equivalent or strictly equivalent are tightly coupled with the properties of their adjoint discrete spaces. Known results can be shown in finite vector spaces from the literature on the subject \cite{goodman:handbook}.

\begin{theorem}\label{tnorm}
In a finite vector space, all norms are equivalent given some mapping $f:\mathbf{X}\to \mathbf{Y}$.
\end{theorem}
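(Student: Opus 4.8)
The plan is to prove the classical equivalence-of-norms result and then recast it in the distortion language already fixed in the paper. First I would reduce the problem: since norm equivalence is symmetric and transitive, it suffices to show that an arbitrary norm $\norm{\cdot}$ on the finite vector space $\mathbf{X}\cong\mathbb{R}^n$ is equivalent to a single fixed reference norm, for which I would take the Euclidean $\ell_2$ norm $\norm{\cdot}_2$. Fixing a basis $\{\mathbf{e}_1,\ldots,\mathbf{e}_n\}$ and writing $\mathbf{x}=\sum_i x_i\mathbf{e}_i$, the one-sided bound $\norm{\mathbf{x}}\le\sum_i\abs{x_i}\,\norm{\mathbf{e}_i}\le C\norm{\mathbf{x}}_2$ with $C=(\sum_i\norm{\mathbf{e}_i}^2)^{1/2}$ follows at once from the triangle inequality, the homogeneity axiom $\norm{k\mathbf{x}}=\abs{k}\norm{\mathbf{x}}$, and Cauchy--Schwarz.

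The second step is the reverse inequality, and this is where compactness enters. From the bound just obtained, the reverse triangle inequality gives $\abs{\norm{\mathbf{x}}-\norm{\mathbf{y}}}\le\norm{\mathbf{x}-\mathbf{y}}\le C\norm{\mathbf{x}-\mathbf{y}}_2$, so $\norm{\cdot}$ is Lipschitz, hence continuous, as a map from $(\mathbb{R}^n,\norm{\cdot}_2)$ into $\mathbb{R}$. The $\ell_2$ unit sphere $S=\{\mathbf{x}:\norm{\mathbf{x}}_2=1\}$ is closed and bounded, hence compact by Heine--Borel, so $\norm{\cdot}$ attains a minimum $m$ on $S$. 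Since the norm vanishes only at the origin and the origin does not lie on $S$, we have $m>0$, which by rescaling yields $\norm{\mathbf{x}}\ge m\norm{\mathbf{x}}_2$ for all $\mathbf{x}$. Combined with the first step this gives $m\norm{\mathbf{x}}_2\le\norm{\mathbf{x}}\le C\norm{\mathbf{x}}_2$, the desired two-sided comparison.

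To close the argument inside the framework of this paper, I would identify the comparison constants with a $(k_1,k_2)$-distortion. Taking $d(\mathbf{x},\mathbf{y})=\norm{\mathbf{x}-\mathbf{y}}_2$ on $\mathbf{X}$, $d'(\mathbf{x},\mathbf{y})=\norm{\mathbf{x}-\mathbf{y}}$ on $\mathbf{Y}$, and $f$ the coordinate identification induced by the chosen basis, the inequality $m\norm{\mathbf{x}-\mathbf{y}}_2\le\norm{\mathbf{x}-\mathbf{y}}\le C\norm{\mathbf{x}-\mathbf{y}}_2$ is precisely the distortion relation with $k_1=1/m$ and $k_2=C$, after normalizing the scale so that both constants are at least one. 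Transitivity of the distortion relation then promotes each pairwise comparison against $\ell_2$ into a comparison between any two norms, which is exactly the asserted mutual equivalence realized by the mapping $f:\mathbf{X}\to\mathbf{Y}$.

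The hard part will be the compactness step: everything else is mechanical manipulation of the norm axioms, but the strict positivity $m>0$ of the minimum on the sphere rests essentially on finite dimensionality through Heine--Borel, and this is the only place where the hypothesis of a \emph{finite} vector space is genuinely used. A secondary subtlety is that the paper's distortion definition requires $k_1,k_2\ge 1$, so I would need to absorb constant factors by rescaling the reference norm, noting that such rescaling leaves the equivalence relation itself unchanged, in order to present the bounds in the required normalized form.
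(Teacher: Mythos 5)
Your argument is the standard and correct compactness proof, but note that the paper itself contains no proof of this theorem at all: it appears in the subsection ``Classic Theorems On Discrete Spaces'' as an imported known result, with a citation to the Goodman--O'Rourke handbook, and the only thing the paper adds is the notational gloss $\mathcal{C}(\delta,\mathbf{X},\lVert\cdot\rVert_i)\Leftrightarrow\mathcal{C}(\delta,\mathbf{Y},\lVert\cdot\rVert_j)$. So your proposal supplies exactly what the paper outsources: the one-sided bound via triangle inequality, homogeneity and Cauchy--Schwarz; Lipschitz continuity of the arbitrary norm; Heine--Borel on the $\ell_2$ unit sphere; strict positivity of the minimum; and the translation of the resulting two-sided comparison $m\lVert\mathbf{x}\rVert_2\le\lVert\mathbf{x}\rVert\le C\lVert\mathbf{x}\rVert_2$ into the paper's $(k_1,k_2)$-distortion language, with the $k_1,k_2\ge 1$ normalization handled by rescaling. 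You also correctly isolate finite dimensionality as the single place the hypothesis is genuinely used, which the paper never makes explicit.

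Two caveats are worth flagging. First, your step $m>0$ invokes ``the norm vanishes only at the origin,'' i.e.\ the classical definiteness axiom; the paper's own definition of a norm lists only homogeneity, the triangle inequality, and $\lVert\mathbf{0}\rVert=\mathbb{A}_0$, omitting definiteness, and under that weaker definition the statement fails (the seminorm $\lVert(x,y)\rVert=\lvert x\rvert$ on $\mathbb{R}^2$ admits no lower bound against $\ell_2$). You should therefore state definiteness as an explicit hypothesis, or note that you are proving the classical theorem intended by the citation rather than one literal to the paper's axioms. Second, the paper records the theorem with $\Leftrightarrow$, which it earlier reserves for \emph{strict} equivalence via isomorphic embedding, meaning equal cardinalities $\lvert\mathbf{Y}_i\rvert=\lvert\mathbf{X}_i\rvert$ of the solution sets of the paired distance equations; that is formally stronger than the two-sided distortion you establish, since a $(k_1,k_2)$-distortion can change which points fall inside a fixed radius $\epsilon$. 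Your argument proves the theorem as quoted, but not, by itself, the follow-on strict-equivalence notation.
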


Theorem \ref{tnorm} is denoted:
\begin{equation}
\mathcal{C}(\delta ,\mathbf{X},\|\cdot {{\|}_{i}})\Leftrightarrow \mathcal{C}(\delta ,\mathbf{Y},\|\cdot {{\|}_{j}})
\end{equation}

\begin{corollary}
Given any distance measures $d$ and d', there exist finite discrete spaces $\mathbf{X}$ and $\mathbf{Y}$ such that: 
\begin{equation}
\mathcal{C}(\delta ,\mathbf{X},d)\Leftrightarrow \mathcal{C}(\delta ,\mathbf{Y},{d}')
\end{equation}
\end{corollary}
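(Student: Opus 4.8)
The plan is to factor the desired strict equivalence through a normed space and then invoke Theorem \ref{tnorm}. Since $\Leftrightarrow$ is defined by isomorphic embeddability of one finite discrete distance space into another, and such embeddings compose — a bijection preserving solution-set cardinalities followed by another again preserves them — it suffices to show that for any distance measure $d$ there is a finite discrete space $(\mathbf{X},d)$ that embeds isomorphically into a finite vector space equipped with a norm, and likewise for $d'$; chaining the two embeddings through Theorem \ref{tnorm} then yields the claim.

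First I would exploit the earlier corollary that any distance equation on a finite discrete space defines a set of open or closed intervals. This reduces the embeddability condition to the order topology: the solution subset $\mathbf{X}_i$ of $\delta(\mathbf{q}_i):\ d(\mathbf{X}_i,\mathbf{q}_i)+k=0$ is exactly the set of points whose distance to $\mathbf{q}_i$ falls in the interval determined by $k$, so $\lvert\mathbf{X}_i\rvert$ depends only on where the threshold cuts the sorted list of distances. Consequently, two distance spaces are isomorphically embeddable as soon as their thresholded distance profiles share a common order type, the witnessing map $f$ being the rank-preserving bijection, which automatically satisfies $f(\mathbf{X}_i)\subseteq\mathbf{Y}_i$.

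Next, given an arbitrary $d$, I would realize $(\mathbf{X},d)$ as a finite point configuration whose distance matrix $\mathbf{D}\in\mathbb{R}_{+}^{N\times N}$ has a prescribed order type, and then select points in a normed finite vector space reproducing that same order type. The metrization construction of Equation \ref{dist2}, together with the principle that a distance space is uniquely determined by $\Delta$, guarantees such a normed realization once the order of distances is fixed. Applying the same reduction to $d'$, and then Theorem \ref{tnorm}, which renders all norms strictly equivalent, composes the three isomorphic embeddings into the required $\mathcal{C}(\delta,\mathbf{X},d)\Leftrightarrow\mathcal{C}(\delta,\mathbf{Y},d')$.

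The main obstacle is the realizability step: for the fixed functions $d$ and $d'$, which by Definition \ref{distance} need only be symmetric and vanish on the diagonal, one must guarantee that a configuration achieving the desired order type genuinely exists in the normed target rather than merely asserting it from the existential freedom in choosing $\mathbf{X}$ and $\mathbf{Y}$. For small $N$ this is immediate, and since any particular $\delta$ constrains only finitely many solution sets, I expect to control $N$ and invoke the compact-order construction of Equation \ref{compact}, where the distance sequence is forced to the incremental form $(a_0,a_0+\delta,\dots,a_0+n\delta)$; any two such configurations are then related by a reshaping permutation and are therefore order-equivalent. Making this realizability fully rigorous for arbitrary $d$ is the delicate point of the argument.
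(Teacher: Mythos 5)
There is a genuine gap, and you name it yourself: the realizability step. Your whole chain — reduce isomorphic embeddability to matching order types of thresholded distance profiles, realize that order type in a normed finite vector space, then pass between norms via Theorem \ref{tnorm} — stands or falls on producing, for an \emph{arbitrary} fixed $d$ satisfying only Definition \ref{distance} (symmetry and vanishing diagonal, no triangle inequality), a configuration in a normed space with the same distance-order type \emph{and the same tie pattern}; equal distances must map to equal distances or the rank-preserving bijection fails to match solution-set cardinalities exactly, and $\lvert\mathbf{Y}_i\rvert=\lvert\mathbf{X}_i\rvert$ is an exact condition, not an ordinal one. The two devices you invoke to close this do not bear the weight: Equation \ref{compact} concerns orderable spaces with unit increments and gives you no control over the values $d$ actually takes on whatever points you select, and Equation \ref{dist2} is a metrization of the cosine distance against $\ell_p$, not a general order-type realization mechanism. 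So the central step of your plan remains unproven, and the auxiliary machinery (Theorem \ref{tnorm}, composition of embeddings) is sound but idle until it is.

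The irony is that the step you deliberately refuse — ``merely asserting it from the existential freedom in choosing $\mathbf{X}$ and $\mathbf{Y}$'' — is precisely the paper's proof. The statement quantifies existentially over the spaces, so the paper simply takes discrete spaces of cardinality two: there each distance measure is determined by a single off-diagonal value, every solution set of a distance equation has cardinality $0$, $1$ or $2$, and simple linear manipulations of $d$ and $d'$ align the two spaces so that the cardinality condition for $\Leftrightarrow$ holds trivially. Your program, if the realizability lemma were established (e.g.\ via an additive shift making the dissimilarity metric, followed by an order- and tie-preserving embedding into a suitable normed space), would prove something considerably stronger — strict equivalence witnessed by spaces of prescribed large cardinality — but as written it proves less than the one-line argument, because its key step is missing while the corollary itself needs only the two-point witness.
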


\begin{proof}
This can be trivially shown by simple linear manipulations of $d$ and $d'$ on discrete spaces of cardinality two.
\end{proof}

\begin{theorem} \cite{goodman:handbook}
Let the cut metric be defined on some subset $\mathbf{X}\subseteq \mathbf{Y}$ as follows:
\begin{equation}
d(\textbf{X},\textbf{q}) =  
 \left\{ \begin{array}{rl}
				         1, & f(\textbf{X},\textbf{q})=k\\
 					0, & otherwise
				       \end{array}\right.
\end{equation}
Then, some distance metric d is such that:
\begin{equation}
\mathcal{C}(\delta ,\mathbf{Y},d)\Leftrightarrow \mathcal{C}(\delta ,\mathbf{Y},\|\cdot {{\|}_{1}})	
\end{equation}
if and only if d can be expressed as:
\begin{equation}
d\equiv \sum\limits_{\mathbf{X}\subseteq \mathbf{Y}}{{{c}_{i}}}d(\mathbf{X},\mathbf{q})
\end{equation}
for some nonnegative and non-zero coefficients ${{c}_{i}}$.
\end{theorem}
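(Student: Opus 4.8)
The plan is to recognize the statement as the cut-cone characterization of $\ell_1$ embeddability: the distance measures strictly equivalent to $\|\cdot\|_1$ are precisely those lying in the cone generated by the cut metrics $d(\mathbf{X}, \mathbf{q})$, $\mathbf{X} \subseteq \mathbf{Y}$. First I would unpack the notion of strict equivalence $\Leftrightarrow$ from the classification subsection, which asserts that $(\mathbf{Y}, d)$ embeds isomorphically into $(\mathbf{Y}, \|\cdot\|_1)$ with $|\mathbf{Y}_i| = |\mathbf{X}_i|$ holding for the solution set of every distance equation $\delta$ at every admissible radius $k$. On a finite discrete space, requiring the cardinalities of all these solution sets to agree simultaneously pins down the induced order topology, and the canonical $\{0,1\}$ scale of the cut metric fixes the remaining additive freedom; the goal therefore reduces to showing that $d$ is isometrically $\ell_1$-embeddable if and only if $d \equiv \sum_{\mathbf{X} \subseteq \mathbf{Y}} c_i\, d(\mathbf{X}, \mathbf{q})$ with each $c_i \geq 0$ and not all zero.

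For the converse direction I would exhibit the embedding explicitly. Enumerating the subsets $\mathbf{X}_1, \ldots, \mathbf{X}_m \subseteq \mathbf{Y}$ carrying $c_i > 0$, send each point $\mathbf{q} \in \mathbf{Y}$ to the vector $\phi(\mathbf{q}) = (c_i\, \chi_{\mathbf{X}_i}(\mathbf{q}))_{i=1}^m$, where $\chi_{\mathbf{X}_i}$ is the membership indicator of $\mathbf{X}_i$. Then $\|\phi(\mathbf{q}_a) - \phi(\mathbf{q}_b)\|_1 = \sum_i c_i\, |\chi_{\mathbf{X}_i}(\mathbf{q}_a) - \chi_{\mathbf{X}_i}(\mathbf{q}_b)|$, and since $|\chi_{\mathbf{X}_i}(\mathbf{q}_a) - \chi_{\mathbf{X}_i}(\mathbf{q}_b)|$ equals $1$ exactly when one of the two points lies in $\mathbf{X}_i$, this is precisely $\sum_i c_i\, d(\mathbf{X}_i, \mathbf{q})$ read on the pair, i.e.\ $d$. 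Each single cut metric is thus realized as one $\{0,1\}$-valued $\ell_1$ coordinate, so the conic combination is a genuine $\ell_1$ distance and $\phi$ is the required isomorphic embedding, giving $\mathcal{C}(\delta, \mathbf{Y}, d) \Leftrightarrow \mathcal{C}(\delta, \mathbf{Y}, \|\cdot\|_1)$.

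For the forward direction I would begin from an isometric embedding $\phi : \mathbf{Y} \to (\mathbb{R}^n, \|\cdot\|_1)$ supplied by strict equivalence and decompose it one coordinate at a time. Fixing a coordinate $k$ and sorting the finitely many values $t_1 \leq t_2 \leq \cdots \leq t_r$ taken by $\phi_k$ on $\mathbf{Y}$, the scalar contribution $|\phi_k(\mathbf{q}_a) - \phi_k(\mathbf{q}_b)|$ telescopes as $\sum_\ell (t_{\ell+1} - t_\ell)\, d(\mathbf{X}_{k,\ell}, \mathbf{q})$, where $\mathbf{X}_{k,\ell} = \{\mathbf{q} \in \mathbf{Y} : \phi_k(\mathbf{q}) \geq t_{\ell+1}\}$ is the upper level-set cut and the gaps $t_{\ell+1} - t_\ell \geq 0$ are the coefficients. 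Summing over the $n$ coordinates and collecting repeated cuts yields $d \equiv \sum_{\mathbf{X} \subseteq \mathbf{Y}} c_i\, d(\mathbf{X}, \mathbf{q})$ with every $c_i \geq 0$, and nontriviality of $d$ forces at least one $c_i$ to be strictly positive, matching the non-zero requirement.

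The hard part will be the opening reduction rather than either decomposition: I must justify that the paper's combinatorial notion of strict equivalence — equality of the solution-set cardinalities $|\mathbf{Y}_i| = |\mathbf{X}_i|$ across the whole family of distance equations — actually delivers an honest isometry into $\ell_1$, which is what the threshold-slicing argument consumes. A priori cardinality preservation is only a statement about the induced poset and order topology, so I would argue that on a finite discrete space the family $\{\, d(\mathbf{X}, \mathbf{q}) + k = 0 : k \,\}$ separates all pairwise distances, and that the cut metric's intrinsic $\{0,1\}$ calibration eliminates the residual affine degree of freedom, thereby upgrading order-equivalence to metric equivalence. A secondary point to guard is that the decomposition lands in the cut cone with genuinely nonnegative coefficients rather than merely in its linear span; this is exactly what the monotone ordering $t_\ell \leq t_{\ell+1}$ guarantees, so I would make that monotonicity the load-bearing step of the forward argument.
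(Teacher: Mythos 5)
You should first note what you are being compared against: the paper contains \emph{no proof} of this theorem. It is stated with a citation to \cite{goodman:handbook} and imported as known background --- it is the classical cut-cone characterization of $\ell_1$-embeddable metrics on a finite set. Your blind reconstruction is therefore the proof from the cited literature rather than a divergence from anything in the paper, and on that classical reading both of your decompositions are correct and standard: the map $\phi(\mathbf{q})=\left(c_i\,\chi_{\mathbf{X}_i}(\mathbf{q})\right)_{i=1}^{m}$ realizes each cut as a single $\{0,1\}$-valued coordinate scaled by $c_i$ (the indicator difference is $1$ exactly when \emph{exactly one} of the two points lies in $\mathbf{X}_i$, which is the only sensible reading of the paper's underspecified $f(\mathbf{X},\mathbf{q})=k$), and the level-set telescoping $\lvert\phi_k(\mathbf{q}_a)-\phi_k(\mathbf{q}_b)\rvert=\sum_{\ell}(t_{\ell+1}-t_\ell)\,\lvert\chi_{\mathbf{X}_{k,\ell}}(\mathbf{q}_a)-\chi_{\mathbf{X}_{k,\ell}}(\mathbf{q}_b)\rvert$ gives the converse, with your correct emphasis that the monotone ordering $t_\ell\le t_{\ell+1}$ is what lands the coefficients in the cone rather than merely its linear span.

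The genuine gap is exactly the bridging step you flagged, and your proposed repair does not close it. The paper's $\Leftrightarrow$ demands only $\lvert\mathbf{Y}_i\rvert=\lvert\mathbf{X}_i\rvert$ across corresponding equations \ref{c1}--\ref{c3}, with the radii $k$ and $k'$ left independent. Consequently, for any strictly increasing $g$ with $g(0)=0$, replacing $d$ by $g\circ d$ leaves every solution set literally unchanged (match the radius $-k$ with $-k'=g(-k)$), so $(\mathbf{Y},d)$ and $(\mathbf{Y},g\circ d)$ are strictly equivalent to precisely the same spaces --- yet $g\circ d$ can exit the cut cone while $d$ lies inside it. Concretely, on five points with the $K_{2,3}$ pair pattern, the two-valued metric with cross distance $u=1$ and within-side distance $v=\frac{7}{5}$ satisfies all triangle and pentagonal inequalities (the binding one being $4v\le 6u$), hence is $\ell_1$-embeddable; monotonically relabeling $\frac{7}{5}\mapsto 2$ yields the path metric of $K_{2,3}$, which violates $4v\le 6u$ and is not in the cut cone, while no solution-set cardinality has changed. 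So cardinality preservation pins down only the order topology, and the $\{0,1\}$ calibration of a single cut metric eliminates a global scale, not this nonlinear relabeling freedom: the biconditional is simply not derivable from the paper's own definition of $\Leftrightarrow$. Your proof is complete and correct only under the stronger, honest reading of $\Leftrightarrow$ as isometric $\ell_1$-embeddability --- which is evidently what the cited source means, and you would do well to state that substitution explicitly at the outset rather than attempting the upgrade argument.
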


\begin{theorem}
Given an arbitrary distance measure $d$, a metric d' and some mapping $f:\mathbf{X}\to \mathbf{Y}$, the following holds for some finite vector spaces $\mathbf{X}\subset {{\mathbb{R}}^{n}}$ and $\mathbf{Y}\subset {{\mathbb{R}}^{m}}$ with $m=(n,3)$:  
\begin{equation}
\mathcal{C}(\delta ,\mathbf{X},d)\Leftrightarrow \mathcal{C}(\delta ,\mathbf{Y},{d}')	
\end{equation}
 \end{theorem}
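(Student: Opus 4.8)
\section*{Proof proposal for the final theorem}

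The plan is to unwind the definition of strict equivalence and reduce the statement to the existence of a distance-preserving embedding between two conveniently chosen finite vector spaces. Recall that $\mathcal{C}(\delta,\mathbf{X},d)\Leftrightarrow\mathcal{C}(\delta,\mathbf{Y},d')$ holds precisely when $(\mathbf{X},d)$ embeds isomorphically into $(\mathbf{Y},d')$, i.e. when there is a mapping $f$ with $|\mathbf{Y}_i|=|\mathbf{X}_i|$ for every solution set of the adjoint equations \ref{c1}, \ref{c2} and \ref{c3}. Since an isomorphic mapping is exactly a $(1,1)$-distortion, and since $d'$ is a metric, this amounts to producing an isometric $f$ with $d'(f(\mathbf{x}_i),f(\mathbf{x}_j))=d(\mathbf{x}_i,\mathbf{x}_j)$. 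The first step is therefore to observe that once such an $f$ is in hand, taking $\mathbf{q}_i'=f(\mathbf{q}_i)$ and $k'=k$ sends the solution set of $\delta(\mathbf{q}_i)$ bijectively onto that of $\delta'(\mathbf{q}_i')$, so the cardinality condition is automatic and $\Leftrightarrow$ follows. The whole burden thus collapses onto exhibiting the two vector spaces and the isometry.

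For the construction I would follow the cardinality-two reduction used in the preceding corollary, but now forcing both spaces to be Euclidean and respecting that $d'$ is a genuine metric. Using the theorem that a distance space is uniquely determined by $\Delta$, I would realize $d$ on a small set $\mathbf{X}\subset\mathbb{R}^n$ and then match its finitely many pairwise values against a configuration $\mathbf{Y}\subset\mathbb{R}^m$ measured by $d'$. Two sub-cases give the two admissible dimensions recorded by $m=(n,3)$. When $m=n$ the embedding keeps the dimension fixed and Theorem \ref{tnorm} supplies the equivalence, since in a finite vector space any metric norm $d'$ can be matched to the prescribed distance data after an invertible linear adjustment. When $m=3$ I would instead use a universal low-dimensional target, placing a set with the prescribed mutual $d'$-distances into $\mathbb{R}^3$, which always suffices for the small configurations produced by the reduction.

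The main obstacle is the tension between $d$ being \emph{arbitrary} --- hence possibly violating the triangle inequality --- and $d'$ being required to be a metric, while the embedding must be distance-preserving. The resolution is that the spaces $\mathbf{X},\mathbf{Y}$ are ours to choose, so I would keep their cardinality small enough that the finitely many values taken by $d$ are automatically realizable under a metric $d'$: on two points any nonnegative symmetric value is trivially metric, and on three points the only constraint is a single triangle inequality, which the placement in $\mathbb{R}^3$ can be made to satisfy by scaling. The delicate bookkeeping is to carry out the distance matching and the dimension bound $m=(n,3)$ simultaneously, and to verify that the induced $k$-partitions have equal sizes so that $|\mathbf{Y}_i|=|\mathbf{X}_i|$ holds for every query $\mathbf{q}_i$. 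Once that is checked, strict equivalence is immediate from the reduction established in the first step.
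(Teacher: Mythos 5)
The paper itself states this theorem without any proof (it sits in the ``Classic Theorems On Discrete Spaces'' subsection as an imported result), so there is no authorial argument to compare routes with; your proposal has to stand on its own, and as written it has two genuine gaps. First, you strengthen the definition of strict equivalence into isometry: the paper defines isomorphic embeddability of $(\mathbf{X},d)$ into $(\mathbf{Y},d')$ purely combinatorially, by $\lvert \mathbf{Y}_i\rvert=\lvert \mathbf{X}_i\rvert$ for all solution sets of the adjoint equations \ref{c1}, \ref{c2}, \ref{c3} --- i.e.\ equal cardinalities of the induced $k$-partitions --- not by $d'(f(\mathbf{x}_i),f(\mathbf{x}_j))=d(\mathbf{x}_i,\mathbf{x}_j)$. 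This matters because your strengthened version is unachievable in general: an arbitrary distance measure $d$ can violate the triangle inequality on three points (say $d(\mathbf{x}_1,\mathbf{x}_2)=d(\mathbf{x}_2,\mathbf{x}_3)=1$, $d(\mathbf{x}_1,\mathbf{x}_3)=10$), and no metric $d'$ can reproduce those values pointwise, no matter where $\mathbf{Y}$ sits or how you scale. Your proposed escape --- shrink $\mathbf{X}$ to two or three points --- avoids this only by trivializing the theorem: under that reading the statement collapses into the preceding corollary (already proved by the cardinality-two manipulation), the dimension clause $m=(n,3)$ carries no content, and the hypothesis that $d'$ is a \emph{metric} does no work. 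The evident intent is that $\mathbf{X}\subset\mathbb{R}^n$ with its arbitrary $d$ is given, and one must produce a genuinely metric $(\mathbf{Y},d')$ in dimension $n$ or $3$ strictly equivalent to it; your construction never addresses that case. Second, the appeal to Theorem \ref{tnorm} is a non sequitur: that theorem concerns equivalence of norms on a finite vector space, and says nothing about matching an arbitrary, possibly triangle-violating distance measure with a metric ``after an invertible linear adjustment.''

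The missing idea is a monotone remetrization rather than an isometry. Because strict equivalence only requires the $k$-partition cardinalities to match, any strictly order-preserving transformation of the distance values preserves it; so one can define, e.g.,
\begin{equation}
d'(\mathbf{y}_i,\mathbf{y}_j)=1+\frac{d(\mathbf{x}_i,\mathbf{x}_j)}{1+d(\mathbf{x}_i,\mathbf{x}_j)},\qquad \mathbf{y}_i=f(\mathbf{x}_i),\ i\ne j,
\end{equation}
whose off-diagonal values lie in $[1,2)$ and therefore satisfy the triangle inequality automatically, while every sublevel set $\{(i,j): d\le k\}$ is carried bijectively to $\{(i,j): d'\le k'\}$ for the corresponding $k'$. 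This yields a metric space of the same cardinality, strictly equivalent in the paper's sense, on a point set that can be realized in $\mathbb{R}^n$ (keeping the ambient dimension) or in $\mathbb{R}^3$ (realizing the metric as a path metric on a weighted graph, every finite graph being embeddable in $\mathbb{R}^3$), which is the natural reading of $m=(n,3)$. Your first paragraph's bookkeeping step (queries $\mathbf{q}_i'=f(\mathbf{q}_i)$, matched thresholds $k'$) is sound and would complete this corrected argument, but the isometry requirement and the cardinality-two reduction as you state them would not.
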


\begin{corollary}
Given an arbitrary distance measure $d$ and a metric d' some finite vector spaces $\mathbf{X},\mathbf{Y}\subset {{\mathbb{R}}^{n}}$ and some mapping $f:\mathbf{X}\to \mathbf{Y}$ the following holds:
\begin{equation}
\mathcal{C}(\delta ,\mathbf{X},d)\Leftrightarrow \mathcal{C}({\delta }',\mathbf{Y},{d}')
\end{equation}
 \end{corollary}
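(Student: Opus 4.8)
The plan is to obtain the corollary as a direct specialization of the preceding theorem, with the only genuine work being the reconciliation of ambient dimensions. Recall that strict equivalence $\mathcal{C}(\delta,\mathbf{X},d)\Leftrightarrow\mathcal{C}(\delta',\mathbf{Y},d')$ was defined to mean that $(\mathbf{X},d)$ embeds isomorphically into $(\mathbf{Y},d')$: there is a $(1,1)$-distortion mapping $f$ under which $\lvert\mathbf{Y}_i\rvert=\lvert\mathbf{X}_i\rvert$ holds for every query-induced solution subset satisfying Equations \ref{c1}, \ref{c2} and \ref{c3}. The relation therefore constrains only the cardinalities of those solution subsets, and is insensitive to the dimension of the enclosing vector space beyond what is required to realize the relevant pairwise distances.

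First I would invoke the theorem, which for an arbitrary $d$ and a metric $d'$ produces finite vector spaces $\mathbf{X}\subset\mathbb{R}^n$ and $\mathbf{Y}\subset\mathbb{R}^m$ with $m\in\{n,3\}$ together with a mapping $f:\mathbf{X}\to\mathbf{Y}$ realizing strict equivalence. If the realization already uses $m=n$ there is nothing to do. The only case needing attention is $m=3$ with $n\neq 3$, which I would dispatch by the canonical coordinate inclusion $\iota:\mathbb{R}^3\hookrightarrow\mathbb{R}^n$ (padding with zeros) when $n\geq 3$. For any coordinatewise metric $d'$, in particular any $\ell_p$ metric as is appropriate for finite vector spaces, $\iota$ is an isometry; hence every pairwise distance in $\mathbf{Y}$, every solution subset $\mathbf{Y}_i$, and every cardinality $\lvert\mathbf{Y}_i\rvert$ is preserved. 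Then $\iota\circ f$ is a $(1,1)$-distortion of $(\mathbf{X},d)$ into $(\iota(\mathbf{Y}),d')\subset\mathbb{R}^n$ and strict equivalence is inherited, placing both spaces in $\mathbb{R}^n$ as required. The residual case $n<3$ I would reduce, exactly as in the earlier corollary, to discrete spaces of cardinality two on which the matching of solution-set cardinalities is arranged by elementary linear manipulation of $d$ and $d'$.

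The main obstacle is this dimension reconciliation: one must verify that forcing both spaces into the common ambient $\mathbb{R}^n$ does not destroy the cardinality matching guaranteed by the theorem. This reduces entirely to checking that the inclusion $\iota$ preserves $d'$, which is immediate for coordinatewise metrics and, for a general metric restricted to a fixed finite point set, holds because $\iota$ re-embeds that configuration without altering its mutual distances. Once the isometry of $\iota$ is in hand, invariance of each $\mathcal{B}_\epsilon^n$-type solution subset and of its cardinality is automatic, and the specialization to $\mathbf{X},\mathbf{Y}\subset\mathbb{R}^n$ is complete.
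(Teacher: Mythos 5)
The paper supplies no written proof here: the corollary is stated bare, immediately after the $m=(n,3)$ theorem, in a subsection whose results are attributed wholesale to the literature (\cite{goodman:handbook}); the only nearby proof is the one-line cardinality-two argument for the earlier corollary. So your proposal can only be measured against the evidently intended route, and your outline --- invoke the theorem, then reconcile the ambient dimension --- is surely that route. Your $m=n$ case is trivially fine, and your fallback for $n<3$ via discrete spaces of cardinality two, with solution-set cardinalities matched by elementary manipulation, mirrors exactly the paper's accepted proof style for the neighboring corollary.

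The contestable step is the $m=3$, $n>3$ branch. Your claim that the zero-padding $\iota:\mathbb{R}^3\hookrightarrow\mathbb{R}^n$ preserves $d'$ is correct for coordinatewise, norm-induced metrics such as $\ell_p$, but your parenthetical for a general metric --- that $\iota$ ``re-embeds that configuration without altering its mutual distances'' --- is circular: for an arbitrary given metric $d'$ on $\mathbb{R}^n$, the values $d'(\iota(\mathbf{y}_i),\iota(\mathbf{y}_j))$ need bear no relation to the distances used by the theorem's $\mathbb{R}^3$ realization, so the solution subsets of Equation \ref{c2}, and hence their cardinalities, can change under padding, which is precisely what strict equivalence forbids. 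The repair sits inside the statement you are proving and you did not use it: unlike the theorem, the corollary replaces $\delta$ by $\delta'$ on the image side, and strict equivalence constrains only the cardinalities $\lvert\mathbf{Y}_i\rvert=\lvert\mathbf{X}_i\rvert$ of the query-induced solution subsets, never the distance values themselves. So after padding you may re-choose $\delta'$ --- that is, the constants $k'$ in Equation \ref{c2} --- adapted to the restricted metric on $\iota(\mathbf{Y})$ so that the solution subsets of Equations \ref{c2} and \ref{c3} coincide as sets with those of the $\mathbb{R}^3$ realization; cardinality matching is then automatic and no isometry claim about $\iota$ is needed at all. With that substitution your argument closes; as written, the general-metric branch of the padding step is a genuine gap, and it is also the one place where you silently treat the corollary's $\delta'$ as if it were the theorem's $\delta$.
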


\begin{theorem}\label{jl}
Any finite metric space $(\mathbf{X},d)$ is probabilistically equivalent and bounded to $(\mathbf{Y},\|\cdot {{\|}_{p}})$. This is denoted:
\begin{equation}
\mathcal{C}(\delta ,\mathbf{X},d)\underset{k}{\mathop{\to }}\,\mathcal{C}(\delta ,\mathbf{Y},\|\cdot {{\|}_{p}})
\end{equation} 	
for some $\mathbf{X}\subset {{\mathbb{R}}^{n}}$ and $\mathbf{Y}\subset {{\mathbb{R}}^{O(\log |\mathbf{X}|)}}$. 
\end{theorem}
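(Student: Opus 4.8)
The plan is to read Theorem~\ref{jl} as a Johnson--Lindenstrauss statement and to realise the probabilistic equivalence through a random linear embedding whose dilation is controlled by the concentration of measure phenomenon already invoked for the Sibuya construction. Write $n=\lvert\mathbf{X}\rvert$. Since $(\mathbf{X},d)$ is an \emph{arbitrary} finite metric space it need not embed isometrically into any $\ell_p$, so I would first pass through a Fr\'echet-type embedding $g:\mathbf{X}\to\ell_p^{m}$ whose coordinates record distances to randomly chosen subsets of $\mathbf{X}$. This furnishes a map into a normed space with dilation at most $O(\log n)$, a factor that is absorbed into the parameter $k$ of the arrow $\underset{k}{\mathop{\to}}$; its sole purpose is to make a linear projection meaningful. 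For the Euclidean special case this first step is the identity.

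Next I would compose $g$ with a random linear map $\Phi:\mathbb{R}^{m}\to\mathbb{R}^{K}$, where $K=O(\log n)$ and $\Phi$ is a suitably scaled Gaussian (or random sign) matrix. The heart of the argument is a single-pair estimate: for a fixed difference vector $\mathbf{u}=g(\mathbf{x}_i)-g(\mathbf{x}_j)$ the length $\lVert\Phi\mathbf{u}\rVert_p$ concentrates sharply about $\lVert\mathbf{u}\rVert_p$, the probability of a deviation exceeding a factor $1\pm\epsilon$ decaying like $\exp(-cK\epsilon^2)$. This is precisely the flattening of distance distributions with growing dimension noted earlier, and it supplies a $(1\pm\epsilon)$ dilation for each individual pair.

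I would then take a union bound over the $\binom{n}{2}$ pairs of points. Choosing $K=C\epsilon^{-2}\log n$ with $C$ large enough drives the total failure probability below $1$, so with positive probability every pairwise distance is simultaneously preserved to within $1\pm\epsilon$; matching the resulting upper dilation against the prescribed growth $c_2=1+\tfrac{\log n}{n}$ pins down the admissible $\epsilon$ and hence the target dimension $K=O(\log\lvert\mathbf{X}\rvert)$. Setting $f=\Phi\circ g$ and $\mathbf{Y}=f(\mathbf{X})\subset\mathbb{R}^{O(\log\lvert\mathbf{X}\rvert)}$ then yields a probabilistically bounded dilation, which is exactly the asserted relation $\mathcal{C}(\delta,\mathbf{X},d)\underset{k}{\mathop{\to}}\mathcal{C}(\delta,\mathbf{Y},\lVert\cdot\rVert_p)$.

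The step I expect to be the main obstacle is reconciling the two distortion regimes honestly. The projection step is genuinely near-isometric, but the embedding of a \emph{general} metric space into a normed space cannot be, so the overall guarantee is only ``bounded'' in the sense of the $\underset{k}{\mathop{\to}}$ arrow rather than asymptotically isometric. The bookkeeping must keep the $O(\log n)$ Fr\'echet factor separate from the $(1\pm\epsilon)$ projection factor and verify that their product is still compatible with the stated $c_2$; only when $\mathbf{X}$ is already Euclidean does the clean bound $c_2=1+\log n/n$ emerge directly, and the general case should be read as the composition of these two independently probabilistic constructions.
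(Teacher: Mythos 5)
The paper offers no proof of Theorem \ref{jl}: it sits in the subsection of ``classic theorems on discrete spaces,'' is implicitly attributed to the literature, and the only gloss is the sentence following it, which points to Equation \ref{balls} and names ``low-dimensional subset mappings and random projections'' as the relevant distortions. Your proposal therefore supplies an argument the paper leaves to citation, and its architecture --- a Fr\'echet/Bourgain-type preprocessing step for a general metric, then a random linear projection with a single-pair concentration estimate and a union bound over the $\binom{n}{2}$ pairs --- is exactly the Johnson--Lindenstrauss mechanism the paper's remark alludes to. Your closing paragraph also reads the statement correctly: the ``probabilistically equivalent and bounded'' arrow $\underset{k}{\mathop{\to}}$ absorbs the unavoidable $O(\log n)$ metric-embedding factor, and only the projection step is genuinely $(1\pm\epsilon)$.

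Two steps need repair, however. First, your single-pair estimate is wrong as stated for $p\neq 2$: for a Gaussian or random-sign matrix $\Phi$, the quantity $\lVert\Phi\mathbf{u}\rVert_p$ concentrates about a constant multiple of $\lVert\mathbf{u}\rVert_2$ (scaled by a power of $K$), not about $\lVert\mathbf{u}\rVert_p$, and no JL-type dimension reduction exists inside $\ell_1$ at all (Brinkman--Charikar). So your Fr\'echet map into $\ell_p^m$ does not set up the projection correctly; the intermediate embedding must target $\ell_2$ (Bourgain's embedding, distortion $O(\log n)$), after which the Gaussian map lands in $\ell_p^{O(\log|\mathbf{X}|)}$ with $1\pm\epsilon$ dilation measured against the Euclidean norm of differences. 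Second, the reconciliation with $c_2 = 1+\log|\mathbf{X}|/|\mathbf{X}|$ cannot be carried out as you suggest: taking $\epsilon\approx\log n/n$ in $K=C\epsilon^{-2}\log n$ forces $K=\Omega(n^2/\log n)$, contradicting the target dimension $O(\log|\mathbf{X}|)$. That tension originates in the paper's own stated growth of $c_2$ rather than in your argument, but a correct write-up should either take $\epsilon$ constant (yielding $K=O(\log n)$ and a constant dilation bound) or explicitly flag that the displayed $c_2$ asymptotics are incompatible with logarithmic target dimension.
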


Theorem \ref{jl} is such that the relationship between distortions and dilations for the similarity search equation can be given by Equation \ref{balls}. Such distortions include low-dimensional subset mappings and random projections.

\subsection{Classic Theorems On Subspaces and Subsets}

Known theorems on the representation of distance equations over vector spaces as distance equations over subspaces and subsets of vector spaces can be defined from the literature on the subject. 

\begin{theorem} \cite{faloutsos:gemini}
Let $(\mathbf{X},d)$ be a finite discrete metric space such that $\mathbf{X}\subset {{\mathbb{K}}^{n}}$ and let $\mathbf{Y}$ be a linear subspace of $\mathbf{X}$ defined as follows:
\begin{equation}
\|{{\mathbf{y}}_{i}}{{\|}_{0}}=k,\;\;\;{{\mathbf{y}}_{i}}\in \mathbf{Y}	
\end{equation}
and:
\begin{equation}
{{\mathbf{y}}_{i}}=\mathcal{I}{{\mathbf{x}}_{i}},\;\;\;{{\mathbf{x}}_{i}},{{\mathbf{y}}_{i}}\in \mathbf{X},\mathbf{Y}
\end{equation}	
for some constant $k\le n$ and an indicator function $\mathcal{I}$. Then the following holds:
\begin{equation}
d({{\mathbf{x}}_{1}},{{\mathbf{x}}_{2}})\ge d({{\mathbf{y}}_{1}},{{\mathbf{y}}_{2}}),\;\;\;({{\mathbf{x}}_{1}},{{\mathbf{x}}_{2}}),({{\mathbf{y}}_{1}},{{\mathbf{y}}_{2}})\in \mathbf{X},\mathbf{Y}
\end{equation} 	
\end{theorem}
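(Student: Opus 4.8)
The plan is to exploit the fact that the indicator function $\mathcal{I}$ acts as a coordinate-selecting projection and that the metric $d$ decomposes additively over coordinates. First I would pin down the structure of $\mathcal{I}$: the constraint $\|\mathbf{y}_i\|_0=k$ together with $\mathbf{y}_i=\mathcal{I}\mathbf{x}_i$ forces $\mathcal{I}$ to be a diagonal $0$--$1$ matrix with exactly $k$ unit entries, so that it retains the coordinates indexed by some set $S\subseteq\{1,\ldots,n\}$ with $|S|=k$ and annihilates the remaining $n-k$ coordinates. Because $\mathcal{I}$ is linear, the key observation is the identity $\mathbf{y}_1-\mathbf{y}_2=\mathcal{I}(\mathbf{x}_1-\mathbf{x}_2)$; that is, the difference vector in the subspace $\mathbf{Y}$ is precisely the projection of the difference vector in $\mathbf{X}$. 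This is the same absorption that was recorded earlier as $\mathbf{X}_{\mathcal{I}}=\mathcal{I}_{\mathbf{i}}\mathbf{X}_{\mathbf{i}}$.

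Next I would reduce the metric to its coordinatewise form. Taking $d$ to be the $\ell_p$ metric underlying the finite vector space — the setting of the cited \textsc{gemini} framework \cite{faloutsos:gemini} — I would write $d(\mathbf{x}_1,\mathbf{x}_2)^p=\sum_{j=1}^n|x_{1,j}-x_{2,j}|^p$ and, after the projection, $d(\mathbf{y}_1,\mathbf{y}_2)^p=\sum_{j\in S}|x_{1,j}-x_{2,j}|^p$. Since every summand is nonnegative and $S$ is a subset of the full index set, deleting the coordinates outside $S$ can only decrease the sum, so $\sum_{j\in S}|x_{1,j}-x_{2,j}|^p\le\sum_{j=1}^n|x_{1,j}-x_{2,j}|^p$. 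Applying the monotone map $t\mapsto t^{1/p}$ then yields $d(\mathbf{y}_1,\mathbf{y}_2)\le d(\mathbf{x}_1,\mathbf{x}_2)$, which is the claim. In the language of the dilation framework of this paper, this exhibits $\mathcal{I}$ as a contraction, namely a lower-bounding mapping with upper distortion $k_2=1$, so that $\mathcal{C}(\delta,\mathbf{Y},d)$ is a lower-bounding relaxation of $\mathcal{C}(\delta,\mathbf{X},d)$.

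The main obstacle is that the statement is phrased for an \emph{arbitrary} finite discrete metric $d$, whereas the contraction inequality genuinely requires $d$ to be monotone under coordinate restriction, i.e. expressible as a nondecreasing function of a coordinatewise-additive aggregate. I would therefore handle this by first establishing the inequality for the additively decomposable $\ell_p$ family and then extending to any $d$ admitting the nonnegative cut-metric expansion $d\equiv\sum_{\mathbf{X}\subseteq\mathbf{Y}}c_i\,d(\mathbf{X},\mathbf{q})$ with $c_i\ge 0$ guaranteed by the cut-metric decomposition theorem stated earlier; for such $d$ each term is itself monotone under projection, and a nonnegative combination of monotone terms remains monotone, so the bound propagates. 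Metrics that violate this monotonicity would have to be excluded, and identifying the minimal hypothesis on $d$ under which the lower-bounding inequality survives is the delicate part of the argument.
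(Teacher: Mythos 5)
The paper itself offers no proof of this statement: it is quoted as a classic lower-bounding lemma from the GEMINI framework \cite{faloutsos:gemini}, so there is no internal argument to compare yours against. Your coordinate-projection proof is the standard one for that lemma and is correct under the hypothesis you make explicit: $\mathcal{I}$ is a diagonal $0$--$1$ matrix with $k$ unit entries, linearity gives $\mathbf{y}_1-\mathbf{y}_2=\mathcal{I}(\mathbf{x}_1-\mathbf{x}_2)$, and for $d=\ell_p$ deleting nonnegative summands and applying the monotone map $t\mapsto t^{1/p}$ yields $d(\mathbf{y}_1,\mathbf{y}_2)\le d(\mathbf{x}_1,\mathbf{x}_2)$. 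You are also right to flag that the theorem as literally stated, for an \emph{arbitrary} finite discrete metric $d$, is false without a monotonicity hypothesis: take $d(\mathbf{x},\mathbf{y})=\bigl\lVert A(\mathbf{x}-\mathbf{y})\bigr\rVert_2$ with $A=\bigl(\begin{smallmatrix}1&1\\0&\epsilon\end{smallmatrix}\bigr)$ and $\mathbf{x}_1-\mathbf{x}_2=(1,-1)$; then $d(\mathbf{x}_1,\mathbf{x}_2)=\epsilon$, while projecting onto the first coordinate gives distance $1>\epsilon$. So restricting to metrics expressible as a nondecreasing function of coordinatewise contributions is not pedantry; it is exactly the implicit hypothesis of the cited result.

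The one step in your plan that does not survive scrutiny is the proposed extension through the cut-metric decomposition theorem. That theorem certifies equivalence with $\ell_1$ via \emph{some} embedding: the cuts live on the abstract point set, not on the ambient coordinates of $\mathbb{K}^n$, whereas the projection $\mathcal{I}$ acts on the latter. A cut appearing in the nonnegative combination $\sum_i c_i\,d(\mathbf{X},\mathbf{q})$ need not be induced by any retained coordinate, so the claim that each term is itself monotone under $\mathcal{I}$ is unjustified: after projection, points can change sides of a cut in an uncontrolled way unless the $\ell_1$-realization happens to be aligned with the coordinates $\mathcal{I}$ keeps. The clean minimal hypothesis you are circling around is that $d(\mathbf{x},\mathbf{y})=F\bigl(\lvert x_1-y_1\rvert,\ldots,\lvert x_n-y_n\rvert\bigr)$ for some $F$ nondecreasing in each argument with $F(\mathbf{0})=0$; this covers all weighted $\ell_p$ metrics and is precisely what your projection argument uses, so stating it up front and dropping the cut-metric detour would make the proof complete.
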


\begin{corollary}\cite{loh:dual_full}
Whenever the indicator functions ${{\mathcal{I}}_{i}}$ describes a contiguous subset of components, the following holds:
\begin{equation}
\delta {{(\mathbf{q})}^{{}}}{{\subseteq }^{{}}}\bigcup\limits_{i}{^{{}}{\delta }'}({{\mathcal{I}}_{i}}\mathbf{q})) 
\end{equation}	
given $\delta $ and ${\delta }'$ defined as follows:
\begin{equation}
\delta (\mathbf{q}):\;\|\mathbf{X}-\mathbf{q}{{\|}_{p}}{}^{{}}{{\le }^{{}}}\epsilon 
\end{equation}
\begin{equation}
\bigcup\limits_{i}{^{{}}{\delta }'}(\mathcal{I}{{\mathbf{q}}_{i,j}}):{{\;}^{{}}}\|\mathcal{I}\mathbf{X}-{{\mathcal{I}}_{i}}\mathbf{q}{{\|}^{{}}}\le O({{n}^{\alpha }}\epsilon )
\end{equation}	
for $O(n/k)$ indicator functions and a constant $\alpha >0$.
\end{corollary}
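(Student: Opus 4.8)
The plan is to deduce the containment directly from the coordinate-projection lower bound of the preceding theorem (the Faloutsos et al.\ inequality $d(\mathbf{x}_1,\mathbf{x}_2)\ge d(\mathbf{y}_1,\mathbf{y}_2)$ for $\mathbf{y}_i=\mathcal{I}\mathbf{x}_i$) and then to absorb the $n^\alpha$ slack through the norm-equivalence of Theorem \ref{tnorm}. First I would fix an arbitrary $\mathbf{x}\in\mathbf{X}$ lying in the solution set of $\delta(\mathbf{q})$, so that $\|\mathbf{x}-\mathbf{q}\|_p\le\epsilon$, and show it belongs to every---hence to at least one---of the projected equations $\delta'(\mathcal{I}_i\mathbf{q})$.

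Each $\mathcal{I}_i$ is a $0/1$ diagonal restriction selecting a contiguous block of coordinates, so $\mathcal{I}_i\mathbf{x}$ is the zero-padded subvector on that block and the preceding projection theorem gives $\|\mathcal{I}_i\mathbf{x}-\mathcal{I}_i\mathbf{q}\|_p\le\|\mathbf{x}-\mathbf{q}\|_p\le\epsilon$. The contiguity hypothesis is exactly what makes the blocks disjoint and jointly exhaustive: partitioning the $n$ coordinates into blocks of width $k$ yields $\lceil n/k\rceil=O(n/k)$ indicator functions together with the clean additive identity $\|\mathbf{x}-\mathbf{q}\|_p^p=\sum_i\|\mathcal{I}_i(\mathbf{x}-\mathbf{q})\|_p^p$. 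Combining these, each projected distance is already bounded by $\epsilon$, hence by $O(n^\alpha\epsilon)$, so $\mathbf{x}$ satisfies each $\delta'$ and in particular lies in the union, establishing $\delta(\mathbf{q})\subseteq\bigcup_i\delta'(\mathcal{I}_i\mathbf{q})$.

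The remaining work is to account for the factor $n^\alpha$ and the unlabeled norm appearing on the $\delta'$ side. If $\delta'$ is measured in the same $\ell_p$ norm the factor is unnecessary and $\alpha$ may be taken to be $0$; the interesting case is when the projected queries are evaluated in a different norm $\|\cdot\|_{p'}$, for instance to match the native representation of the subspace, in which case I would invoke Theorem \ref{tnorm} to write $\|\cdot\|_{p'}\le n^{|1/p-1/p'|}\|\cdot\|_p$ and read off $\alpha=|1/p-1/p'|$, turning the bound $\epsilon$ into the stated $O(n^\alpha\epsilon)$.

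The hard part will be pinning down the precise norm and scaling intended on the right-hand side: the statement writes $\|\mathcal{I}\mathbf{X}-\mathcal{I}_i\mathbf{q}\|$ without a subscript and mixes $\mathcal{I}$ with $\mathcal{I}_i$, so the genuine content of the corollary is not the containment itself---which is essentially immediate from lower-bounding---but the bookkeeping that fixes $\alpha$ and confirms that $O(n/k)$ contiguous blocks suffice to recover the full distance. I would therefore devote the bulk of the argument to verifying the partition identity under contiguity and to tracking the norm-equivalence constant, flagging that if the $\mathcal{I}_i$ were permitted to overlap, the additive decomposition---and hence the stated block count---would have to be replaced by an inclusion-exclusion estimate.
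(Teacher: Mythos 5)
The paper offers no proof of this corollary at all---it is imported by citation from \cite{loh:dual_full} and stated against the backdrop of the preceding projection theorem---so there is no internal argument to compare against; your proposal has to be judged against the standard duality argument from that literature, and on that score it is essentially sound. Your core step (fix $\mathbf{x}$ with $\|\mathbf{x}-\mathbf{q}\|_p\le\epsilon$, apply the contraction $\|\mathcal{I}_i(\mathbf{x}-\mathbf{q})\|_p\le\|\mathbf{x}-\mathbf{q}\|_p$ from the preceding theorem, conclude membership in every $\delta'$ and a fortiori in the union) does establish the stated containment, and your reading of the unsubscripted norm with $O(n^{\alpha})$ slack absorbed via Theorem \ref{tnorm}, $\alpha=\lvert 1/p-1/p'\rvert$, is the most defensible way to make the right-hand side meaningful. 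Two caveats, though. First, the additive identity $\|\mathbf{x}-\mathbf{q}\|_p^p=\sum_i\|\mathcal{I}_i(\mathbf{x}-\mathbf{q})\|_p^p$ is not needed for the containment at all---contraction alone suffices---so the part you propose to spend the bulk of the argument on is doing work the statement does not demand. Where disjointness genuinely earns its keep, and what the cited duality result actually proves, is the pigeonhole refinement: with $m=O(n/k)$ disjoint windows, at least one window satisfies $\|\mathcal{I}_i(\mathbf{x}-\mathbf{q})\|_p\le\epsilon/m^{1/p}$, which is why the conclusion is naturally a union rather than an intersection and why the per-window threshold can be \emph{tightened}, not loosened; under your version, in which every window is within $\epsilon$, the union form is vacuously weaker than what you proved and $O(n^{\alpha}\epsilon)$ is pure slack. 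Second, contiguity in the cited source is about windows of consecutive components for subsequence matching under sliding offsets, not merely a device for making the blocks disjoint: any partition into disjoint blocks, contiguous or not, supports both the contraction and the pigeonhole steps, so your closing flag about overlapping $\mathcal{I}_i$ forcing inclusion--exclusion, while not wrong, is not where the hypothesis actually bites.
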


\begin{theorem}\cite{yu:high}
Let $\mathbf{X}\cup \mathbf{q}\subset {{\mathbb{K}}^{n}}$ be some finite discrete space, there exist $f,g:{{\mathbb{R}}^{n}}\to \mathbb{R}$ such that the following holds:
\begin{equation}
\bigcup\limits_{i}{^{{}}\delta }({{\mathbf{q}}_{i}}){}^{{}}{{\subseteq }_{{}}}\bigcup\limits_{i,j}{^{{}}{\delta }'}({{f}_{j}}({{\mathbf{q}}_{i}}))
\end{equation}
given $\delta $ and ${\delta }'$ defined as follows:
\begin{equation}
\bigcup\limits_{i}{^{{}}\delta }({{\mathbf{q}}_{i}}):{{\;}^{{}}}\|\mathbf{X}-{{\mathbf{q}}_{i}}{{\|}_{\infty }}{}^{{}}{{\le }^{{}}}\epsilon 
\end{equation}	
\begin{equation}
\bigcup\limits_{i,j}{^{{}}{\delta }'}({{f}_{j}}({{\mathbf{q}}_{i,j}})){{:}^{{}}}\;\left| g(\mathbf{X})-{{f}_{j}}({{\mathbf{q}}_{i,j}})\text{ } \right|=0
\end{equation}
\end{theorem}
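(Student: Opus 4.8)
The plan is to exploit the finiteness of the discrete space together with the order encodings of Section 2 in order to linearize the $n$-dimensional orthogonal range ($\ell_\infty$) query into a finite union of one-dimensional exact-match equations. Since the asserted relation is only a containment and not an equality, it suffices to produce, for each query box, a finite family of equality equations whose combined solution set covers the contents of that box.

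First I would invoke finiteness. For each $\mathbf{q}_i$ the set $S_i = \{\mathbf{x}\in\mathbf{X} : \|\mathbf{x}-\mathbf{q}_i\|_\infty \le \epsilon\}$ is finite because $\mathbf{X}$ is a finite discrete space; by the corollary that any distance equation on a finite discrete space defines a set of open or closed intervals, each $\delta(\mathbf{q}_i)$ resolves to a finite list $S_i = \{\mathbf{x}^{(i,1)},\dots,\mathbf{x}^{(i,m_i)}\}$. Next I would fix the data encoding $g$. Taking $g$ to be a $g$-code, equivalently an $l$-code, of the kind constructed in the Order Topologies section, for instance $g(\mathbf{x}) = \sum_k x_k \lfloor \log p_k \rfloor_{\log}$ over an ordered sequence of primes, gives a map $g:\mathbb{R}^n\to\mathbb{R}$ that is injective on the finite set $\mathbf{X}$; injectivity is guaranteed because $\mathbf{X}$ is a finite computable discrete space whose elements have exactly computable neighborhoods, so the encoding separates distinguishable points.

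I would then define the query maps by interpolation through the finitely many (query, encoding) pairs. For each box I set $f_j(\mathbf{q}_i) := g(\mathbf{x}^{(i,j)})$ for $1\le j\le m_i$, and pad the remaining slots with a value lying outside $g(\mathbf{X})$ so that the index $j$ ranges over a fixed set $\{1,\dots,\max_i m_i\}$ independent of $i$. Because the finite collection of pairs $(\mathbf{q}_i, g(\mathbf{x}^{(i,j)}))$ can always be interpolated, each $f_j$ extends to a genuine function on $\mathbb{R}^n$. By injectivity of $g$, the one-dimensional equation $|g(\mathbf{X}) - f_j(\mathbf{q}_i)| = 0$ then has solution set exactly $\{\mathbf{x}^{(i,j)}\}$, or the empty set on a padded slot. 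Taking the union over $j$ recovers $S_i = \delta(\mathbf{q}_i)$, and the further union over $i$ recovers $\bigcup_i \delta(\mathbf{q}_i)$, establishing $\bigcup_i \delta(\mathbf{q}_i) \subseteq \bigcup_{i,j}\delta'(f_j(\mathbf{q}_i))$.

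The main obstacle is the construction of $f_j$ and $g$ as honest functions $\mathbb{R}^n\to\mathbb{R}$ rather than table lookups. The map $g$ must separate every pair of points of $\mathbf{X}$, which is precisely where the $g$-code and the exact computability of neighborhoods are essential, and the family $\{f_j\}$ must be indexed uniformly across boxes of differing cardinalities, which forces the padding step and the choice of a fixed range $j\le \max_i m_i$. Once these two functions are fixed the containment is immediate, so the entire content of the argument lies in exhibiting a separating encoding and a consistently indexed interpolating family.
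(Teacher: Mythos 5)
The paper offers no proof of this statement at all: it is imported verbatim from \cite{yu:high} in the subsection on classic theorems over subspaces and subsets, so the only meaningful benchmark is the transformational construction in that source. Measured against the literal statement, your argument is correct: the quantifier order lets $f$ and $g$ be chosen after $\mathbf{X}\cup\mathbf{q}$ is fixed, each answer set $S_i$ is finite, interpolation through finitely many pairs yields honest functions $\mathbb{R}^n\to\mathbb{R}$, and your padding device makes the index $j$ uniform, giving $\bigcup_i\delta(\mathbf{q}_i)=\bigcup_{i,j}\delta'(f_j(\mathbf{q}_i))$ and hence the containment. But your route differs genuinely from the cited one, and the difference is worth stating. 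In \cite{yu:high}, $g$ is a fixed, answer-independent reduction of the data to one dimension (a pyramid- or iDistance-style transform) and the $f_j$ are computed from the query alone; the containment is then strict in general because the one-dimensional equations admit false positives, and the theorem's value is precisely that an $\ell_\infty$ range query reduces to a one-dimensional filter-and-refine search \emph{without} first knowing its answers. Your $f_j(\mathbf{q}_i):=g(\mathbf{x}^{(i,j)})$ is defined from the solution set $S_i$ itself, so it cannot be evaluated without already having solved $\delta(\mathbf{q}_i)$; this proves the bare existential claim (indeed with equality of solution sets) but evacuates its algorithmic content --- note that the literal statement is so weak that the constant functions $f_j\equiv g(\mathbf{x}_j)$ for $j=1,\ldots,|\mathbf{X}|$ already satisfy it, since then $\bigcup_{i,j}\delta'(f_j(\mathbf{q}_i))=\mathbf{X}$. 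What your version buys is elementarity and exactness; what the source's version buys is a transform computable from the query alone. One local flaw to repair: your justification of the injectivity of $g$ is off, since exact computability of neighborhoods does not make $\sum_k x_k\lfloor\log p_k\rfloor_{\log}$ injective; that holds for integral coordinates by unique factorization but can fail over rational or real $\mathbb{K}$, where you should instead take a generic linear functional (only finitely many difference hyperplanes of the finite set $\mathbf{X}$ need to be avoided) or simply any injection of $\mathbf{X}$ into $\mathbb{R}$, which exists because $\mathbf{X}$ is finite.
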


\begin{theorem}
Let $f:\mathbf{X}\to \mathbf{Y}$ be a distortionless mapping given by a binary matrix $\mathbf{G}$ as follows:
\begin{equation}
f(\mathbf{X})=\mathbf{GX}
\end{equation} 	
for some $\mathbf{x}\in \mathbf{X}$. Then there exists $\mathbf{G}$ such that following holds for ${{\mathbf{y}}_{i}},{{\mathbf{y}}_{j}}\in f(\mathbf{X})$ and a constant $\epsilon $:
\begin{equation}
\|{{\mathbf{y}}_{i}}-{{\mathbf{y}}_{j}}{{\|}_{1}}{}^{{}}{{\ge }^{{}}}\epsilon ,\;\;\;{{\mathbf{y}}_{i}}\ne {{\mathbf{y}}_{j}}
\end{equation}     
$\mathbf{X}$ and $\mathbf{Y}$ can be chosen such that $\mathbf{X}\subseteq \mathbb{B}_{1}^{n},\mathbf{Y}\subseteq \mathbb{B}_{1}^{m}$.
\end{theorem}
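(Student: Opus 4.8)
The plan is to settle existence by an explicit construction and then indicate the dimension-reducing refinement. First I would reduce to the finite setting: because $\mathbf{X}$ is a finite discrete space only the $\binom{N}{2}$ pairwise differences $\mathbf{x}_i-\mathbf{x}_j$ with $N=\lvert\mathbf{X}\rvert$ matter, and a single positive rescaling sends $\mathbf{X}$ into $\mathbb{B}_1^n$ while multiplying every $\ell_1$ distance by the same factor, so it preserves both the distortionless property and the separation constant $\epsilon$ up to scale. It therefore suffices to produce one admissible $\mathbf{G}\in\{0,1\}^{m\times n}$.

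For the witness I would take $\mathbf{G}$ to be the $m\times n$ binary matrix ($m\ge n$) whose top block is $\mathbf{I}_n$ and whose remaining rows vanish. Then $\mathbf{G}(\mathbf{x}_i-\mathbf{x}_j)$ agrees with $\mathbf{x}_i-\mathbf{x}_j$ on its first $n$ coordinates and is zero elsewhere, so $\lVert\mathbf{G}(\mathbf{x}_i-\mathbf{x}_j)\rVert_1=\lVert\mathbf{x}_i-\mathbf{x}_j\rVert_1$ and $f$ is a genuine $(1,1)$-distortion, i.e.\ distortionless in the sense fixed earlier. The separation constant is $\epsilon=\min_{\mathbf{y}_i\ne\mathbf{y}_j}\lVert\mathbf{y}_i-\mathbf{y}_j\rVert_1>0$, positive because $f(\mathbf{X})$ is finite and $f$ is injective, and $\mathbf{Y}=f(\mathbf{X})\subseteq\mathbb{B}_1^m$ since $\lVert\mathbf{G}\mathbf{x}\rVert_1=\lVert\mathbf{x}\rVert_1\le 1$. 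This already discharges the statement as written.

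To reach the substantive, dimension-reducing version aligned with Theorem \ref{jl}, I would instead draw $\mathbf{G}$ with i.i.d.\ $\{0,1\}$ entries and $m=O(\log N)$ rows and argue by the probabilistic method. The two steps are: (i) evaluate $\mathbb{E}\,\lVert\mathbf{G}(\mathbf{x}_i-\mathbf{x}_j)\rVert_1$ and show that, after normalising by $m$, the expected map is a $(1,1)$-distortion and each realisation a $(1,O(1))$-distortion controlled by Equation \ref{balls}; and (ii) apply a bounded-differences concentration bound to $\lVert\mathbf{G}(\mathbf{x}_i-\mathbf{x}_j)\rVert_1$ and a union bound over the $\binom{N}{2}$ pairs, so that with positive probability every pair of distinct images is separated by a fixed $\epsilon$. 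Any $\mathbf{G}$ in that event is the desired matrix, and rescaling $\mathbf{X}$ by $1/\lVert\mathbf{G}\rVert_{1\to 1}$ keeps $\mathbf{G}$ binary and places $\mathbf{Y}$ in $\mathbb{B}_1^m$.

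The hard part is step (i). For binary $\mathbf{G}$ each row forms an unsigned sum of coordinates, and after centring these sums behave sub-Gaussianly, so $\mathbb{E}\,m^{-1}\lVert\mathbf{G}\mathbf{v}\rVert_1$ scales like $\lVert\mathbf{v}\rVert_2$ rather than $\lVert\mathbf{v}\rVert_1$: binary matrices are $\ell_2$-stable, not $\ell_1$-stable, and a zero-one linear map can only contract the $\ell_1$ norm. An exact $\ell_1\!\to\!\ell_1$ isometry from $\mathbf{G}$ is thus unavailable in the reducing regime, and the honest reading is either an $(\mathbf{X},\ell_2)\to(\mathbf{Y},\ell_1)$ distortion---still a $(1,O(1))$-dilation in the sense of Equation \ref{balls}---or a restriction of the difference set to sign-consistent rows so that the triangle inequality is tight. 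The padded-identity witness sidesteps this obstruction completely, which is why I would present it first and treat the random-matrix estimate as the refinement linking the statement to the probabilistic equivalence $\underset{k}{\to}$ of Theorem \ref{jl}.
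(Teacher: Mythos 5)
Your proposal is correct, but there is nothing in the paper to compare it against: this theorem is stated bare, with no proof and, unlike its neighbours in the same subsection, not even a citation, so your construction has to be judged on its own terms. On those terms it succeeds. The padded-identity witness --- take $\mathbf{G}$ to be the $m\times n$ binary matrix whose top $n\times n$ block is $\mathbf{I}_n$ and whose remaining $m-n$ rows are zero --- preserves $\lVert\cdot\rVert_1$ exactly, hence is a $(1,1)$-distortion (``distortionless'' in the paper's sense), and since $\mathbf{X}$ is a finite discrete space the minimum of $\lVert\mathbf{y}_i-\mathbf{y}_j\rVert_1$ over the finitely many pairs with $\mathbf{y}_i\neq\mathbf{y}_j$ is positive, which is all the constant $\epsilon$ demands; the rescaling remark disposes of the containments $\mathbf{X}\subseteq\mathbb{B}_1^n$, $\mathbf{Y}\subseteq\mathbb{B}_1^m$. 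Your witness also covers the alternative reading of $\mathbb{B}_1^n$ as the binary cube, where distinct binary images are automatically separated with $\epsilon=1$; given the surrounding material on $g$-codes and characteristic radii, that coding-theoretic reading (binary generator matrices whose images have guaranteed minimum distance) is probably the intended content, and it is satisfied trivially by your $\mathbf{G}$ and nontrivially, with larger $\epsilon$, by any generator matrix of a code with minimum distance $\epsilon$. Your closing diagnosis is likewise sound and is the most valuable part of the write-up: in the genuinely dimension-reducing regime $m=O(\log N)$, a random binary $\mathbf{G}$ concentrates $\lVert\mathbf{G}\mathbf{v}\rVert_1$ around a multiple of $\lVert\mathbf{v}\rVert_2$ rather than $\lVert\mathbf{v}\rVert_1$, so an exact $\ell_1\to\ell_1$ isometry is unavailable there, and any attempt to read the theorem as a binary $\ell_1$ analogue of Theorem \ref{jl} would fail; since the statement only asserts existence of some $\mathbf{G}$, you are right to present the trivial witness as the proof and the probabilistic refinement as an optional link to the $\underset{k}{\mathop{\to}}$ equivalence, not as a needed step.
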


\subsection{Characteristic Radius and Encodings}
As a converse to the notion of error-correcting bounds, a characteristic radius can be defined on any finite discrete distance space.
\begin{definition}
Given an isomorphic mapping $f:\mathbf{X}\to \mathbf{Y}$, the actual characteristic radius of $f$ is the minimal radius of $\mathbf{Y}$ in the distance space $(\mathbf{Y},d)$. 
\end{definition}
This is given by:
\begin{equation}
r=\min d\left( f({{\mathbf{x}}_{i}}),f({{\mathbf{x}}_{j}}) \right)
\end{equation}
for all ${{\mathbf{x}}_{i}},{{\mathbf{x}}_{j}}\in \mathbf{X}$.

Given a normed finite discrete space $\textbf{X}$, the compactness $c$ is then defined as:
\begin{equation}
 c(\textbf{X})=\lVert \textbf{X} \rVert_\infty - r
\end{equation}

The set of all isomorphic mappings of some finite discrete space is equivalent to the set of all injective mappings $g:f\to \mathbb{R}_{+}^{N\times N}$ that are order invariant given some distance function $f:\mathbf{X}\times \mathbf{X}\to \mathbb{R}_{+}^{N\times N}$. For some continuous function $f(x,y,z)$, this can be seen as the smoothing process given by:        
\begin{equation}
f(z)=g\circ f(z)
\end{equation}
for some value $z$ defining an arbitrary cutting plane. Or alternatively:
\begin{equation}
{{\left. {{f}^{-1}}-g\circ {{f}^{-1}} \right|}_{f(x,y)}}{{=}^{{}}}0
\end{equation} 	
which equivalently defines a projection to identity of $f$ and $g$ into ${{\mathbb{R}}^{2}}$. 

\begin{theorem}
Given some discrete bounded space, a computable characteristic radius can be defined.
\end{theorem}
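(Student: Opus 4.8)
The plan is to reduce the computability of the characteristic radius $r = \min d(f(\mathbf{x}_i),f(\mathbf{x}_j))$ to the computability of a minimum taken over a finite set of individually computable distance values. First I would invoke the corollary following Definition \ref{deffds}, which asserts that a finite discrete space can be constructed exactly when its elements admit an arbitrary neighborhood computable in the classical sense; this guarantees that the cut metric of Equation \ref{cutd} is decidable on the space, so that for every pair $\mathbf{x}_i,\mathbf{x}_j$ the distinguishability relation $\mathcal{N}(\mathbf{x}_i)\cap\mathcal{N}(\mathbf{x}_j)=\varnothing$ is decided by a classical machine. Discreteness thus supplies a positive separation between distinct neighborhoods.

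Next I would use boundedness. Since the space is bounded, $\lVert\mathbf{X}\rVert_\infty$ is finite, and the compactness $c(\mathbf{X})=\lVert\mathbf{X}\rVert_\infty - r$ is well-defined. Combining the finite diameter with the positive separation furnished by discreteness yields a standard packing bound: only finitely many pairwise-distinguishable elements can be accommodated within a set of finite extent whose neighborhoods do not overlap. Hence the represented space carries finitely many distinct elements, and consequently the collection of pairwise images $\{\,d(f(\mathbf{x}_i),f(\mathbf{x}_j)) : \mathbf{x}_i,\mathbf{x}_j\in\mathbf{X}\,\}$ is a finite subset of $\mathbb{R}_{+}$.

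With finiteness in hand, I would argue that each entry is computable and then pass to the minimum. Because $f$ is an isomorphic, hence order-invariant, mapping and $d$ is a computable distance function, every value $d(f(\mathbf{x}_i),f(\mathbf{x}_j))$ is produced exactly by a classical procedure. The characteristic radius is the minimum of finitely many such computable nonnegative reals; since the minimum of a finite list of computable values is itself computable and, being over a finite set, is attained rather than merely approached as an infimum, $r$ is computable and well-defined.

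The hard part will be the finiteness step: converting the informal ``discrete bounded'' hypothesis into a genuine packing argument that rules out an accumulation of neighborhoods toward the boundary. The delicacy is that the cut metric only certifies pairwise distinguishability and does not by itself exhibit a uniform lower bound on separations; to close the gap I would either assume the neighborhood radius is bounded below by a fixed positive constant, making the packing bound immediate, or, in the weaker setting, appeal to the decidability of the cut metric together with the compactness of the bounded domain to extract a finite subcover, after which the minimum over the resulting finite set is computable by the same argument.
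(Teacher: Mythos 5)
The paper states this theorem without any proof (it only promises that the results ``will be further discussed and shown in subsequent sections''), so there is no author's argument to align with; judged on its own terms, your proposal breaks at exactly the step you flag yourself: finiteness. In the paper's framework (Definition \ref{deffds}), discreteness means only that the cut metric of Equation \ref{cutd} vanishes pairwise, i.e.\ the neighborhoods $\mathcal{N}(\mathbf{x}_i)$ are disjoint, and this supplies no uniform separation constant. Concretely, take $\mathbf{X}=\{1/i : i\in\mathbb{N}\}\subset(0,1]$ with $\mathcal{N}(1/i)$ the open ball of radius $\tfrac{1}{2i(i+1)}$: the neighborhoods are pairwise disjoint (so the space is discrete in the paper's sense), the space is bounded with $\lVert\mathbf{X}\rVert_\infty=1$, yet it is infinite, the pairwise distances accumulate at $0$, and the minimum in $r=\min d\left(f(\mathbf{x}_i),f(\mathbf{x}_j)\right)$ is not attained. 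So ``bounded and discrete implies finitely many elements'' is false, and the reduction to the minimum of a finite list of computable reals collapses.

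Neither of your proposed repairs closes the gap. Postulating a fixed positive lower bound on the neighborhood radii is a new hypothesis not present in the statement (it is essentially equivalent to assuming finiteness outright). The finite-subcover move fails twice over: boundedness is not compactness, and even after passing to the closure in $\mathbb{R}^n$ via Heine--Borel, the family $\{\mathcal{N}(\mathbf{x}_i)\}$ covers only $\mathbf{X}$ and not its accumulation points, so it is not an open cover of the compact closure and no finite subcover exists. In the infinite case even the infimum reading is delicate: an infimum of an enumerable family of computable distances is in general only approximable from above, hence not computable; and your claim that each $d\left(f(\mathbf{x}_i),f(\mathbf{x}_j)\right)$ is exactly computable is itself an added hypothesis, since the theorem never assumes $d$ computable (the paper's corollary on transcendental subsets of the line shows precisely how this can fail). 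What would rescue the statement is the paper's standing convention, stated just before Definition \ref{deffds}, that its discrete spaces are \emph{finite} computable discrete spaces: under that reading finiteness is assumed rather than derived, and the remainder of your argument --- decidability of the cut metric plus the computability of a minimum over a finite list of computable values --- is correct but essentially immediate, with the packing detour unnecessary.
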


\begin{corollary}
An unbounded discrete space has an actual characteristic radius of $\infty $.
\end{corollary}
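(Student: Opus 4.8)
The plan is to treat this corollary as the logical complement of the preceding theorem: where a bounded space admits a finite, computable characteristic radius, an unbounded one must fail to, so the value collapses to $\infty$. First I would pin down the two ingredients in the statement. I read \emph{unbounded} to mean that the enclosing quantity $\|\mathbf{X}\|_\infty$ diverges, equivalently that the diameter $\sup_{i,j} d(\mathbf{x}_i,\mathbf{x}_j)$ is not finite. I read \emph{actual characteristic radius}, following the definition, as the radius of the image $\mathbf{Y}=f(\mathbf{X})$ minimized over the admissible isomorphic mappings $f$.

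The key structural fact I would invoke is that an isomorphic mapping is a $(1,1)$-distortion, hence distance preserving: $d'(f(\mathbf{x}_i),f(\mathbf{x}_j)) = d(\mathbf{x}_i,\mathbf{x}_j)$ for every pair. Consequently every distance-derived quantity, in particular the diameter and the enclosing extent, is an invariant of the isomorphism class of $(\mathbf{X},d)$. The argument then runs: each admissible $f$ preserves all pairwise distances, so the supremum of distances in $(\mathbf{Y},d)$ equals that in $(\mathbf{X},d)$; if the latter is infinite, so is the former, uniformly in $f$. Minimizing an identically infinite quantity over the family of isomorphic mappings leaves it infinite, whence $r=\infty$. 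I would present this alongside the compactness relation $c(\mathbf{X}) = \|\mathbf{X}\|_\infty - r$ as a consistency check: with $\|\mathbf{X}\|_\infty=\infty$ a finite $r$ would force infinite compactness, against the reading of $c$ as a finite descriptor of a bounded packing. Together with the converse corollary already recorded, this exhibits finiteness of $r$ as equivalent to boundedness.

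The step I expect to be the main obstacle is reconciling the literal formula $r = \min d(f(\mathbf{x}_i),f(\mathbf{x}_j))$, a minimum interpoint separation that is generically finite even for unbounded spaces such as $\mathbb{Z}$ under the usual metric, with the asserted value $r=\infty$. The resolution I would argue for is that the operative quantity is the enclosing (covering) radius of $\mathbf{Y}$ minimized over mappings, not the smallest pairwise distance; under that reading distance preservation transports unboundedness directly to the image. Making this notion of \emph{radius} precise, so that the statement is simultaneously true and consistent with the bounded case, and handling the fact that the defining extremum need not be attained on an infinite set, so its value must be taken as a divergent supremum rather than a realized minimum, is where the real work lies.
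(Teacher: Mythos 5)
The paper offers no proof of this corollary: it is stated bare, as the immediate converse of the preceding theorem on bounded spaces, so there is no argument in the paper to compare yours against step by step. Your proposal is the natural way to make the statement true, and your diagnosis of the central difficulty is exactly right: under the literal formula $r=\min d\left(f(\mathbf{x}_i),f(\mathbf{x}_j)\right)$ the radius of $\mathbb{Z}$ with the usual metric is $1$, not $\infty$, so the corollary forces the enclosing-radius reading; and once that reading is adopted, the $(1,1)$-distortion definition of isomorphic mapping from the earlier section reduces the claim to isometry invariance of the diameter, which is precisely how you argue. Within the reading you fix, the proof is sound, including your observation that on an infinite set the defining extremum must be taken as a divergent supremum rather than a realized minimum.

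One caveat is worth recording, because it limits how far your proof captures the paper's apparent intent. Immediately after defining compactness, the paper re-characterizes ``the set of all isomorphic mappings'' as the injective, \emph{order-invariant} mappings of the distance function --- a strictly larger class than isometries --- and the later theorem bounding the characteristic radius of a finite $\mathbf{X}\subset\mathbb{R}$ by $\log\|\mathbf{X}\|_\infty$ is false for isometries (take two points at $\pm\|\mathbf{X}\|_\infty$), so the subsection as a whole only coheres under the order-invariant reading. Under that reading, however, your key step --- distance preservation --- is unavailable, and the corollary itself becomes doubtful: the distance set $\{1,2,3,\ldots\}$ of $\mathbb{Z}$ admits an order-preserving injection into $[1/2,1)$, e.g.\ $n\mapsto 1-2^{-n}$, and since all remapped values lie in $[1/2,1)$ the triangle inequality holds automatically, yielding a bounded, uniformly separated, order-isomorphic image whose radius is finite under either reading of ``radius.'' So your argument proves the isometric variant of the statement; under the order-invariant notion of isomorphism that the surrounding theorems require, the corollary needs an additional hypothesis (or fails), and the paper supplies no argument --- along your lines or any other --- that closes this gap.
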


\begin{corollary}
Any unbounded discrete space $(\mathbf{X},d)$ can be bounded through some mapping $f:(\mathbf{X},d)\to (\mathbf{Y},{d}')$.
\end{corollary}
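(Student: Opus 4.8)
The plan is to exhibit an explicit order-invariant mapping that compresses the unbounded range of $d$ into a bounded interval, so that the image space acquires a finite characteristic radius. Since $(\mathbf{X},d)$ is unbounded, the range $d(\mathbf{X}\times\mathbf{X})$ is an unbounded subset of $\mathbb{R}_+$, and by the preceding corollary its actual characteristic radius is $\infty$. First I would fix a strictly increasing, concave, subadditive function $\phi:\mathbb{R}_+\to[0,1)$ with $\phi(0)=0$, for instance
\begin{equation}
\phi(t)=\frac{t}{1+t},
\end{equation}
and take $f$ to be the identity relabeling of the underlying set, equipping the image $\mathbf{Y}=f(\mathbf{X})$ with the induced distance
\begin{equation}
d'\bigl(f(\mathbf{x}_i),f(\mathbf{x}_j)\bigr)=\phi\bigl(d(\mathbf{x}_i,\mathbf{x}_j)\bigr).
\end{equation}

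Next I would verify that $d'$ is a genuine distance in the sense of Definition~\ref{distance}: Equation~\ref{zero} follows from $\phi(0)=0$, and Equation~\ref{inverse} is inherited from the symmetry of $d$, since $\phi$ is applied pointwise. When $d$ is a metric, the concavity and subadditivity of $\phi$ (the standard metric-preserving property) transfer the triangle inequality to $d'$, so $(\mathbf{Y},d')$ is again a metric space; moreover the neighborhood structure of Equation~\ref{cutd} is unchanged because $\phi$ is a strictly increasing bijection onto its range, hence the discrete-space property of Definition~\ref{deffds} is preserved. Boundedness is then immediate: $\sup d'\le 1<\infty$, so by the theorem on bounded discrete spaces the image carries a computable characteristic radius, which is exactly the assertion.

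The step that carries the real content is arguing that $f$ qualifies as an admissible mapping, because a constant $(k_1,k_2)$-distortion cannot bound an unbounded space: the lower estimate $\frac{1}{k_1}d\le d'$ forces $d\le k_1 d'$, which fails for large $d$ once $d'$ is bounded. I would resolve this by appealing to the order-invariance characterization rather than the distortion bound. For a discrete space the relevant structure is the order topology induced on the distance matrix, and since $\phi$ is strictly monotone it preserves the ordering of every pair of distance-matrix entries; hence $f$ is order invariant and therefore an isomorphic mapping of the discrete space in the sense established earlier in this subsection. Thus the compression alters no order relation while rendering all distances finite and bounded.

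Finally, I expect the main obstacle to be precisely this reconciliation of the bounded mapping with the distortion framework; a secondary point is guaranteeing that $d'$ stays inside the admissible alphabet $\mathbb{A}$ and that the mapping remains computable, both of which follow from choosing $\phi$ to be a computable metric-preserving function such as $t/(1+t)$ or $\arctan$.
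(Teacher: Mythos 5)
The paper states this corollary without proof, so there is no internal argument to compare against; judged on its own terms, your proposal is correct and is surely the intended construction: keep the point set fixed and compress the range of $d$ by a bounded, strictly increasing, subadditive $\phi$ such as $t/(1+t)$, so that $d'=\phi\circ d$ satisfies $\sup d'\le 1$. Under the paper's weak notion of a distance measure (only Equations \ref{zero} and \ref{inverse} are required, with no triangle inequality or identity of indiscernibles) the verification is immediate, and your additional care about metric-preserving functions and preservation of the cut-metric neighborhood structure of Definition \ref{deffds} goes beyond what the framework even demands.

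The one delicate point is the move you yourself flag, and your diagnosis is right: the paper's definition of a mapping, the two-sided $(k_1,k_2)$ bound with constants $k_1,k_2\ge 1$, literally forbids any bounding map of an unbounded space, since $\frac{1}{k_1}d\le d'$ forces $\sup d'=\infty$ whenever $\sup d=\infty$. Your escape route does lean on the paper's own sentence identifying isomorphic mappings with order-invariant injective transformations of the distance matrix, but two frictions remain: first, that sentence is asserted for \emph{finite} discrete spaces, whereas an unbounded discrete space is necessarily infinite (finitely many points realize only finitely many, hence bounded, distances), so you are extending it beyond its stated scope; second, it sits uneasily with the earlier definition of an isomorphic mapping as a $(1,1)$-distortion, which would force $d'=d$ exactly and rule out $\phi\circ d$. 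Neither inconsistency originates with you, but a cleaner write-up would simply take $f$ to be a map in the plain set-theoretic sense onto the distance space $(\mathbf{Y},\phi\circ d)$, observe boundedness directly, and confine the $(k_1,k_2)$ framework to bounded subsets, where $\phi$ is in fact bi-Lipschitz since $\phi(t)\ge t/(1+R)$ for $t\le R$. With that proviso the proposal establishes the statement.
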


Also, most practical results apply to some general finite discrete space or some non-trivial class of such spaces.
As the characteristic radius decreases, the complexity of the encoding is said to increase. The complexity is then given by function of the compactness $c(\mathbf{X})$.

For some finite discrete space in $\mathbb{R}$, a few additional results can be obtained.

\begin{theorem}
Any finite discrete space $\mathbf{X}\subset \mathbb{R}$ has a characteristic radius upper bounded by:
\begin{equation}
 \log \|\mathbf{X}{{\|}_{\infty }}
\end{equation}
\end{theorem}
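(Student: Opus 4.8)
The plan is to exhibit a single order-invariant (isomorphic) encoding of $\mathbf{X}$ whose minimal image gap already meets the claimed bound; since the set of isomorphic mappings coincides with the order-invariant injective mappings characterized just before the statement, and since the actual characteristic radius is the smallest radius achievable over such mappings, producing one admissible encoding with radius at most $\log\|\mathbf{X}\|_\infty$ suffices to bound it. First I would order the elements as $x_1 < x_2 < \cdots < x_N$ and, after a harmless translation placing every element in $\mathbb{R}_+$ with $x_1 \ge 1$ (a monotone shift preserves all pairwise orderings and hence the order topology), pass to the logarithmic code $f(x) = \log x$ that already underlies the $l$-code representations of Section 2. Because $\log$ is strictly monotone on $\mathbb{R}_+$, the map $f$ is injective and order invariant, so it is a legitimate isomorphic mapping and its image $(\mathbf{Y},d)$ with $\mathbf{Y}=f(\mathbf{X})$ carries a well-defined characteristic radius.

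Next I would evaluate that radius directly. In a one-dimensional ordered image the minimal pairwise distance is attained between consecutive elements, whence
\begin{equation}
r=\min_{1\le i<N}\left(\log x_{i+1}-\log x_i\right)=\min_{1\le i<N}\log\frac{x_{i+1}}{x_i}.
\end{equation}
The total spread of the image equals $\log x_N-\log x_1\le\log x_N=\log\|\mathbf{X}\|_\infty$ under the normalization $x_1\ge1$, and for $N\ge2$ the minimal gap never exceeds the total spread. Hence $r\le\log\|\mathbf{X}\|_\infty$, and because the actual characteristic radius is no larger than the radius of this particular encoding, the bound follows.

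The main obstacle I anticipate is not the arithmetic but the bookkeeping needed to keep the encoding genuinely order invariant after discretization. The $l$-code applies the log-floor operator $\lfloor\cdot\rfloor_{\log}$, which retains only a logarithmic number of digits relative to the largest value, so I would have to verify that this truncation still separates every pair $x_i\ne x_j$, i.e.\ that the retained precision never collapses two distinct elements to one code. The delicate point is to argue that $O(\log\|\mathbf{X}\|_\infty)$ digits of resolution simultaneously (i) preserve injectivity across all $N$ elements of the finite space, keeping $f$ an isomorphism, and (ii) confine the whole image to an interval of length at most $\log\|\mathbf{X}\|_\infty$, so that the competing demands of sufficient precision and small spread are met together. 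Elements with $|x_i|<1$ or of mixed sign would be absorbed into the same translation-and-normalization preprocessing, which I would record as a remark rather than belabor.
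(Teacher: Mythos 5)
The paper never actually writes a proof of this theorem: it states it, remarks that ``these results will be further discussed and shown in subsequent sections,'' and the later sections justify it only implicitly, via the $l$-code machinery and the ``classic information theoretical argument'' that values bounded by $\|\mathbf{X}\|_{\infty}$ admit logarithmic-length encodings (cf.\ the theorems that finite codes have algorithmic encodings bounded by $\log N$ with $N$ the $\ell_{\infty}$ norm, and that the space complexity on a logarithmic factorial prime domain is $O(\log\|\mathbf{X}\|_{\infty})$). Your proposal is the natural concretization of exactly that route: you exhibit the logarithmic encoding explicitly and bound its minimal image gap by the telescoping inequality $\min_{i}(\log x_{i+1}-\log x_{i})\le\log x_{N}-\log x_{1}$. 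So you are aligned with the paper's intended approach while supplying details it omits; moreover, your reading of the per-mapping definition --- that exhibiting one order-invariant injective encoding upper-bounds the characteristic radius of the space --- is the only reading under which the statement is non-vacuous (under the identity map it fails, e.g.\ $\mathbf{X}=\{0,M\}$ has minimal gap $M>\log M$), and it is the reading consistent with the descriptive-length corollary the paper draws from the theorem.

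Two slippages should be recorded. First, the ``harmless translation'' is not free: shifting by $c=1-x_{1}$ gives a spread of $\log(x_{N}+c)\le\log(2\|\mathbf{X}\|_{\infty}+1)$, so what you actually prove is $r\le\log\|\mathbf{X}\|_{\infty}+O(1)$, which matches the paper's own $O(\log\|\mathbf{X}\|_{\infty})$ usage but not the bare constant-free display; relatedly, when $\|\mathbf{X}\|_{\infty}<1$ the stated bound is negative and no nonnegative radius can meet it, a defect of the theorem as stated that your normalization silently sidesteps --- state the needed hypothesis (e.g.\ $x_{1}\ge1$ already, or integer data as in Corollary 3.26) rather than absorbing it into a remark. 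Second, your anticipated obstacle about the log-floor truncation is misplaced for this theorem: the image is real-valued, no discretization occurs, and strict monotonicity of $\log$ already gives injectivity, so your condition (i) is automatic. The truncation worry is genuine only for the descriptive-length corollary, and there it is not merely delicate but unfixable for arbitrary reals: two elements at distance $10^{-m}$ with $m\gg\log\|\mathbf{X}\|_{\infty}$ cannot be separated by $O(\log\|\mathbf{X}\|_{\infty})$ digits, which is precisely why that corollary is restricted to $\mathbb{N}^{n}$. Dropping your third paragraph (or moving it to the corollary with integrality made explicit), and handling the trivial case $N=1$, leaves a correct and complete argument.
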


\begin{corollary}\label{main}
For some finite value $n$, any code $\mathbf{X}$ defined in ${{\mathbb{N}}^{n}}$ has a minimal descriptive length upper bounded by $O(\log \|\mathbf{X}{{\|}_{\infty }})$.
\end{corollary}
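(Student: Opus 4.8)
The plan is to derive Corollary \ref{main} directly from the preceding theorem, which bounds the characteristic radius of a finite discrete space in $\mathbb{R}$ by $\log\|\mathbf{X}\|_\infty$, and then to identify the minimal descriptive length of a code with the number of order-invariant symbols required to resolve its elements at that radius. Since $n$ is fixed and finite, it suffices to argue the bound coordinate-wise: a code $\mathbf{X}\subset\mathbb{N}^n$ projects onto $n$ finite discrete subsets of $\mathbb{R}$, each bounded in absolute value by $\|\mathbf{X}\|_\infty$, so the total descriptive length is at most $n$ times the per-coordinate bound and the constant $n$ is absorbed into the $O(\cdot)$.

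First I would fix an order-invariant isomorphic mapping $f$ of the $l$-code form of Section 2, so that the image lives in the log-prime factorial domain and the log-floor operator keeps only a logarithmic number of digits relative to the largest value. By construction this representation uses $O(\log\|\mathbf{X}\|_\infty)$ digits, each drawn from a fixed alphabet and therefore costing $O(1)$, so the descriptive length of a single element is $O(\log\|\mathbf{X}\|_\infty)$. Because $f$ is injective and order invariant in the sense of Definition \ref{fac}, distinct elements remain distinguishable under the $\underset{\mathcal{N}}{\neq}$ relation induced by the cut metric, so nothing is lost in passing to the encoded representation.

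Next I would use the preceding theorem to certify that this truncated encoding is in fact achievable, that is, that the minimal resolving distance between distinct images — the characteristic radius $r$ — satisfies $r\le\log\|\mathbf{X}\|_\infty$, and hence that no element requires more than the stated number of digits to separate it from its neighbors. Minimality then follows because any strictly shorter order-invariant encoding would push the image resolution below $r$, contradicting both the characteristic radius bound and the factorial-domain injectivity. Collecting the coordinate-wise estimates yields the claimed $O(\log\|\mathbf{X}\|_\infty)$ bound.

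The hard part will be making precise the equivalence between the characteristic radius — an extremal minimal distance in the image metric space — and the minimal descriptive length, an information-theoretic count of symbols. The bridge is the stated equivalence between isomorphic mappings and order-invariant injective mappings $g:f\to\mathbb{R}_+^{N\times N}$: one must verify that the log-floor truncation realizes the infimum over all such mappings without collisions, i.e.\ that retaining fewer than $O(\log\|\mathbf{X}\|_\infty)$ digits necessarily identifies two previously distinct codewords. Establishing this non-collision property rigorously, rather than the routine digit counting, is where the real work lies; once it is in place the bound is immediate, and the abstract's $O(\log\|\mathbf{X}\|_\infty\log\log N)$ estimate emerges by further accounting for the cost of manipulating each individual digit.
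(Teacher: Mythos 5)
The paper offers no proof of Corollary \ref{main}: it is stated bare after the characteristic-radius theorem, with the remark that these results ``will be further discussed and shown in subsequent sections,'' and those sections only restate related claims about $g$- and $l$-codes, themselves without proof. So your proposal must be judged on its own terms, and on those terms it contains a genuine directional error. The preceding theorem gives an \emph{upper} bound on the characteristic radius, $r=\min_{i\neq j} d\left(f(\mathbf{x}_i),f(\mathbf{x}_j)\right)\le \log\|\mathbf{X}\|_{\infty}$. Your key step reads this as a separation guarantee --- ``hence no element requires more than the stated number of digits to separate it from its neighbors'' --- but truncating an encoding to finitely many digits is collision-free only given a \emph{lower} bound on the minimal inter-point distance; an upper bound on $r$ says that distinct codewords may be close, which works against truncation, not for it. The non-collision property you yourself flag as ``where the real work lies'' cannot be extracted from the radius theorem in the direction you invoke it. What actually saves the statement is integrality, which your argument never uses: distinct points of $\mathbb{N}^{n}$ differ by at least $1$ in some coordinate, so the plain positional representation with $\lceil\log(\|\mathbf{X}\|_{\infty}+1)\rceil$ digits per coordinate is already injective, and with $n$ finite the total length $n\lceil\log(\|\mathbf{X}\|_{\infty}+1)\rceil$ is $O(\log\|\mathbf{X}\|_{\infty})$; the characteristic-radius machinery is unnecessary.

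Your closing ``minimality'' paragraph is also a category error. The corollary asserts an upper bound on the \emph{minimal} descriptive length, so it suffices to exhibit one injective encoding of the stated length; arguing that any strictly shorter order-invariant encoding must collide is a lower-bound assertion that is neither needed for the claim nor true in general (a highly structured $\mathbf{X}$ with enormous $\|\mathbf{X}\|_{\infty}$ can admit a description far shorter than $\log\|\mathbf{X}\|_{\infty}$ --- the corollary only bounds from above). Drop that paragraph, drop the appeal to the radius theorem, and ground the coordinate-wise digit count in the unit separation of $\mathbb{N}^{n}$; the coordinate-wise decomposition and absorption of the constant $n$ in your first paragraph are fine as stated, and with the separation argument repaired they constitute the entire proof.
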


These results will be further discussed and shown in subsequent sections. Also, Corollary \ref{main} can be seen as an information theoretical or algorithmic complexity bound on normed finite discrete spaces. Such bounds are directly proportional to the characteristic radius. A characteristic radius then qualifies how discrete a space is. It defines a compressibility coefficient and by extension a computability order.

\subsection{Distinguishability On The Line}
\begin{figure*}[t]
\centering
\includegraphics[height=5cm]{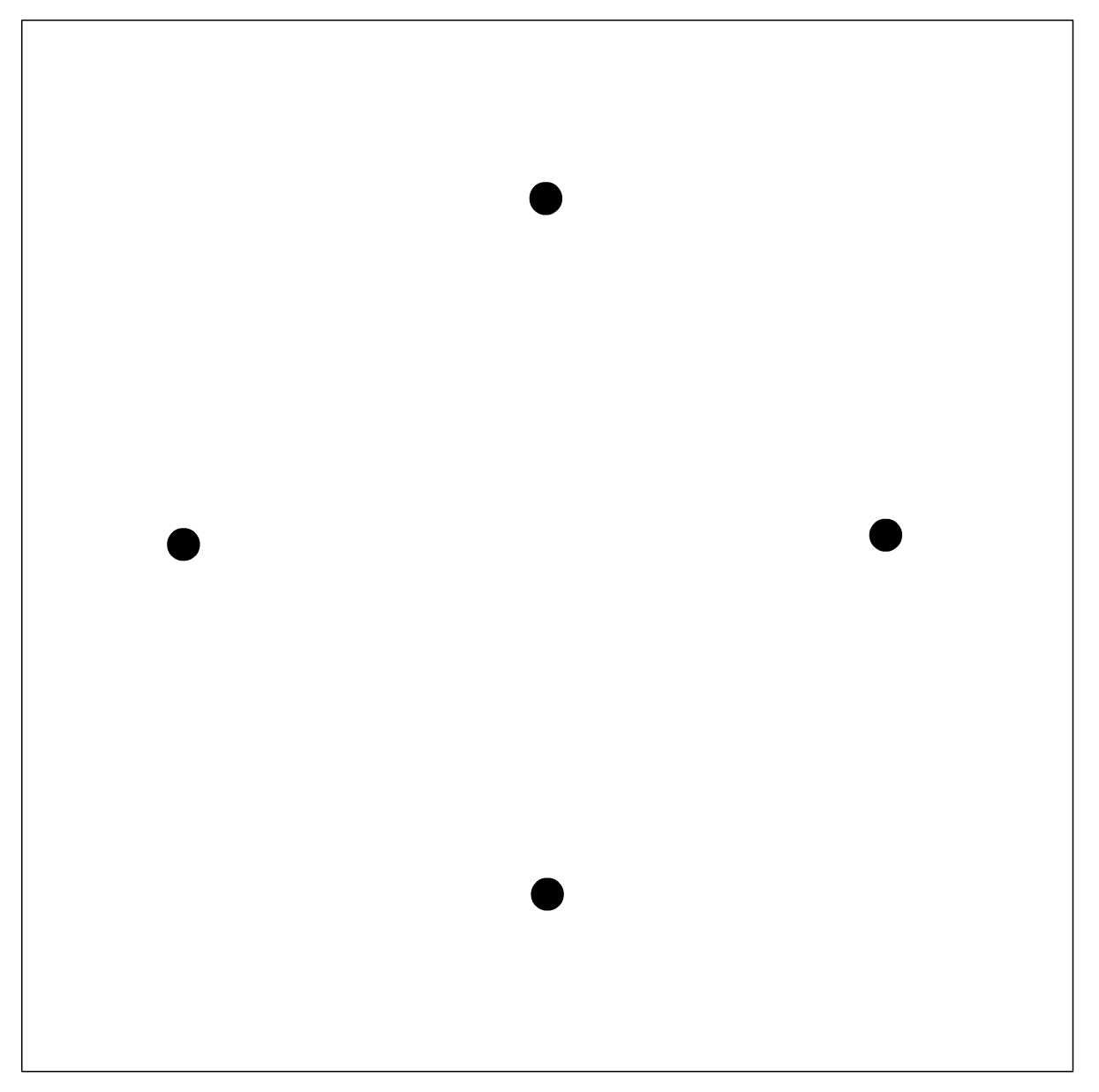}
\caption{Distinguishability on the line.}
\label{Dist}
\end{figure*}
Let $f$ define some function from some finite discrete space to a finitely computable subset of the real line. The distinguishability problem is then given equivalently to the derivation problem on the real line:
\begin{equation}\label{disting}
{{f}_{x}}=k 	
\end{equation}
whenever Equation \label{disting} generates a finite computable set.

It can be shown that the set of functions $f$ that satisfies the distinguishability problem are given by: 
\begin{equation}
{{f}_{x}}-f_{x}^{-1}=k
\end{equation}
whenever the equation is finite.

Given such a function $f$, reducing the complexity of the distinguishability problem can be done by finding the set of order preserving mappings of $f$ given by:
\begin{equation}
f{{\left. ^{-1}-g\circ {{f}^{-1}} \right|}_{f(x)}}{{=}^{{}}}0
\end{equation}
whenever $f$ is computable. 

\begin{definition}
Let $f$ be a norm such that \ref{disting} holds, then $f$ can be termed a distinguishing norm. 
\end{definition}

The definition of discreteness in terms of Equation \ref{disting} can then be restated as follows.

\begin{theorem}
Let $f:\mathbf{X}\to \mathbf{Y}$ be an asymptotically finite computable function, then $\mathbf{X}$ is a discrete space.
\end{theorem}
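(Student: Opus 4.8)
The plan is to verify the two constituents of Definition \ref{deffds}: that the cut metric of Equation \ref{cutd} sums to zero over the relevant space, and that the witnessing neighborhoods can be computed exactly in the sense of the preceding constructibility corollary. First I would unpack the hypothesis. An \emph{asymptotically finite computable} function is one for which the distinguishability condition of Equation \ref{disting}, namely $f_x=k$, generates a finite computable set in the limit; equivalently, $f$ behaves as a distinguishing norm whose induced separation on the line stabilizes to finitely many distinct values. This is precisely the regime in which the adjoint distance space admits an actual characteristic radius, and so the argument is really about transporting that radius back to $\mathbf{X}$.

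Second, I would extract a strictly positive, computable separation from this hypothesis. Because the distinguishability set is finite and computable, the quantity
\begin{equation}
r=\min_{\mathbf{x}_i\neq\mathbf{x}_j} d\left(f(\mathbf{x}_i),f(\mathbf{x}_j)\right)
\end{equation}
is attained and bounded away from zero; it is exactly the actual characteristic radius of $f$, which is computable on a bounded discrete space by the earlier result guaranteeing a computable characteristic radius. The asymptotic clause is what prevents this infimum from degrading to $0$ as $|\mathbf{X}|$ grows: since the distinguishing values are finite, only finitely many pairwise separations can occur, and the least of them is strictly positive.

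Third, I would build the neighborhoods and check the cut metric. For each element I take $\mathcal{N}(\mathbf{x}_i)$ to be the $f$-preimage of the open ball of radius $r/2$ centered at $f(\mathbf{x}_i)$. By the metric on the line together with $r>0$, distinct images lie at distance at least $r$, so these half-radius neighborhoods are pairwise disjoint, and disjointness in $\mathbf{Y}$ transfers to their preimages in $\mathbf{X}$. Hence $\mathcal{N}(\mathbf{x}_i)\cap\mathcal{N}(\mathbf{x}_j)=\varnothing$ for $i\neq j$, giving
\begin{equation}
\sum_{i,j} d(\mathbf{x}_i,\mathbf{x}_j)=0
\end{equation}
under the cut metric. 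Since $f$ is computable and $r$ is computable, each neighborhood is computable exactly in the classical sense, so the constructibility corollary applies and $\mathbf{X}$ is a discrete space, with its image serving as the representation.

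The hard part will be the second step: making the word \emph{asymptotically} carry its full weight. One must rule out the pathology of the transcendental example discussed earlier, in which $\mathcal{N}(\mathbf{x}_i)\cap\mathcal{N}(\mathbf{x}_j)=\varnothing$ becomes undecidable precisely because the separations accumulate at $0$. The finiteness of the distinguishability set is exactly the hypothesis that forbids such accumulation, so the crux is to argue rigorously that asymptotic finite computability forces Equation \ref{disting} to have a finite computable solution set, thereby pinning the characteristic radius strictly above zero and licensing the disjoint, exactly computable neighborhoods that the definition of discreteness demands.
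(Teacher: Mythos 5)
The paper offers no proof of this theorem at all: it is stated bare, followed only by the remark that the complexity of $f$ defines the computable radius of the discrete space, ``conceptually equivalent'' to the characteristic radius. So you are not matching or diverging from a written argument; you are supplying one, and your choice of ingredients is the natural reconstruction of what the surrounding text gestures at --- Equation \ref{disting} to unpack the hypothesis, the actual characteristic radius to extract a separation, disjoint half-radius preimage neighborhoods to make the cut metric of Equation \ref{cutd} vanish, and the constructibility corollary to conclude. Your closing paragraph also correctly identifies the transcendental example as the pathology the hypothesis is meant to exclude, which is consistent with how the paper positions that corollary.

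There is, however, a genuine gap in your second step. You define $r=\min_{\mathbf{x}_i\neq\mathbf{x}_j} d\left(f(\mathbf{x}_i),f(\mathbf{x}_j)\right)$ and assert it is attained and strictly positive, but positivity requires $f$ to be injective, and nothing in ``asymptotically finite computable'' as you have unpacked it supplies that: if $f(\mathbf{x}_i)=f(\mathbf{x}_j)$ for some $\mathbf{x}_i\neq\mathbf{x}_j$, then $r=0$, your half-radius balls coincide rather than separate, and the cut-metric sum cannot vanish. Worse, the finiteness clause you lean on actually pushes toward collapse: if $\mathbf{X}$ is infinite and the distinguishing values are finitely many, pigeonhole forces identification, and in $\mathbb{R}$ finitely many pairwise separations already force a finite image --- so your argument silently restricts to finite $\mathbf{X}$ with injective $f$. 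The paper has a device for exactly this, Definition \ref{fac} (factorial domain), which postulates $f(\mathbf{x}_i)\underset{\mathcal{N}}{\neq}f(\mathbf{x}_j)$ whenever $\mathbf{x}_i\neq\mathbf{x}_j$; you should either add that hypothesis explicitly or derive it from the distinguishing-norm condition, e.g.\ by requiring $k\neq 0$ in $f_x=k$. One smaller point, which is the paper's defect rather than yours but worth flagging: the sum in Definition \ref{deffds} runs over all pairs $i,j$ including $i=j$, where the cut metric returns $1$ whenever $\mathcal{N}(\mathbf{y}_i)\neq\varnothing$, so your claimed vanishing sum is correct only under the charitable reading that the sum ranges over distinct pairs.
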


The complexity of $f$ defines the computable radius of the discrete space which is conceptually equivalent to the characteristic radius.

\section{Polylogarithmic Reduction In Finite Space}

Let the g-number $\text{G}_{N}^{n}$ of a sequence of $N$ integers be defined on some factorial domain.

\begin{definition} 
A $g$-code denotes and by extension defines the set of reversible functions over some factorial domain.
\end{definition}

A $g$-code uniquely defines an arbitrary sequence and is equivalent to the set of discrete spaces that can be encoded on a classical computer. Other than its celebrated use in logic and theoretical computer science, the uniqueness property of $g$-codes has been exploited in hashing algorithms. From the perspective of information and quantity, g-numbers typically have large logarithmic norms.

\begin{example}
The problem of finding the minimal number of bits required to encode a given sequence of elements from some finite discrete space using $g$-codes is directly related to known information theoretical bounds. Also, polylogarithmic bounds can be specified. This is shown for the particular cases of $\text{G}_{N}^{1}$, $\text{G}_{N}^{8}$, ${{\text{G}}_{N}}^{16,20}$, ${{\text{G}}^{32}}_{N}$ and ${{\text{G}}^{64}}_{N}$ and their associated distance spaces. 
\end{example}

\begin{example}
The general problem of determining the compressibility of a sequence has been shown to be uncomputable. More specifically, the algorithmic complexity of a sequence can be reduced to the Halting problem. From an information theoretic point of view, some results show that random sequences cannot improve on the logarithmic scale. Provided specific encodings and distributions, optimal bounds can be effectively be computed. By extension, while exact lower-bounds are in general uncomputable, known upper-bounds have been provided in the literature on the subject.
\end{example}

\begin{example}
Polylogarithmic upper-bounds can be provided to the compressibility problem using integral prime $g$-codes. The encoding and decoding of $g$-codes requires the rapid resolution of the constrained prime factorization problem on reasonably integers while the problem of factoring a large integer on a logarithmic scale can be related to the problem of deciding efficiently the integral indeterminate polynomial equation given by:
\begin{equation}\label{dio}
x-{{2}^{a}}=0
\end{equation}
for some $x\in \mathbb{N}$ and $a\in {{\mathbb{R}}_{+}}$. Whenever $a$ is a rational, the problem can be solved trivially. The decision problem of Equation \ref{dio} is then to determine if an arbitrary power of two accepts an integer solution and corresponds to the integrals solutions of the base two function on a real domain.

Also, it is known that the general solvability of integer-valued indeterminate polynomial equations is an undecidable problem. Also, in its simplest expression given by Equation \ref{dio}, the problem is \emph{very} decidable.
\end{example}

An $l$-code is typically defined as follows:

\begin{definition}
The l-number of $\text{L}_{N}^{n}$ of a sequence of cardinality $N$ is usually defined as:
\begin{equation}
\text{L}_{N}^{{n}}=f(\text{G}_{N}^{n})	
\end{equation}
for some funtion $f$.
\end{definition}

In general, an $g$-code on some finite discrete space $\mathbf{X}$ can be defined as:
\begin{equation}\label{lcode2}
\text{L}_{N}^{{n}}=f(\text{G}_{N}^{n})
\end{equation}
such that $r(\text{L}_{N}^{{n}})=O\left( \log \|\text{G}_{N}^{n}\| \right)$ for some given norm.

\begin{theorem}
$\text{L}_{N}^{{n}}$ defines a finite discrete space equipped with a characteristic radius.
\end{theorem}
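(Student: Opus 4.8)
The plan is to verify the two assertions of the statement in turn: first that $\text{L}_N^n$ generates a finite discrete space in the sense of Definition \ref{deffds}, and then that this space carries a well-defined characteristic radius. I would begin from the representation $\text{L}_N^n = f(\text{G}_N^n)$ of Equation \ref{lcode2}, treating the $g$-code $\text{G}_N^n$ as the primitive object. Since a $g$-code is, by definition, a reversible function over a factorial domain, the factorial property of Definition \ref{fac} guarantees that distinct sequences are carried to cut-distinguishable images, i.e. $\mathbf{x}_i \neq \mathbf{x}_j$ forces $f(\mathbf{x}_i) \underset{\mathcal{N}}{\neq} f(\mathbf{x}_j)$. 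Reversibility in turn supplies, for each image point, a neighborhood $\mathcal{N}$ that can be computed exactly in the classical sense, so the hypothesis of the construction corollary on computable neighborhoods is met.

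With distinguishability in hand, I would check the defining condition $\sum_{i,j} d(\mathbf{y}_i,\mathbf{y}_j)=0$ of Definition \ref{deffds} against the cut metric of Equation \ref{cutd}. Distinct $l$-code images having pairwise disjoint neighborhoods means $\mathcal{N}(\mathbf{y}_i)\cap\mathcal{N}(\mathbf{y}_j)=\varnothing$ for $i \neq j$, so every off-diagonal cut distance vanishes and the sum collapses to zero. Because a sequence of cardinality $N$ yields only finitely many $l$-code values, the image set $\mathbf{Y}$ is finite, and therefore $f$ generates a \emph{finite} discrete space, completing the first half.

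For the characteristic radius, I would note that finiteness of $\mathbf{Y}$ together with the finite logarithmic norms of $g$-numbers makes the represented space bounded, so the earlier theorem on computable characteristic radii of bounded discrete spaces applies directly. Concretely, the characteristic radius is the minimum $r = \min d(f(\mathbf{x}_i),f(\mathbf{x}_j))$ over the finitely many distinct pairs; this minimum exists, is strictly positive by the distinguishability established above, and is computable since $d$ is. The scale of $r$ is then pinned by the $\log$-floor normalization built into Equation \ref{lcode2}, which forces $r(\text{L}_N^n)=O(\log\|\text{G}_N^n\|)$, consistent with the upper bound $\log\|\mathbf{X}\|_\infty$ for finite discrete subsets of $\mathbb{R}$.

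The main obstacle I anticipate lies precisely at the transition from the $g$-code to the $l$-code. The map $f$ incorporates the log-floor operator, which discards all but a logarithmic number of digits and is therefore lossy; the real content of the theorem is that this quantization does not destroy distinguishability, i.e. that the prime-power structure of the factorial domain keeps distinct codes separated by more than the quantization error even after taking logarithms and truncating. Establishing that the minimal gap survives compression --- equivalently, that the characteristic radius remains positive under $f$ --- is where the argument must do genuine work, and I would isolate it as a lemma bounding the truncation error below the minimal inter-code distance inherited from the reversibility of $\text{G}_N^n$.
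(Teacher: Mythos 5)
The paper contains no proof of this theorem: it is stated bare immediately after Equation \ref{lcode2}, and under the paper's own conventions it is close to definitional. A $g$-code is a reversible function over a factorial domain on a sequence of $N$ integers, so its image --- and hence the image of $\text{L}_N^n = f(\text{G}_N^n)$ --- is finite; any finite set whose pairwise cut relations (Equation \ref{cutd}) can be computed is a finite discrete space in the sense of Definition \ref{deffds}; and the clause $r(\text{L}_N^n)=O\left(\log\|\text{G}_N^n\|\right)$ in Equation \ref{lcode2} already builds the existence of a (bounded) characteristic radius into the definition of an $l$-code, with the earlier theorem on bounded discrete spaces supplying computability. Your unwinding of the definitions --- finiteness of the image, vanishing off-diagonal cut distances, boundedness yielding a computable radius --- is exactly this argument, and is in fact more careful than the paper, e.g.\ in correctly reading the sum condition of Definition \ref{deffds} as a condition on distinct pairs.

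The gap is the one you flagged yourself, and it is genuine rather than cosmetic: your middle paragraph asserts that $r=\min d\left(f(\mathbf{x}_i),f(\mathbf{x}_j)\right)$ is \emph{strictly positive by the distinguishability established above}, but what you established above is distinctness of the $g$-codes via Definition \ref{fac} --- distinctness \emph{before} the lossy log-floor truncation, not after. Since the paper defines the characteristic radius only for an \emph{isomorphic} mapping, injectivity of $f$ after truncation is not an optional refinement but a hypothesis the theorem silently needs, so deferring it to an unproven lemma leaves your proposal a plan rather than a proof, and the positivity claim as written is circular. Note also that the deferred lemma is not routine in the prime-log case: separating distinct truncated values of $\sum_i a_i \log p_i$ amounts to a lower bound on nonzero integer linear forms in logarithms of primes (Baker-type estimates), which neither your sketch nor the paper supplies; keeping only a logarithmic number of digits does not obviously dominate such gaps. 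A repair consistent with the paper's framing is to read Equation \ref{lcode2} as \emph{requiring} $f$ to retain enough digits that the truncation error stays below the minimal inter-code gap inherited from reversibility of $\text{G}_N^n$ --- making $r>0$ hold by stipulation --- but then the theorem should be presented as following from that stipulation, not derived from distinguishability of the untruncated codes.
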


\subsection{L-Codes}
From the perspective of compression, common results for $g$-codes can be discussed for ASCII, UTF, 32-bit encodings and 64-bit encodings.

\begin{theorem}
Any prime bitwise $g$-code has a decoding complexity of $O(N)$ using trial algorithms.
\end{theorem}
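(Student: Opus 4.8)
The plan is to read the statement through the encoding introduced earlier: a prime bitwise $g$-code is the restriction of the $g$-number construction $\text{G}_N^N(\mathbf{b}) = \prod_{j=1}^N p_j^{b_j}$ to bit exponents $b_j \in \{0,1\}$, where $p_1 < p_2 < \cdots < p_N$ is a fixed ordered sequence of primes. First I would record the structural fact that makes decoding possible: by the fundamental theorem of arithmetic the product $\prod_{j=1}^N p_j^{b_j}$ is a squarefree integer whose only possible prime divisors lie in $\{p_1,\ldots,p_N\}$, and the map $\mathbf{b} \mapsto \text{G}_N^N(\mathbf{b})$ is injective. This is precisely the reversibility demanded by the definition of a $g$-code, and it yields the pointwise characterization $b_j = 1 \iff p_j \mid \text{G}_N^N(\mathbf{b})$.

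Next I would exhibit the trial algorithm and argue its correctness. The decoder iterates over $j = 1,\ldots,N$ and performs a single divisibility test $p_j \mid G$ for the received code word $G$; it sets $b_j = 1$ when the test succeeds and $b_j = 0$ otherwise. Because the exponents are bits, no repeated division is ever needed to strip higher powers of a prime, so a single test settles each coordinate and correctness follows immediately from the characterization above together with squarefreeness. For the complexity bound I would then count operations in the unit-cost model adopted throughout the paper, under which a single arithmetic or divisibility decision is taken to be $O(1)$: the loop executes exactly $N$ times and performs one test per iteration, giving the stated $O(N)$ decoding complexity.

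The hard part, and the place where the statement needs care, is the cost model rather than the algorithm. The code word $G$ can be as large as the primorial $\prod_{j=1}^N p_j$, whose bit length is $\Theta(N \log N)$ by the prime number theorem, so in a bit-complexity model the individual divisibility tests are not free and the naive count would exceed $O(N)$. I would therefore make explicit that the $O(N)$ bound is relative to the same unit-cost assumption already invoked for the $O(1)$ partial-order decisions earlier in the paper, and contrast this with the general non-bitwise $g$-code, where extracting an exponent $a_j > 1$ requires repeatedly dividing by $p_j$ and the per-coordinate cost is no longer a single test. Tying the squarefree exponent restriction back to the logarithmic characteristic-radius bound of Corollary \ref{main} would round out why bitwise codes are exactly the regime in which one trial per prime suffices.
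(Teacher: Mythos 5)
The paper states this theorem without any proof at all --- it appears bare in the L-Codes subsection, immediately followed by the remark that a bitwise $g$-code is denoted $\text{G}_{N}^{2}$ --- so there is no argument of the paper's to compare yours against; your proposal supplies the missing proof rather than diverging from an existing one. On its own terms your argument is sound: with bit exponents the code word is a squarefree product of the first $N$ primes of the chosen ordered sequence, unique factorization gives both injectivity (the reversibility required by the paper's definition of a $g$-code) and the characterization $b_j = 1 \iff p_j \mid G$, and the trial decoder performs exactly one divisibility test per coordinate, for $O(N)$ tests. Your explicit handling of the cost model is the most valuable part and is exactly where the bare statement needs defending: since $G$ can be as large as $\prod_{j=1}^{N} p_j$, of bit length $\Theta(N \log N)$, the $O(N)$ claim only holds in the unit-cost model, and you correctly align this with the paper's other unit-cost conventions such as the $O(1)$ partial-order decision of Theorem \ref{oracle}; in a bit-complexity model the same trial algorithm would cost roughly $O(N^2 \log N)$. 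Your contrast with the general (non-bitwise) case, where exponents $a_j > 1$ force repeated division, also correctly isolates why the squarefree restriction is what makes a single trial per prime suffice. Two small points: the paper's notation for the bitwise code is $\text{G}_{N}^{2}$, not $\text{G}_{N}^{N}$ as you wrote, and the closing appeal to Corollary \ref{main} is decorative rather than load-bearing --- the descriptive-length bound there is about code size, not decoding time, so it can be dropped without weakening the proof.
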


A bitwise $g$-code can then be denoted $\text{G}_{N}^{2}$. In general, the following complexity result can be stated.

\begin{theorem}
Given some basis $f(N)\in {{\mathbb{K}}^{N}}$, a $g$-code $\text{G}_{N}^{n}$ has a size complexity upper-bounded by $O(\log \|f(N)\|\|n{{\|}_{\infty }})$.
\end{theorem}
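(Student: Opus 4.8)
The plan is to read the size complexity directly off the multiplicative structure of the $g$-code and then linearize it by taking logarithms. Following the worked case $\text{G}_3^3(\pi_i(a_1,a_2,a_3))=p_1^{a_{i,1}}p_2^{a_{i,2}}p_3^{a_{i,3}}$ and the definition of a $g$-code as a reversible product over a factorial domain, the g-number of a sequence with respect to the basis $f(N)=(f(N)_1,\ldots,f(N)_N)\in\mathbb{K}^N$ is the G\"odel-style product
\begin{equation}
\text{G}_N^n=\prod_{i=1}^N f(N)_i^{\,n_i},
\end{equation}
where the components $n_i$ are the symbols of the encoded sequence, so that $\|n\|_\infty=\max_i n_i$ is the largest exponent. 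The \emph{size complexity} of the code is its descriptive length, i.e.\ the number of digits needed to write down the g-number, which is $\Theta(\log\text{G}_N^n)$; this is the quantity to be bounded.

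First I would take logarithms, turning the product into a sum,
\begin{equation}
\log\text{G}_N^n=\sum_{i=1}^N n_i\log f(N)_i.
\end{equation}
Next I would replace every exponent by the largest one, $n_i\le\|n\|_\infty$, and factor it out,
\begin{equation}
\log\text{G}_N^n\le\|n\|_\infty\sum_{i=1}^N\log f(N)_i=\|n\|_\infty\log\prod_{i=1}^N f(N)_i.
\end{equation}
Identifying the product $\prod_{i=1}^N f(N)_i$ with the multiplicative norm $\|f(N)\|$ of the basis vector gives $\log\text{G}_N^n\le\|n\|_\infty\log\|f(N)\|$, which is exactly the asserted $O(\log\|f(N)\|\,\|n\|_\infty)$ bound. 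Converting from the natural logarithm to a fixed radix (as in the log-floor truncation of the $l$-code construction) only rescales by a constant, which is absorbed into the $O(\cdot)$.

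The main obstacle is not the arithmetic but fixing the norm conventions so that the two norms in the statement are the intended ones. Here $\|n\|_\infty$ must be the sup-norm of the exponent (symbol) vector, while $\|f(N)\|$ must be read multiplicatively as $\prod_i f(N)_i$, since it is precisely this reading that makes $\log\|f(N)\|=\sum_i\log f(N)_i$ telescope against the additive bound above; under an ordinary componentwise norm the identity would fail by a factor growing with $N$. A secondary check is that both sides are measured in the same units (bits versus digits), which affects only the hidden constant and is consistent with the descriptive-length estimate $O(\log\|\mathbf{X}\|_\infty)$ of Corollary \ref{main} and with the radius bound $r(\text{L}_N^n)=O(\log\|\text{G}_N^n\|)$ for $l$-codes. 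Once these identifications are recorded, the inequality follows immediately from the two displayed lines.
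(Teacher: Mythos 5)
The paper states this theorem bare, with no proof at all, so there is no argument of the paper's to compare yours against; your reconstruction has to be judged on its own terms. On those terms the arithmetic is fine: writing $\text{G}_N^n=\prod_{i=1}^N f(N)_i^{\,n_i}$, taking the descriptive length to be $\Theta(\log \text{G}_N^n)$ (consistent with Corollary \ref{main}), and bounding $\sum_i n_i\log f(N)_i\le\|n\|_\infty\sum_i\log f(N)_i$ is the natural and essentially the only available route. But note that both of the load-bearing steps are interpretive choices rather than deductions, and one of them fails against the paper's own conventions. First, as you yourself flag, the bound closes only if $\|f(N)\|$ is read multiplicatively as $\prod_i f(N)_i$; this is not a norm under the paper's own definition (homogeneity fails, since scaling the basis by $k$ scales the product by $k^N$), and under any legitimate norm, say $\|f(N)\|_\infty$, the same computation gives $O(N\,\|n\|_\infty\log\|f(N)\|_\infty)$, where the factor $N$ is unavoidable (take the prime basis with all exponents equal). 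So what you have proved is a correct statement adjacent to the theorem, not the theorem under its stated hypotheses; flagging this was the right instinct, and the defect is arguably in the statement.

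The second choice you do not flag, and it is the more serious gap. Everywhere else in the paper the superscript $n$ in $\text{G}_N^n$ is the bit-width of the alphabet, not the vector of encoded symbols: $\text{G}_N^1$, $\text{G}_N^8$ for ASCII, $\text{G}_N^{16,20}$ for UTF, $\text{G}_N^{32}$ and $\text{G}_N^{64}$, and the bitwise case is explicitly denoted $\text{G}_N^2$. Under that convention the exponents in the Gödel product are the symbol values, bounded by $2^{\|n\|_\infty}-1$ rather than by $\|n\|_\infty$, and your computation then yields only $O\bigl(2^{\|n\|_\infty}\log\|f(N)\|\bigr)$ --- exponentially weaker than the claimed bound. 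Your reading of $n$ as the exponent vector rescues the inequality but contradicts the paper's usage; reconciling the two would require first flattening the $N$ symbols into $N\|n\|_\infty$ bits and using a bitwise code with exponents in $\{0,1\}$, which changes the length of the basis and hence the meaning of $\|f(N)\|$, and you would need to redo the bookkeeping to check the stated bound survives. As it stands, your proof is internally consistent but proves the theorem only under a pair of nonstandard readings, one acknowledged and one silent; the silent one is where a referee would push back.
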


An $l$-code on ASCII codes can be denoted as $\text{L}_{N}^{8}$. The equivalent $l$-codes for UTF, 32-bit encodings and 64-bit encodings are respectively $\text{L}_{N}^{(16,20)}$, $\text{L}_{N}^{32}$ and $\text{L}_{N}^{64}$.

\begin{example}
Let an $l$-code $\text{L}_{N}^{n}$ of length $N$ on an n-dimensional integer code be defined as in Equation \ref{lcode2} for $f$ defining the logarithmic and inverse functions, a typical ordering of the $l$-code is then given by, for $N=kn$:

\begin{equation}
\begin{split}
({{a}_{1}},\ldots ,{{a}_{n}}),(({{a}_{1}},\ldots ,{{a}_{n}})+1)\bmod N,\ldots ,\\
(({{a}_{1}},\ldots ,{{a}_{n}})+N/n-1)\bmod N
\end{split}
\end{equation}
\end{example}

\begin{example}\label{lat}
The ordering of the linear $l$-codes $\text{L}_{N}^{16}=\sum{{{a}_{i}}\log {{p}_{i}}}$ and $\text{L}_{N}^{16}=\sum{a_{i}^{-1}\log {{p}_{i}}}$ are given by:
\begin{equation}      
\begin{split}                             
 (b_{1},\ldots ,b_{16})=&(1,17,33,49,65,81,97,113,129,145, \\
&161,177,193,209,225,241) 
\end{split}
 \end{equation}
\begin{equation}  
\begin{split}                               
(b_{1},\ldots ,b_{16})=&(241,225,209,193,177,161,145,129, \\
&113,97,81,65,49,33,17,1) 
\end{split}
\end{equation}

An order curve is then defined as the ordering of some factorial domain function on an indicator function. The order curve of $\text{L}^n_N(\mathcal{I})$ is shown in Figure \ref{orderc}.
\end{example}	

Example \ref{lat} defines an oriented lattice. Similar orders can be defined in $\|L_{N}^{n}-0{{\|}_{\infty }}$. The lattices of Example \ref{lat} have orientation 1  and -1 respectively.

\subsection{Jackson-Sheridan-Tseitin Transforms}
Let $\mathbf{S}$ be a finite discrete space defining a Jackson-Sheridan-Tseitin transform \cite{jackson:cnf,tseitin:cnf} such that ${{s}_{1}}<\ldots <{{s}_{n}}$ and:
\begin{equation}
\mathbf{S}'=\mathcal{I}\left( \mathbf{S}\otimes {{\mathbf{1}}_{{{2}^{n}}}} \right)	
\end{equation}
It can be shown that the order set of $\text{L}_{N}^{n}(\mathcal{I})$  generates  a  quasi-lattice  for $N={{2}^{n}}$. Let the quasi-lattice $\text{L}_{N}^{n}(\mathcal{I})$ be denoted $f(x)$ and let the following order curve be defined as $\Delta {{[f]}_{i}}(x)$ for some constant $h=k$. Then the following can be shown:
\begin{equation}
\Delta [f]_{i}(x)={{2}^{i}}+{{2}^{i+1}}(i-1)	
\end{equation}
Or alternatively:
\begin{equation}
i\equiv k\bmod {{2}^{k}}
\end{equation} 	

Let $\text{S}_{N}^{n}$ denote the order curve on $\text{L}_{N}^{n}(\mathcal{I})$ such that:
\begin{equation}
\text{S}_{N}^{n}=\text{L}_{N}^{n}(\mathbf{X})=\text{L}_{N}^{n}\left( \mathcal{I}\left( \mathbf{A}\otimes {{\mathbf{1}}_{{{2}^{N}}}} \right) \right)
\end{equation}
$\text{S}_{N}^{n}$ can be defined as follows:
\begin{equation}\label{func}
\begin{split}
\text{S}_{N}^{n} &= \sum_{i=1}^{N}\left\lfloor a_i f(s_i)\log p_i \right\rfloor_{\log}\\
 &= \sum_{i=1}^{N}\left\lfloor a_i \log p_i  \right\rfloor_{\log}+\Delta\text{S}^n_N(\mathcal{I})
\end{split}
\end{equation}
for some function $f({{s}_{i}})$, ${{a}_{i}}\in \{0,1\}$, ${{s}_{i}}\in \mathbf{S}$ and a logarithmic code floor function. $p_i$ then defines an arbitrary set of ordered primes.

\begin{remark}\label{constant}
A set of good constants are such that $f(s_i)\approx1$. In general, a set of \emph{good} constants are such that $f(s_{i,j,\ldots})\approx k_{i,j,\ldots}$ such that $\lvert \textbf{k}\rvert$ is bound (Equation \ref{dist}).
\end{remark}

Following this remark, whenever $f(s_i)\approx1$, the order of $\Delta \text{S}_{N}^{n}$ is equal to the order of $\text{L}_{N}^{n}$ for some l-function such that:
\begin{equation}
r\left( \Delta \text{S}_{N}^{n} \right)=o\left( \|\text{L}_{N}^{n}(\Delta \text{S}_{N}^{n}){{\|}_{\infty }} \right) 
\end{equation}	
This is denoted: 
\begin{equation}
\text{Or}(\Delta \text{S}_{N}^{n})=\text{Or}(\text{L}_{N}^{n})
\end{equation}
\begin{theorem}\label{oracle}
An ordered bounded space of cardinality $2^n$ can be searched in $O(n \log 2)$ whenever partial order can be decided in $O(1)$.
\end{theorem}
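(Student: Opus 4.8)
The plan is to reduce the search to a binary search that is governed by the order topology rather than by an explicit sort. First I would invoke the order-curve construction of the preceding subsection: a bounded ordered space $\mathbf{X}$ of cardinality $2^n$ is represented through $\text{Or}(\text{L}_N^n(\mathcal{I}))$ with $N=2^n$ and $\mathcal{I}=(1,\ldots,2^n)_2$, so that the partial order on the elements coincides with a linear ordering on the indicator function. Boundedness guarantees a finite characteristic radius, so the ordered interval enclosing the space is finite and the midpoint of any subinterval is well-defined.

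Next I would observe that, because the order is known through the order curve, the element at a prescribed rank inside a candidate interval can be located in $O(1)$ (as noted earlier, a point of $\text{Or}(\text{L}_N^n(\mathcal{I}))$ is computable in $O(1)$ when the order is known), and, by hypothesis, deciding the partial order between the probe element and the target costs $O(1)$. Hence a single comparison step---locate the midpoint, decide its order relative to the target, and discard the half that cannot contain the target---costs $O(1)$.

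I would then bound the number of steps by the standard halving recurrence. Starting from $2^n$ candidates, each comparison halves the surviving interval, so after $k$ steps at most $2^{n-k}$ candidates remain; the search terminates once a single candidate is isolated, i.e. after $k=n$ steps. Multiplying the $n$ steps by the $O(1)$ cost per step gives total cost $O(n)$, and writing the logarithm of the cardinality in the natural base, $\log 2^n = n\log 2$, yields the stated bound $O(n\log 2)$---exactly the $O(\log N)$ binary-search complexity for $N=2^n$ with each comparison made constant-time.

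The main obstacle is justifying that no preprocessing sort is required. A naive binary search presupposes a sorted array, and sorting $2^n$ elements would already cost $\Omega(2^n)$, destroying the bound. The essential content of the theorem is therefore that the order topology supplies the ordering implicitly: under the compact-order condition $\Delta\text{Or}/\Delta\text{L}=1$ the map from a rank to the element at that rank on the $\text{S}_N^n$ order curve is computable in $O(1)$, so the algorithm never reads or sorts the underlying space. Making this rank-to-element access rigorous is where the work lies; the remaining halving argument is then routine.
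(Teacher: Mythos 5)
Your proposal is correct in substance, but it routes the bisection through rank space, whereas the paper's argument (the paper prints no proof for this theorem; its intended argument is the one sketched in the abstract and the surrounding text, namely binary search ``without resorting to sort'' with overall cost $O(\log R)$ for the radius $R$) bisects the \emph{bounded value range} directly. The difference matters exactly at the point you flag as your main obstacle: your version probes the element at the median rank of the surviving interval, so you must justify $O(1)$ rank-to-element access via the order curve $\text{Or}(\text{L}_N^n(\mathcal{I}))$ and the compact-order condition, importing machinery (a known order, $\Delta\text{Or}/\Delta\text{L}=1$) that is not in the theorem's hypotheses. The paper's formulation avoids this entirely: since the space is bounded, one bisects the interval of values itself, taking the \emph{arithmetic} midpoint of the current interval $[a,b]$ (a quantity computed from the bounds, not read from the space) and invoking only the hypothesized $O(1)$ partial-order decision to discard half the range; after $\log R = \log 2^n = n\log 2$ halvings the target value, hence the target, is isolated. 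This is why the theorem assumes ``bounded'' rather than ``indexed'' or ``rank-accessible,'' and it is the cleaner reading of ``search without read.'' Your rank-based variant does buy something --- it terminates in exactly $n$ comparisons and returns a rank, which is useful when the order curve is explicitly available as in the $\text{L}_N^n$ examples --- but as a proof of the theorem as stated, you should either add rank access as an explicit hypothesis or replace the median-rank probe with the arithmetic midpoint probe, after which your halving recurrence goes through verbatim and the rank-to-element difficulty disappears.
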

\begin{theorem}\label{permord}
Let $\text{Or}(\text{L}_{N}^{n})=\text{Or}(\text{S}_{N}^{n})$ on some function $f$ defined as in Equation \ref{func}, and let the $l$-code $\text{S}_{N}^{n}$ be defined such that the following holds:
\begin{equation}
\text{Or}(\text{L}_{N}^{n}(\pi {{(0,1,2,3,\ldots {{2}^{n}}-1)}_{2}}))=\text{Or}(\text{S}_{N}^{n}) 		
\end{equation}
A set of constants $f({{c}_{i}}{{s}_{i}})={{c}_{i}}f({{s}_{i}})$ can be found with space complexity ${{c}_{i}}=O(\log \|\mathbf{S}{{\|}_{\infty }}+n\log k)$.
\end{theorem}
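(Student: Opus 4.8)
The plan is to establish the statement in two movements: first reduce the order-equivalence hypothesis $\text{Or}(\text{L}_N^n(\pi(0,1,\ldots,2^n-1)_2)) = \text{Or}(\text{S}_N^n)$ to the existence of multiplicative constants that render the per-term weights homogeneous, and then bound the descriptive length of those constants by means of the characteristic-radius estimate of Corollary \ref{main}. First I would expand $\text{S}_N^n$ through Equation \ref{func}, writing each code word as $\sum_i \lfloor a_i f(c_i s_i)\log p_i \rfloor_{\log}$ with $a_i \in \{0,1\}$ and $p_i$ the ordered primes. Invoking the homogeneity relation $f(c_i s_i) = c_i f(s_i)$ posited in the statement, the $i$-th weight collapses to $c_i f(s_i)\log p_i$, so that the ordering of the code words is governed entirely by the products $c_i f(s_i)$ measured against the fixed prime weights $\log p_i$. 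Matching $\text{Or}(\text{S}_N^n)$ to the canonical order of $\text{L}_N^n$ (the order carried by the permuted binary sequence) then forces $c_i f(s_i)$ into the \emph{good constant} regime of Remark \ref{constant}, that is $f(c_i s_i) \approx 1$, or more generally $\approx k_i$ with $\lvert \mathbf{k} \rvert$ bounded.

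Next I would construct the constants explicitly. By homogeneity, the choice $c_i = 1/f(s_i)$ (where $f(s_i)\neq 0$) yields $f(c_i s_i) = c_i f(s_i) = 1$ exactly, whence $\text{S}_N^n = \sum_i \lfloor a_i \log p_i \rfloor_{\log} = \text{L}_N^n$ term by term and the required order equivalence holds trivially. The residual content is the space-complexity estimate. Since each $c_i$ is the reciprocal of $f(s_i)$ with $s_i \in \mathbf{S}$, we have $\lvert s_i \rvert \leq \lVert \mathbf{S} \rVert_\infty$; the characteristic-radius theorem for finite discrete spaces in $\mathbb{R}$ then furnishes a representation of $s_i$, and hence of $f(s_i)$ and its reciprocal, of length $O(\log \lVert \mathbf{S} \rVert_\infty)$, precisely the descriptive-length bound asserted by Corollary \ref{main}. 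The additive $n\log k$ term I would attribute to the $n$-coordinate structure of the code: to maintain the approximation $f(c_i s_i) \approx k_i$ uniformly within the bound $\lvert \mathbf{k} \rvert$ across all $n$ components, each coordinate demands $\log k$ digits of precision, summing to $n\log k$ and giving the stated $c_i = O(\log \lVert \mathbf{S} \rVert_\infty + n\log k)$.

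The hard part will be verifying that the order survives the log-floor truncation $\lfloor \cdot \rfloor_{\log}$, which retains only logarithmically many digits, rather than merely the exact weights. The danger is that truncation to the budget $O(\log \lVert \mathbf{S} \rVert_\infty + n\log k)$ could collapse or transpose two adjacent code words. I would therefore bound the characteristic radius $r = \min d(\text{L}_N^n(\mathbf{x}_i),\text{L}_N^n(\mathbf{x}_j))$ from below by the minimal prime separation $\min_i \lvert \log p_{i+1} - \log p_i \rvert$ over the ordered primes, bound the worst-case truncation error from above by the digit budget, and show the former strictly dominates the latter. Reconciling these two quantities — confirming that the precision is simultaneously large enough to preserve order invariance and small enough to meet the claimed complexity — is the crux; once the gap dominates the error the constants are witnessed within the stated bound and the theorem follows.
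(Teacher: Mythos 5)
There is a genuine gap, and it sits exactly at the point you call ``trivial.'' Your construction picks $c_i = 1/f(s_i)$ so that $f(c_i s_i) = c_i f(s_i) = 1$ \emph{exactly}, which collapses $\text{S}_N^n$ term by term onto $\text{L}_N^n$. But that realizes only the identity order: the hypothesis of the theorem is $\text{Or}(\text{L}_{N}^{n}(\pi(0,1,2,\ldots,2^{n}-1)_{2}))=\text{Or}(\text{S}_{N}^{n})$ for a \emph{permuted} sequence, and the entire content of the theorem is that the constants $c_i$ must encode that permuted order within the stated space budget. With exact reciprocals there is no degree of freedom left to steer the ranking, so your constants witness the theorem only when $\pi$ is the identity. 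The paper's own construction (Equation \ref{constants}) keeps the reciprocal but multiplies it by an index-dependent perturbation,
\begin{equation*}
c_i=\left\lfloor\left(\text{L}^2_{K+M}(\textbf{S}_i)\right)^{-1}\left(1+\frac{i}{O((K+M)^{\lfloor \log 10 \rfloor})}\right)\right\rfloor_{\log},
\end{equation*}
so that $c_i\text{L}^2_{K+M}(\textbf{S}_i)\approx 1$ holds only in the loose sense of Remark \ref{constant} and Equation \ref{function} (which deliberately reads $(\approx,\ge)\ 1$, not $=1$); the small monotone offsets are what reorder the code words to match the target permutation. The two terms of the bound then split naturally: $O(\log\|\mathbf{S}\|_\infty)$ pays for the reciprocal (consistent with Corollary \ref{main}, as you correctly identify), while $n\log k$ is the precision the perturbations need to resolve distinct ranks among the roughly $k^n$ (or $2^n$) code words — not, as you suggest, $\log k$ digits per coordinate to maintain a uniform approximation.

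Your closing paragraph does flag the right technical obstruction — that the log-floor truncation must not transpose adjacent code words — and your plan of comparing the minimal gap (you propose $\min_i\lvert\log p_{i+1}-\log p_i\rvert$; the characteristic radius $r$ of the space is the more natural quantity) against the truncation error is the right shape of argument. But you leave it as a program rather than carrying it out, and in your exact-reciprocal setup there is nothing for it to protect: all perturbations are zero, so the question of whether perturbation-induced order survives truncation never arises. To repair the proof, replace the exact choice by a perturbed reciprocal in the style of Equation \ref{constants}, verify that the perturbations are strictly monotone in the target rank and separated by more than the truncation error at $O(\log\|\mathbf{S}\|_\infty + n\log k)$ digits, and only then invoke the gap-versus-error comparison you sketch.
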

\begin{corollary}
Let $\mathbf{X}=\{{{\mathbf{x}}_{1}},\ldots ,{{\mathbf{x}}_{N}}\}$ be some finite discrete space with order $\text{Or}(\mathbf{X})$. An order preserving function is such that for ${{\mathbf{F}}_{i,j}}=f({{\mathbf{x}}_{i}})$, there exists $f:\mathbf{F}\to {{\mathbf{K}}^{N}}$ such that $\text{Or}(f(\mathbf{X}))=\text{Or}(\text{L}^n_N(\mathcal{I}))$.	
\end{corollary}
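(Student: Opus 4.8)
The plan is to realize the given order $\text{Or}(\mathbf{X})$ as a reshaping permutation of the canonical order curve of an $l$-code, and then to invoke Theorem \ref{permord} to supply the constants that force the two orders to coincide. First I would observe that, since $\mathbf{X}$ is a finite discrete space of cardinality $N$, the order $\text{Or}(\mathbf{X})$ is merely a permutation of the compact order; by the relation $\text{Or}_1(\mathbf{Y})=\pi(\text{Or}(\mathbf{Y}))$ established in Section 2 there is a permutation $\pi$ with $\text{Or}(\mathbf{Y})=\pi^{-1}(\text{Or}_1(\mathbf{Y}))$, applied here to $\mathbf{Y}=\mathbf{X}$. The matrix $\mathbf{F}$ with rows $\mathbf{F}_{i,\cdot}=f(\mathbf{x}_i)$ is the object I will construct, and the target is a map $f:\mathbf{F}\to\mathbf{K}^N$ whose image carries the order curve of $\text{L}^n_N(\mathcal{I})$.

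Next I would set the dimension $n$ and take the indicator function $\mathcal{I}=(1,\ldots,2^n)_2$ of Equation \ref{lcode} so that $N=2^n$, and feed the permutation $\pi$ obtained above into the hypothesis $\text{Or}(\text{L}^n_N(\pi(0,1,\ldots,2^n-1)_2))=\text{Or}(\text{S}^n_N)$ of Theorem \ref{permord}. That theorem then produces a set of good constants $c_i$, in the sense of Remark \ref{constant}, with space complexity $c_i=O(\log\|\mathbf{S}\|_\infty+n\log k)$, for which $f(c_is_i)=c_if(s_i)$ and the order curve $\text{S}^n_N$ agrees with $\text{L}^n_N$. Composing the logarithmic-prime map $\sum_i\lfloor a_i\log p_i\rfloor_{\log}$ with these constants yields the candidate $f$: because the coefficients $a_i$ are nonnegative and the summands $\lfloor\log p_i\rfloor_{\log}$ are monotone in the prime index, $f$ is order preserving, while the reshaping permutation $\pi_{i,j,k\ldots}$ realigns its order curve onto $\text{Or}(\mathbf{X})$.

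Finally I would verify that the construction stays inside a finite discrete space equipped with a characteristic radius, so that $f$ is well-defined and computable. This is where the characteristic-radius bound $r=O(\log\|\mathbf{X}\|_\infty)$ of Corollary \ref{main} enters: the $l$-code image has minimal descriptive length on that logarithmic scale, which guarantees that distinct indices remain distinguishable and that $\text{Or}(f(\mathbf{X}))$ is a genuine order rather than a degenerate collapse. The main obstacle is precisely the surjectivity of $l$-code order curves onto the full set of attainable orders: a single curve $\text{L}^n_N(\mathcal{I})$ is an oriented quasi-lattice with rigid internal structure, so I do not expect an arbitrary target to be reachable by relabelling primes alone. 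The resolution I would pursue is to exploit the three degrees of freedom simultaneously, namely the choice of ordered primes $p_i$, the good constants $c_i$ furnished by Theorem \ref{permord}, and the reshaping permutation $\pi_{i,j,k\ldots}$, and to argue that together they generate enough of the symmetric group on the $N$ indices to hit $\text{Or}(\mathbf{X})$ while keeping $c_i$ within the stated logarithmic space bound.
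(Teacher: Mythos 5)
The paper offers no written proof of this corollary: it is placed immediately after Theorem \ref{permord} and is meant as a direct consequence of it, so your choice of ingredients --- the relation $\text{Or}_1(\mathbf{Y})=\pi(\text{Or}(\mathbf{Y}))$ from Section 2 together with the constants supplied by Theorem \ref{permord} --- is the intended route. The difficulty is that your attempt does not actually close. Your final paragraph names what you call the main obstacle (whether $l$-code order curves can realize an arbitrary target order) and then merely announces that you would ``argue'' that the ordered primes $p_i$, the good constants $c_i$, and the reshaping permutation $\pi_{i,j,k\ldots}$ together generate enough of the symmetric group on the $N$ indices. No such argument is given, none is sketched, and nothing in the paper supplies one; as written, the proposal has a genuine gap at exactly the decisive step.

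Moreover, the obstacle you flag comes from reading the statement in the wrong direction. The corollary does not ask the family of $l$-code order curves to be surjective onto the set of all orders of $\mathbf{X}$; it asks for a single order-preserving $f$ with $\text{Or}(f(\mathbf{X}))=\text{Or}(\text{L}^n_N(\mathcal{I}))$, i.e.\ the image is required to carry the one \emph{fixed} canonical curve. Since $\text{Or}(\mathbf{X})$ and $\text{Or}(\text{L}^n_N(\mathcal{I}))$ are both total orders on finitely many elements, there is a unique order isomorphism between them --- send the $k$-th element of $\mathbf{X}$ under $\text{Or}(\mathbf{X})$ to the $k$-th point of the order curve --- so existence at the level of sets is automatic, and the only substantive content is that this transport can be realized inside the code within the logarithmic space budget. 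That is precisely what Theorem \ref{permord} furnishes once you take $\mathbf{S}$ to be the ordered listing $s_1<\ldots<s_N$ of $\mathbf{X}$ and absorb the Section 2 permutation into its hypothesis $\text{Or}(\text{L}^n_N(\pi(0,1,\ldots,2^n-1)_2))=\text{Or}(\text{S}^n_N)$: the resulting constants, with $f(c_is_i)=c_if(s_i)$ and $c_i=O(\log\|\mathbf{S}\|_\infty+n\log k)$ in the sense of Remark \ref{constant}, define the entries $\mathbf{F}_{i,j}=f(\mathbf{x}_i)$ and hence the required $f:\mathbf{F}\to\mathbf{K}^N$. Your appeal to Corollary \ref{main} for computability of the image is harmless but tangential; the symmetric-group generation programme should simply be dropped, since it is both unproven and unnecessary.
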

\begin{remark}
Let $(\mathbf{S}/T,1)$ be an arbitrary Jackson-Sheridan-Tseitin transform with solutions ${{\mathcal{I}}_{s}}$. A general form for distance equations that retrieves a set ${{(\text{L}_{N}^{n})}_{s}}$ is:
\begin{equation}
\bigcup\limits_{i=1}^{n}{\delta (q={{A}_{i}})}{{:}}\ d(q,{{B}_{i,j}})=1\
\end{equation}
such that $j\ne i $ and $j\in {{\mathcal{I}}_{s}}$. And:
\begin{equation}\label{dist}
\bigcup\limits_{i=1}^{n}{\delta (q={{A}_{i}})}{{:}}\ d(q,f({{B}_{i,j}},{{C}_{i,j,k}},\ldots ,{{D}_{i,j,\ldots }}))=1
\end{equation}
such that $\ldots \ne j\ne i $ and $ j,\ldots \in {{\mathcal{I}}_{s}} $. Whenever a set $\textbf{c}_{i,j}$ of constants can be found such that Remark \ref{constant} or Theorem \ref{permord} hold, the solution is said to be efficient.
\end{remark}
\begin{figure*}[t]
\centering
\includegraphics[height=6.5cm]{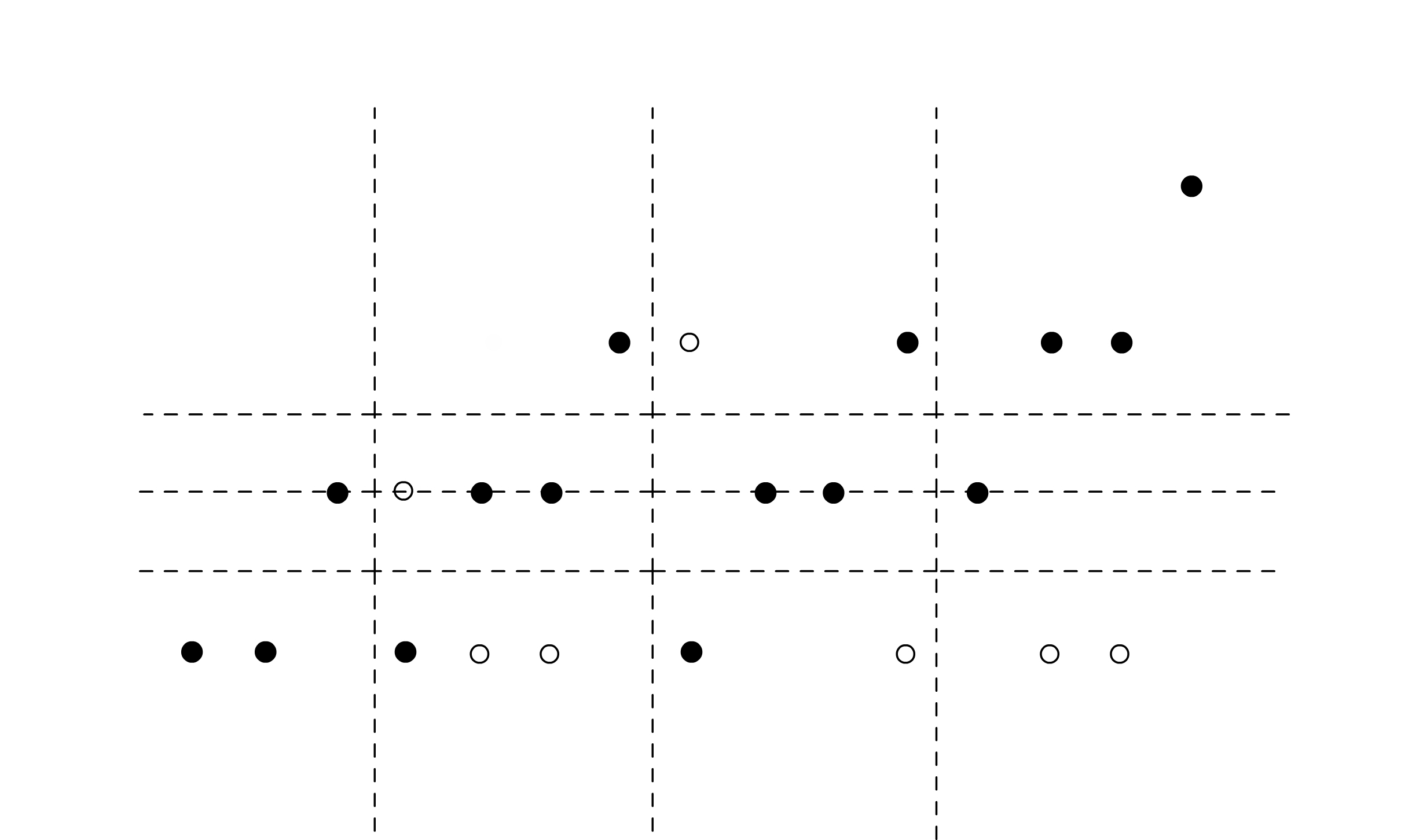}
\caption{Order curve of $L^n_N(\mathcal{I})$.}
\label{orderc}
\end{figure*}
\begin{example}
Starting from the $\lVert\textbf{S}\rVert_0+\lVert\textbf{S}\rVert_1=$ $O(2KM$ $(K+M))$ Jackson-Sheridan-Tseitin representation, the problem can then be stated as such \cite{jackson:cnf,tseitin:cnf}: 

\begin{equation}
\mathbf{S}'=\mathcal{I}\left( \text{L}^2_{K+M}(\mathbf{S})\otimes {{\mathbf{1}}_{{{2}^{2KM}-1}}} \right)
\end{equation}
such that $s_{i,j}\in\{0,1\}$ and $\textbf{s}_i\in\textbf{S}$. Three $l$-codes can be constructed. The first one, on the decimal binary representation of $\textbf{S}$:
\begin{equation}
\text{L}^2_{K+M}(\textbf{S})=\sum_{i=1}^{K+M}\left\lfloor\mathcal{I}_i \log p_i\right\rfloor_{\log}
\end{equation}
A second one on $f(\textbf{S}')$:
\begin{equation}
\text{L}^{2}_{2KM}\circ f\circ\text{L}^2_{K+M}(\textbf{S}_i)=\sum_{i=j}^{2KM}\left\lfloor \mathcal{I}_ic_i\text{L}^2_{K+M}(\textbf{S}_i)\log p_i\right\rfloor_{\log}
\end{equation}
such that:
\begin{equation}\label{function}
c_i\text{L}^2_{K+M}(\textbf{S}_i)\ (\approx,\ge)\ 1
\end{equation}
This satisfies Remark \ref{constant} and a candidate set of constants $c_i$ can then be given as:
\begin{equation}\label{constants}
\left\lfloor\left(\text{L}^2_{K+M}(\textbf{S}_i)\right)^{-1}\left(1+\frac{i}{O((K+M)^{\lfloor \log 10 \rfloor})}\right)\right\rfloor_{\log}
\end{equation}
A third $l$-code on $\textbf{S}'$ is given by:
\begin{equation}
\text{L}^{2KM+1}_{2KM(K+M)}(\textbf{S}')=\sum_{i=1}^{K+M}\left\lfloor k_i\log p_i,\ \ k_i\in[0,k] \right\rfloor_{\log}
\end{equation}
In practice, rational-valued polylogarithmic functions can be used.

The following properties hold on the order curves:
\begin{equation}\label{prop1}
\text{Or}\left(\text{L}^2_{K+M}(\textbf{S})\right) \subset \text{Or}\left(\text{L}^n_N(\mathcal{I})\right) 
\end{equation}
\begin{equation}\label{prop2}
\text{Or}\left(\text{L}^2_{2KM}(f(\textbf{S}')\right)=\text{Or}\left(\text{L}^2_{K+M}(\textbf{S})\right)
\end{equation}
\begin{equation}\label{prop3}
\text{Or}\left(\text{L}^{2KM+1}_{2KM(K+M)}\right) = \textbf{A}/ \textbf{B}
\end{equation}
such that $\textbf{A}=\{0,\ldots,\prod_{i=1}^{2KM} p_i^{k}\}$ and $\textbf{B}$ are the numbers factoring to at least a $p_j$ such that $j\neq i$. The following cardinalities can then be computed:
\begin{equation}\label{prop4}
\begin{split}
\lvert\textbf{A}\rvert=(K+M+1)^{K+M}\\
\lvert\textbf{B}\rvert=n-(K+M+1)^{K+M}
\end{split}
\end{equation}

The distribution of $\text{Or}\left(\text{L}^{2KM+1}_{2KM(K+M)}\right)$ behaves as follows:
\begin{equation}\label{dirac}
\lim_{K+M\rightarrow\infty}\frac{\Delta\text{Or}\left(\text{L}^{2KM+1}_{2KM(K+M)}(\textbf{S}')\right)}{\Delta\text{L}^{2KM+1}_{2KM(K+M)}(\textbf{S}')}=1
\end{equation}
Given Equation \ref{dirac}, as the problem grows larger, the sparsity of the order curve decreases. Furthermore, the distributions of finite differences has a factorial growth complexity, It can then be said that the space complexity of the associated $l$-code is given by: 
\begin{equation}
\left\lvert\text{L}^{2KM+1}_{2KM(K+M)}(\textbf{S}'_i)\right\rvert = O(k\log\lVert\textbf{p}\rVert_\infty)
\end{equation}

Following this, a general form of an associated distance equation is given by:
\begin{equation}\label{solutionJST}
\bigcup_{i=1}^{2KM}\delta({\textbf{q}=A_i}):\ \textbf{q}-B_{i,j}=T
\end{equation}
An actual equation can be given by:
\begin{equation}
\begin{split}
\bigcup_{i\in\mathcal{I}_s}& \delta(q=\sum_{i=1}^{K+M}\left\lfloor a_i\log p_i \right\rfloor_{\log}):\ q+\sum_{\substack{j\neq i\\ j\in\mathcal{I}_s}}\sum_{k=1}^{K+M}\left\lfloor a_{j,k}\log_{p_{j,k}}\right\rfloor_{\log}\\
&\approx\sum_{i=1}^K\left\lfloor \log p_{i} \right\rfloor_{\log}+3\sum_{i=K+1}^{M}\left\lfloor \log p_{i} \right\rfloor_{\log}
\end{split}
\end{equation}
Alternatively, in $l$-code notation:
\begin{equation}
\begin{split}
\bigcup_{i\in\mathcal{I}_s}\delta(q):\ q+\mathcal{I}\text{L}^{2KM+1}_{2KM(K+M)}(\mathcal{I})=T
\end{split}
\end{equation}
Since the zero distance never occurs, Equation \ref{solutionJST} can be termed a distance equation. The approximation factor can be set to equality using Theorem \ref{code}.
\begin{property}
A totient function can be given to count the elements in a prime or log-prime order:
\begin{equation}
\phi(n,\textbf{p})=n\prod_{p_i\in\textbf{p}}\left(1-\frac{1}{p_i}\right)
\end{equation}
such that $gcd(n,p_i)\neq 1$.
\end{property}
In practice, arithmetic progressions or some arbitrary ordered sequence of primes are sufficient, or alternatively, a set of $p$-adic numbers.
\end{example}
This result can be generalized to factorial domains as defined in Definition \ref{fac}.
\begin{property}
The following on ordered primes holds:
\begin{equation}\label{order}
\sum_{i\in\mathcal{I}} \log p_i < \sum_{j\in\mathcal{J}} \log p_{j},\ \ \mathcal{I}\subset\mathcal{J}
\end{equation}
\end{property}
\begin{remark}\label{sigmoid}
Let the following sequence of iterates be defined on the order given by Equation \ref{prop3}:
\begin{equation}
f_i=x,\ \ x\in\text{Or}\left(\text{L}^{2KM+1}_{2KM(K+M)}\right)
\end{equation}
It can be shown that $\log_i \log_i f_i $ is approximately an inverse sigmoidal function. 
\end{remark}
\begin{theorem}
$\text{Or}(\text{L}_{N}^{n}(\mathcal{I}))$ can be computed using $O(\log N)$ recursions.
\end{theorem}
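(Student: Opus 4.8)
The plan is to produce a recursive descriptor of the order curve whose recursion tree has depth $O(\log N)$, rather than to enumerate its $2^N$ points. First I would exploit the additive structure of Equation \ref{lcode}: splitting the ordered prime basis $\{p_1,\dots,p_N\}$ into a low block $\{p_1,\dots,p_{\lfloor N/2\rfloor}\}$ and a high block $\{p_{\lfloor N/2\rfloor+1},\dots,p_N\}$, every code value factors additively, so that $\text{L}^n_N(\mathcal{I}_i)$ is the sum of a low-block code and a high-block code. This exhibits each point of the domain as a pair drawn from the two half-domains, and reduces the construction of $\text{Or}(\text{L}^n_N(\mathcal{I}))$ to combining the two half-orders, each of which is an instance of the same problem on $N/2$ primes.

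Next I would show that the combination step is fixed by the doubling structure and not by the particular prime values, so that it contributes only a constant to each level of the recursion. Here I would invoke Property \ref{order}: under subset inclusion the log-prime sums are strictly monotone, which pins down the coarse ordering of the merged curve, while the reshaping permutation $\pi_{i,j,k\dots}$ realizes the interleaving so that the compact-order condition of Equation \ref{compact} holds for the merged $l$-code. The closed-form finite differences of the quasi-lattice established in the Jackson--Sheridan--Tseitin subsection (the relation $i\equiv k \bmod 2^k$) give the per-level increment pattern explicitly, so the merge is read off a formula rather than recomputed.

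With the recursion established, the depth count is immediate: each level halves $N$, so after $O(\log N)$ levels the block reduces to a single prime, whose order is the trivial two-point chain $0<\lfloor\log p\rfloor_{\log}$. Unwinding yields the full order curve, and because partial order decides in $O(1)$ by Property \ref{order}, locating any rank within this descriptor stays inside the same depth bound, matching Theorem \ref{oracle} specialized to the present quasi-lattice.

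The main obstacle is the combination step. For arbitrary ordered primes the pairwise sums of the two half-codes interleave in a data-dependent fashion, so the merged order is \emph{a priori} not the lexicographic product of the half-orders, and a naive merge would cost $2^{N/2}$ work rather than a constant. The crux is therefore to argue that the \emph{good constants} of Remark \ref{constant}, together with the iterated-logarithm flattening of Remark \ref{sigmoid}, force the interleaving pattern to be determined by the level index alone; this is the step where the $O(\log N)$ bound is genuinely at stake, and where I expect the argument to require the most care.
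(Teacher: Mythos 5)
The paper states this theorem bare, with no proof at all, so there is nothing to match your attempt against except the surrounding apparatus; your halving recursion of depth $O(\log N)$ is a plausible reconstruction of what the quasi-lattice formulas ($\Delta[f]_i(x)=2^i+2^{i+1}(i-1)$, $i\equiv k\bmod 2^k$) hint at. But the step you yourself flag as the crux is not merely delicate — it fails as stated for ``some arbitrary sequence of ordered primes.'' Take the low block $\{p_1,p_2\}$ and high block $\{p_3,p_4\}$ and compare the codes $(1,0,0,1)$ and $(0,1,1,0)$, i.e.\ $\log p_1+\log p_4$ against $\log p_2+\log p_3$: for $(2,3,5,7)$ this is $\log 14<\log 15$, while for $(2,3,5,11)$ it is $\log 22>\log 15$. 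Same level index, same block decomposition, opposite interleaving — so the merge permutation cannot be ``read off a formula'' determined by the level alone. Equation \ref{order} is of no help here: it orders only subset-comparable pairs $\mathcal{I}\subset\mathcal{J}$, whereas the pairs that decide the merge are exactly the incomparable ones, about which it is silent. Likewise Remark \ref{constant} ($f(s_i)\approx 1$) and Remark \ref{sigmoid} (``approximately an inverse sigmoidal'') are approximate statements; an approximation cannot pin down an exact total order, which is what $\text{Or}$ is.

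A counting observation shows the gap is structural, not a missing trick. If, as you propose, each of the $O(\log N)$ levels performs a merge that is value-independent (fixed by the level index), the whole procedure can emit only $\mathrm{poly}(N)$ distinct order descriptors, yet the number of orderings of the $2^N$ subset sums of $\{\log p_1,\ldots,\log p_N\}$ realizable by varying the primes is superpolynomial — the two four-prime examples above already exhibit the dependence. So value-dependence must enter at every level, and once it does you must cost out the data-dependent interleavings, which your own estimate puts at $2^{N/2}$ per merge — exactly the cost the recursion was supposed to avoid. The two available repairs are: restrict the theorem to the special case the paper actually computes, namely the indicator quasi-lattice with $N=2^n$, where self-similarity does fix the difference pattern and your doubling argument goes through; or let the log-floor truncation collapse the troublesome comparisons into ties — but that destroys the injectivity of the factorial domain (Definition \ref{fac}) on which the $(g,l)$-codes rely. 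As written, the step you deferred to ``most care'' is the entire content of the theorem, and nothing in the paper supplies it.
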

\begin{theorem}
On a logarithmic factorial prime domain, the space complexity of a finite discrete space $\textbf{X}$ is upper bounded by:
\begin{equation}
O(\log\lVert\textbf{X}\rVert_\infty)
\end{equation}
\end{theorem}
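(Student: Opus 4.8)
The plan is to reduce this factorial-domain statement to the real-line bound already established in Corollary \ref{main}, by routing $\textbf{X}$ through an explicit $l$-code embedding. First I would use the uniqueness property of $g$-codes to encode the finite discrete space as prime-power products, assigning to each element a single integer $\text{G}_N^n = \prod_i p_i^{a_i}$ over an ordered set of primes $\textbf{p}$; since $\textbf{X}$ is finite and discrete this encoding is well-defined in the sense of Definition \ref{fac}. Applying the logarithmic transform of Equation \ref{lcode2} then lands the space in a finitely computable subset of the real line, with each element represented as $\text{L}_N^n = \sum_i \lfloor a_i \log p_i \rfloor_{\log}$, so that the whole space now lives in the logarithmic factorial prime domain rather than in the raw integer domain.

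The central step is to control the characteristic radius of this image. I would invoke the relation $r(\text{L}_N^n) = O(\log \lVert \text{G}_N^n \rVert)$ together with the strict subset-sum inequality for ordered primes in Equation \ref{order}, which guarantees that distinct sub-collections of primes map to strictly separated $l$-code values. This separation shows the $l$-code is order preserving and that the minimal gap between distinct images is bounded below, so the characteristic radius of the image is at most $\log \lVert \textbf{X} \rVert_\infty$. Feeding this radius into Corollary \ref{main} yields a minimal descriptive length, and hence a space complexity, of $O(\log \lVert \textbf{X} \rVert_\infty)$, which is precisely the claimed bound lifted from $\mathbb{R}$ to the factorial domain.

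The hard part will be controlling the log-floor truncation $\lfloor \cdot \rfloor_{\log}$: I must verify that retaining only a logarithmic number of digits relative to the largest code value never collapses two distinct elements, i.e. that the truncation error stays strictly below the characteristic radius computed above. This is where Equation \ref{order} does the real work, since the strictness of $\sum_{i\in\mathcal{I}} \log p_i < \sum_{j\in\mathcal{J}} \log p_j$ for $\mathcal{I} \subset \mathcal{J}$ forces a minimal separation that survives truncation. Establishing a \emph{uniform} lower bound on this separation across all element pairs, rather than merely a pointwise one, is the step that requires the most care, and it is the point at which the generalization of Corollary \ref{main} to factorial domains actually bites.
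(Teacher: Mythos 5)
The paper states this theorem with no proof at all --- the only support is the surrounding remark that the upper bound on the finite differences of integral prime $l$-codes ``provides a bound on the space complexity of the coding'' --- so there is no official argument to match yours against; judged on its own, your proposal has two genuine gaps. First, Equation \ref{order} cannot do the work you assign to it. It compares only \emph{nested} index sets $\mathcal{I}\subset\mathcal{J}$, whereas the pairs you must keep separated after the log-floor truncation are generically \emph{incomparable} subsets of primes, about which Equation \ref{order} is silent; and the true pairwise gaps can be tiny (e.g.\ $\lvert \log 210-\log 209\rvert < 5\cdot 10^{-3}$ with $210=2\cdot 3\cdot 5\cdot 7$ and $209=11\cdot 19$), shrinking like $1/\lVert \text{G}_N^n\rVert$, so no fixed separation ``survives truncation'' automatically. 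The uniform bound you flag as the step requiring the most care in fact has a short argument you never give: distinct $g$-codes are distinct \emph{integers} bounded by $\lVert\text{G}_N^n\rVert$, so their logarithms differ by at least $\log\left(1+1/\lVert\text{G}_N^n\rVert\right)$, whence retaining $O(\log\lVert\text{G}_N^n\rVert)$ digits in $\left\lfloor\cdot\right\rfloor_{\log}$ is injective. The separation comes from integrality and unique factorization, not from Equation \ref{order}; as written, your central step fails.

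Second, the norms do not match. Your chain delivers $r(\text{L}_N^n)=O\left(\log\lVert\text{G}_N^n\rVert\right)$, but the theorem is stated in terms of $\lVert\textbf{X}\rVert_\infty$, and $\log\lVert\text{G}_N^n\rVert$ is not $O(\log\lVert\textbf{X}\rVert_\infty)$: encoding even a single element $x$ by binary digits on the first $\log_2 x$ primes gives $\log \text{G}(x)=\sum_{a_i=1}\log p_i = O(\log x\,\log\log x)$ since $p_k\sim k\log k$, and a $g$-code of the whole sequence is larger still, growing with $N$. Note that the paper's abstract keeps exactly this extra factor in its bound $O(\log\lvert\textbf{X}\rvert_\infty\log\log N)$; your write-up silently drops it, so at best you prove a weaker statement than claimed unless the truncation is shown to absorb the factor. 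Finally, invoking Corollary \ref{main} risks circularity relative to the paper's structure: the paper explicitly defers the proof of the radius results (``These results will be further discussed and shown in subsequent sections''), namely to the section containing the present theorem; and in any case the $l$-code image is a subset of $\mathbb{R}$ built from sums of logarithms, not a code in $\mathbb{N}^n$, so Corollary \ref{main} does not apply to it as stated without first discretizing --- which is again the truncation step whose injectivity you left open.
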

\begin{theorem}\label{code}
A code of cardinality $k$ in some distance space is reversible if it provides a $k$-partition.
\end{theorem}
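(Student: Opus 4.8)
The plan is to reduce reversibility to injectivity of the encoding map and then extract that injectivity from the $k$-partition hypothesis by a counting argument. By the definition of a $g$-code as the set of reversible functions over a factorial domain, together with Definition \ref{fac}, a code with encoding $f$ is reversible precisely when $f$ is injective in the cut-metric sense: $\textbf{x}_i \neq \textbf{x}_j$ implies $f(\textbf{x}_i) \underset{\mathcal{N}}{\neq} f(\textbf{x}_j)$, and $\textbf{x}_i = \textbf{x}_j$ implies $f(\textbf{x}_i) \underset{\mathcal{N}}{=} f(\textbf{x}_j)$. So the entire task is to verify this distinguishability predicate on the $k$ codewords.

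First I would make the partition explicit. By the earlier corollary, any distance equation on a finite discrete space defines a set of intervals, and the adjoint $\delta$-equation ranging over its admissible parameters carves the distance space into disjoint cells $P_1,\ldots,P_m$; the hypothesis that the code provides a $k$-partition is exactly the assertion that this partition is proper with $m=k$. Each codeword, being a point of the distance space, lies in exactly one cell, so $f$ induces a map from the $k$ codewords to the $k$ cells, and membership in a cell is decided by the distance equation it solves.

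The central step is a contrapositive counting argument. Suppose $f$ is not reversible; then some pair $\textbf{x}_i \neq \textbf{x}_j$ satisfies $f(\textbf{x}_i) \underset{\mathcal{N}}{=} f(\textbf{x}_j)$, which places both codewords in a common cell. But then the $k$ codewords occupy at most $k-1$ distinct cells, so the induced partition has fewer than $k$ parts, contradicting the $k$-partition hypothesis. Hence $f$ is injective, the map from codewords to cells is a bijection, and the decoding rule that returns, for each point, the unique codeword occupying the cell it satisfies inverts $f$. This establishes reversibility.

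The main obstacle I anticipate is the faithful translation between the abstract partition cells and the cut-metric equality $\underset{\mathcal{N}}{=}$ of Definition \ref{fac}: one must show that lying in a common cell coincides with cut-metric equality for the neighborhood $\mathcal{N}$ induced by the partition, and that each cell is genuinely non-empty so the partition is proper and the counting step is valid. Tied to this is the requirement, from the corollary characterizing finite discrete spaces, that cell membership be decidable in the classical sense, so that the recovered inverse is an effectively computable decoder rather than merely an abstract bijection; the remaining verifications are then routine.
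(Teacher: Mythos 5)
You should know at the outset that the paper never proves Theorem \ref{code}: it is stated bare, like most assertions in this report, so there is no official argument to compare yours against; your proposal can only be judged against the paper's definitions. Within that framework your overall skeleton is the natural one, and it matches what the theorem is presumably meant to encode: reversibility is injectivity in the cut-metric sense of Definition \ref{fac}, and the $k$-partition hypothesis is what rules out collisions, with decoding realized by locating the part a point satisfies.

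There is, however, a genuine gap in your central counting step. From ``some pair $\mathbf{x}_i \neq \mathbf{x}_j$ lies in a common cell'' you conclude ``the $k$ codewords occupy at most $k-1$ distinct cells, so the induced partition has fewer than $k$ parts.'' That inference is invalid as stated: a $k$-partition may have parts containing no codeword at all, so the codewords occupying only $k-1$ cells is perfectly compatible with the partition still having $k$ parts. The condition you flag at the end --- that each cell be non-empty --- does not repair this, because a cell can be non-empty without containing a codeword; what the argument actually requires is that the parts be \emph{indexed by} the codewords, i.e.\ that every part is the solution set of the adjoint distance equation anchored at one of the $k$ codewords, which is how the paper sets things up when it declares that a distance equation ``defines a $k$-partition of the real numbers.'' Under that reading the proof becomes direct rather than occupancy-based: if $f(\mathbf{x}_i) \underset{\mathcal{N}}{=} f(\mathbf{x}_j)$ with $\mathbf{x}_i \neq \mathbf{x}_j$, the two anchored equations have identical solution sets, so the family of parts provided by the code has at most $k-1$ distinct members, contradicting the hypothesis that the code provides a $k$-partition; injectivity in the sense of Definition \ref{fac} follows, the codeword-to-part correspondence is a bijection, and decoding by part membership (decidable by the cut-metric corollary you cite) inverts the code. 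The repair is small, but the pigeonhole step as you wrote it does not go through.
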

The order on the $l$-codes of some indicator function can be computed without the knowledge of the associated $l$-codes. The indicator function defines a space of complexity $O(2^n)$ or $O(n^n)$. The complete search can then be done in approximately $O(n\log 2)$ and $O(n\log n)$ respectively. A relaxation of the problem only requires the knowledge of the number of elements before and after a given element in the $l$-code. This can be done in $O(1)$. The upper-bound provided on the finite difference of the class of integral prime $l$-codes then provides a bound on the space complexity of the coding.

\subsection{Further Remarks on g-Codes and l-Codes}
\subsubsection{Iterated Encodings}
\begin{definition} 
Let $n$ be the length of a code, $N$, the number of codes in a (g,l)-code, $i$, the number of iterations and $j$ the number of (g,l)-codes in a given iteration, an iterated encoding is then recursively defined as follows:
\begin{equation}                                                   {}_{j}^{i}\text{G}_{N}^{n}={}_{{{j}_{1}}}^{{{i}_{1}}}\text{G}_{{{N}_{1}}}^{{{n}_{1}}}\circ \bigcup{{}_{{{j}_{2}}}^{{{i}_{2}}}\text{G}_{{{N}_{2}}}^{{{n}_{2}}}\circ \ldots \circ \bigcup{{}_{{{j}_{k}}}^{{{i}_{k}}}\text{G}_{{{N}_{k}}}^{{{n}_{k}}}}}	
\end{equation}
And similarly:
\begin{equation}	                                                   {}_{j}^{i}\text{L}_{N}^{n}={}_{{{j}_{1}}}^{{{i}_{1}}}\text{L}_{{{N}_{1}}}^{{{n}_{1}}}\circ \bigcup{{}_{{{j}_{2}}}^{{{i}_{2}}}\text{L}_{{{N}_{2}}}^{{{n}_{2}}}\circ \ldots \circ \bigcup{{}_{{{j}_{k}}}^{{{i}_{k}}}\text{L}_{{{N}_{k}}}^{{{n}_{k}}}}}
\end{equation}
\end{definition}	
Iterated encodings preserve the complexity of a single iteration whenever the number of iterations is $O(1)$. This is denoted:
\begin{equation}
ECDC\left( {}_{j}^{i}\text{G}_{N}^{n} \right)=O\left( ECDC\left( \text{G}_{O(\lVert\textbf{N}\rVert_\infty)}^{O(\lVert \textbf{n} \rVert_\infty)} \right) \right)	
\end{equation}
\begin{figure*}
\centering
\includegraphics[height=19cm]{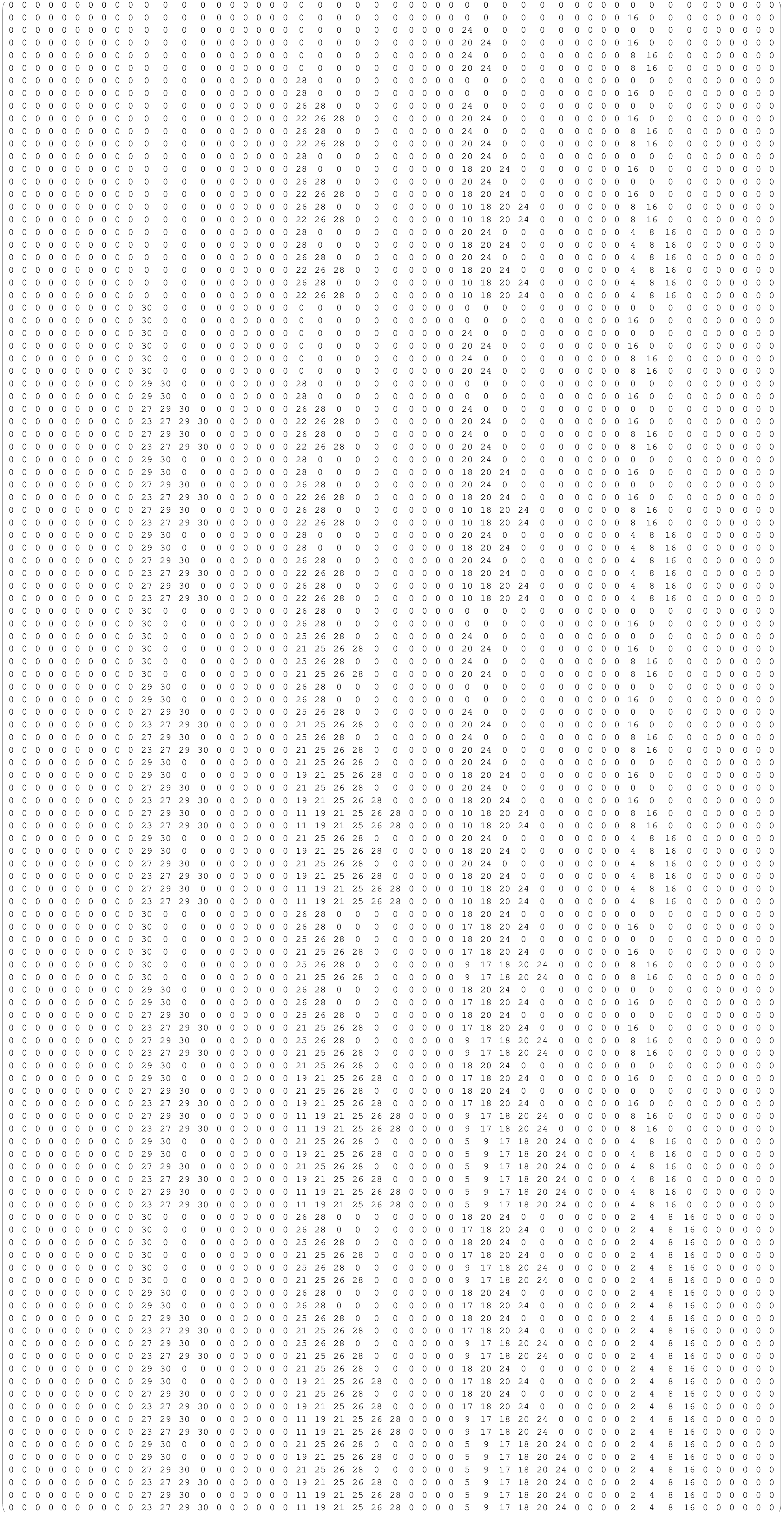}
\caption{Matrix values of $f(\pi(1,2,3,4,5))$.}
\label{Numbers}
\end{figure*}

\subsubsection{Limiting Encodings}
An infinite sequence is incompressible using $g$-codes, which follows the classic argument in algorithmic complexity that shows that the problem of defining such a code is in general uncomputable. The logarithmic upper-bound of $l$-codes follows the classic information theoretical argument. 
	
The basic results of algorithmic complexity can be restated as such for codes and sequences.
\begin{theorem}
An infinite code may or may not have a finite algorithmic encoding.
\end{theorem}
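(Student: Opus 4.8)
The plan is to read the statement as a disjunction of two existence claims and to prove each separately: first that there \emph{exists} an infinite code admitting a finite algorithmic encoding, and second that there \emph{exists} an infinite code admitting no such encoding. Exhibiting one witness of each kind suffices, since the theorem asserts only that both situations are possible, not that either is universal.

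First I would settle the positive half by direct construction. Any infinite code produced by a fixed recursive rule possesses, by definition, a finite algorithmic encoding, namely the program realizing that rule. A concrete witness is the order curve $\text{Or}(\text{L}_N^n(\mathcal{I}))$: by the earlier theorem it is computable using $O(\log N)$ recursions, so the entire infinite family is generated by a single finite description independent of $N$. Equivalently, a constant code or the binary expansion of any rational number furnishes an infinite code whose algorithmic encoding is finite. This establishes the ``may have'' case.

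Next I would settle the negative half by a counting argument. Over a fixed alphabet the set of finite algorithmic encodings is in bijection with a subset of the finite strings and is therefore countable, whereas the set of infinite codes over a nontrivial alphabet is uncountable by the diagonal argument. Consequently the assignment of codes to encodings cannot be surjective onto the space of codes, so some infinite code has no finite algorithmic encoding. More explicitly, the uncomputable number $\Omega \in [0,1]$ of \cite{calude:omega} is a direct witness: its binary expansion is an infinite code which, being incompressible in the sense already noted for $g$-codes, admits no finite algorithmic encoding.

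The main obstacle is not mathematical depth, since both halves are classical, but the bookkeeping required to align the paper's notion of a \emph{code} (a $g$-code or $l$-code over a factorial domain) with the abstract notion of an algorithmic encoding, so that the cardinality count and the incompressibility of $\Omega$ transfer verbatim. Once the encoding map is fixed as the assignment of a code to a shortest generating program, the countable-versus-uncountable dichotomy closes the negative case, and the recursive witness above completes the positive one.
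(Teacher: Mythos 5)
Your proposal is correct, and it takes the approach the paper itself intends: the paper states this theorem without any proof, presenting it as a restatement of "basic results of algorithmic complexity," and your two-witness argument (a computable infinite sequence such as $\text{Or}(\text{L}_N^n(\mathcal{I}))$ or a constant code for the positive half; the countability-versus-diagonalization argument, with $\Omega$ as an explicit witness, for the negative half) is exactly the classical argument being invoked. You have in effect supplied the missing details rather than diverged from the paper, and your closing caution about aligning the paper's loosely specified notion of a code with the standard notion of an algorithmic encoding is the right thing to flag, since the paper never fixes that correspondence itself.
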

\begin{theorem} 
The number of infinite codes that have a finite algorithmic encoding is finite.
\end{theorem}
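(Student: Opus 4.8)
The plan is to reduce the assertion to a counting argument over the set of finite algorithmic encodings. The essential point is that a finite algorithmic encoding is by definition a finite program, i.e.\ a finite string over a finite alphabet $\Sigma$. First I would fix, for each infinite code $\mathbf{X}$ that admits such an encoding, its minimal encoding $\kappa(\mathbf{X})$ --- the shortest generating program, ties broken lexicographically --- so that $\kappa$ is single-valued. Since the generating procedure is deterministic, each program produces at most one infinite code; hence distinct codes have distinct minimal programs and $\kappa$ is injective into $\Sigma^{*}$. Establishing this injection is the clean half of the argument and follows directly from the dichotomy of the preceding theorem together with determinism.

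The delicate half is passing from countability to genuine finiteness. Without any restriction, $\kappa$ maps into all of $\Sigma^{*}$, which is only countable, so to obtain a \emph{finite} bound I would invoke the logarithmic descriptive-length estimates already established. By Corollary \ref{main}, any code carried by a normed finite discrete space has minimal descriptive length $O(\log\|\mathbf{X}\|_{\infty})$, and the characteristic-radius and compactness results bound this quantity uniformly by some finite $L$. Every admissible encoding then lies in the set $\{w\in\Sigma^{*}:|w|\le L\}$, whose cardinality is the geometric sum $\sum_{i=0}^{L}|\Sigma|^{i}$ and is therefore finite. Composing the injection $\kappa$ with this inclusion places the infinite codes in injective correspondence with a subset of a finite set, which yields the claim.

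The main obstacle I expect is reconciling the \emph{infinite} code with the finite-space descriptive bound of Corollary \ref{main}, which is stated for codes in $\mathbb{N}^{n}$ at finite $n$. I would therefore spend the bulk of the proof verifying that every infinite code possessing a finite encoding is, through its generating program, governed by such a bounded descriptive length --- so that the uniform length bound $L$ is legitimate and the counting step applies across the whole class at once. If this uniform bound cannot be secured, the argument degrades to countability rather than finiteness, so the entire weight of the theorem rests on converting \emph{finite encoding} into \emph{bounded-length encoding}.
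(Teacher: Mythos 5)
Your injection $\kappa$ into $\Sigma^{*}$ is sound as far as it goes, but it only ever yields countability, and the step you yourself flag as delicate --- securing a uniform length bound $L$ --- is precisely the step that cannot be carried out. Corollary \ref{main} bounds the descriptive length of a code in $\mathbb{N}^{n}$ by $O(\log\lVert\mathbf{X}\rVert_{\infty})$, a quantity that depends on the code and grows without bound as $\lVert\mathbf{X}\rVert_{\infty}$ grows; it is not a single constant valid across the whole class of infinite codes, and the characteristic-radius results give no such uniformity either (indeed the paper's own corollary states that an unbounded discrete space has characteristic radius $\infty$). No finite $L$ can exist: the constant sequences $(n,n,n,\ldots)$ for $n\in\mathbb{N}$ are pairwise distinct infinite codes, each generated by a finite program, and since there are at most $\sum_{i=0}^{L}\lvert\Sigma\rvert^{i}$ programs of length at most $L$, minimal program lengths over this family are unbounded. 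So the set of infinite codes admitting finite algorithmic encodings is countably infinite, and your argument, exactly as you anticipated, degrades irreparably from finiteness to countability.

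You should also know that the paper offers no proof of this theorem at all --- it is asserted bare in the ``Limiting Encodings'' subsection, alongside its equally unproved neighbors --- so there is no argument there that resolves the difficulty. Under the standard reading of ``finite algorithmic encoding,'' the statement contradicts elementary computability theory (the computable infinite sequences form a countably infinite, not finite, set), so the obstruction you isolated --- converting \emph{finite encoding} into \emph{uniformly bounded-length encoding} --- is not a technical inconvenience to be engineered around but a genuine impossibility. The most your machinery can legitimately establish is the correct weaker claim that such codes form a countable set; proving the theorem as stated is not a matter of finding a better bound but would require a different (and, classically, untenable) notion of encoding.
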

\begin{theorem} 
The problem of determining the existence of a finite algorithmic encoding for an infinite code is decidable.
\end{theorem}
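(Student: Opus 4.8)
The plan is to derive decidability directly from the immediately preceding theorem, which asserts that only finitely many infinite codes admit a finite algorithmic encoding. The first thing I would do is fix the input convention, since a decision procedure must receive a finite object: an infinite code is presented not by its infinite symbol stream but by a finite \emph{handle} on the factorial prime domain used throughout this section, so that the question ``does this infinite code possess a finite algorithmic encoding?'' is genuinely a question about finite data. With this convention, let $\mathcal{E}$ denote the set of infinite codes that do admit a finite algorithmic encoding. By the preceding theorem, $\mathcal{E}$ is finite.

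The second step is the elementary observation that every finite set is recursive: its characteristic function is a finite lookup table, and such a table can be hard-wired into a total Turing machine. Hence membership in $\mathcal{E}$ is decidable, and deciding whether a given infinite code has a finite algorithmic encoding is precisely deciding membership in $\mathcal{E}$. I would present this as an enumeration argument: enumerate the finitely many encodings witnessing membership in $\mathcal{E}$, compare the input handle against each, and accept if and only if a match is found. The comparison of two codes on the factorial prime domain is itself decidable because such a code, being a $k$-partition, is reversible by Theorem \ref{code}, so equality reduces to comparing the induced partitions.

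The main obstacle I anticipate is reconciling this statement with the classical fact recalled in the examples of this section, namely that compressibility reduces to the halting problem and is therefore uncomputable. I would resolve the apparent conflict by drawing a sharp line between two problems: computing or exhibiting the shortest encoding of a sequence, which is the genuinely uncomputable problem, versus deciding the \emph{mere existence} of a finite encoding, which here quantifies over the finite set $\mathcal{E}$ and so collapses to a finite case analysis. The delicate point, and where I expect the real work to lie, is justifying that each comparison of the input handle against a member of $\mathcal{E}$ is total and terminating; for this I would invoke the bounded descriptive length $O(\log\lVert\mathbf{X}\rVert_\infty)$ of Corollary \ref{main}, which guarantees that every such comparison inspects only finitely much data and halts. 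I would also flag that the entire argument rests on the finiteness of $\mathcal{E}$, so any weakening of that hypothesis — for instance admitting unbounded codes — would break the reduction and must be excluded at the outset.
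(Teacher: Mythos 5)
The paper states this theorem without any proof at all --- it appears as the third in a bare sequence of assertions in the \emph{Limiting Encodings} subsection --- so there is no official argument to compare against. Your reconstruction, deriving decidability from the immediately preceding finiteness theorem via the observation that every finite set is recursive, is certainly the natural reading of that sequence, and your instinct to separate \emph{existence} of a finite encoding from \emph{exhibition} of a shortest one is the right way to address the tension with the uncomputability of compressibility mentioned earlier in the same section. In that sense your skeleton is probably the intended (unwritten) argument.

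However, there is a genuine gap in how you handle the input representation, and it is not a formality --- it is where the whole argument lives or dies. An infinite code is an infinite object, so you rightly insist on a finite \emph{handle} as input; but if the handle algorithmically determines the code, then the handle \emph{is} a finite algorithmic encoding and the decision problem trivializes to the constant answer ``yes,'' while if the handle is something weaker you never say what it is. Worse, the lookup-table argument applies to a finite set of \emph{finite} objects; here $\mathcal{E}$ is a finite set of \emph{infinite} codes, and its finiteness does not transfer to the set of finite handles naming its members, which may be infinite and need not be decidable. To run your enumeration argument you must decide, for an arbitrary input handle and a stored witness, whether they generate the same infinite code; this is a universally quantified ($\Pi^0_1$) condition and is undecidable in general. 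The two citations you lean on do not close this: Theorem \ref{code} asserts reversibility of a code furnishing a $k$-partition and says nothing about decidable equality of infinite codes, and Corollary \ref{main} bounds descriptive length by $O(\log\lVert\mathbf{X}\rVert_\infty)$ only for codes in $\mathbb{N}^n$ with finite $n$ --- for an infinite code $\lVert\mathbf{X}\rVert_\infty$ need not even be finite, so your termination claim that each comparison ``inspects only finitely much data'' is unsupported. The purely nonconstructive route (a decider exists even if we cannot point to it) would rescue you only if the set of yes-handles were finite or otherwise known recursive, and finiteness of $\mathcal{E}$ alone does not give that.
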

In the context of finite discrete spaces, the main result is as follows.
\begin{theorem}
Finite codes have an algorithmic encoding upper-bounded by $\log N$ where $N$ is given by the ${{\ell }_{\infty }}$ norm.
\end{theorem}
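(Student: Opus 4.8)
The plan is to reduce the statement to the $l$-code machinery already developed, since this theorem is the limiting-encoding restatement of Corollary \ref{main}. First I would fix a finite code $\mathbf{X}$ over $\mathbb{N}^n$ and set $N=\|\mathbf{X}\|_\infty$, the largest value occurring in the code. Because $\mathbf{X}$ is finite, it admits a $g$-code representation $\text{G}_N^n(\mathbf{X})=\prod_i p_i^{a_i}$ on a factorial domain; by the definition of a $g$-code this representation is reversible and uniquely determines the sequence. I would then pass to the associated $l$-code $\text{L}_N^n=\sum_i \lfloor a_i \log p_i\rfloor_{\log}$, in which the log-floor operator retains only a logarithmic number of digits relative to the largest value of the code.

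The core of the argument is the earlier theorem stating that on a logarithmic factorial prime domain the space complexity of a finite discrete space $\mathbf{X}$ is upper bounded by $O(\log\|\mathbf{X}\|_\infty)$. Applying it to the $l$-code above yields an encoding whose length is $O(\log N)$. To certify that this encoding genuinely bounds the algorithmic (minimal descriptive) length, I would invoke Theorem \ref{code}: the $l$-code is reversible because it induces a $k$-partition of the distance space, so distinct codes are sent to distinct subregions under the order topology, decoding is well-defined, and the representation is a bona fide algorithmic encoding of $\mathbf{X}$.

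Since only an upper bound is required, exhibiting this single encoding of length $O(\log N)$ suffices to conclude that the minimal descriptive length is at most $\log N$ up to constants. The characteristic-radius theorem for $\mathbf{X}\subset\mathbb{R}$, which bounds the radius by $\log\|\mathbf{X}\|_\infty$, supplies the complementary geometric reading: as the characteristic radius decreases the encoding complexity grows, and here it is pinned at the logarithm of the $\ell_\infty$ norm. This also recovers Corollary \ref{main} in the limiting-encoding setting, so the conceptual content is already in place and the theorem is essentially a repackaging.

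The step I expect to be the main obstacle is verifying that the log-floor truncation preserves reversibility. Retaining only $O(\log N)$ digits of each $a_i\log p_i$ could in principle collapse two distinct $g$-numbers onto the same $l$-code; the argument must show that the ordered-prime property of Equation \ref{order} guarantees enough separation between successive $l$-codes that truncation at logarithmic precision still yields a valid $k$-partition. Establishing this separation, i.e.\ that the characteristic radius of the $l$-code space stays bounded below after truncation, is where the real work lies, and it is precisely what Theorem \ref{code} together with the characteristic-radius bound is designed to deliver.
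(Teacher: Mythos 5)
The paper does not actually prove this theorem: its entire justification is the sentence immediately preceding the block of restated results, namely that ``the logarithmic upper-bound of $l$-codes follows the classic information theoretical argument'' --- i.e., an integer of magnitude $N$ admits a radix representation of $\lceil \log N \rceil$ digits, applied to the code viewed as a single number (its $g$-number). Measured against that, your proposal is both much heavier and, at one point, unsound. You instantiate $N=\|\mathbf{X}\|_\infty$ as the largest value occurring in the code. Under that reading the statement is false by a counting argument: there are $(N+1)^m$ sequences of length $m$ with entries bounded by $N$, so no scheme can describe all of them in $O(\log N)$ bits independently of $m$; correspondingly, your own $l$-code $\sum_i \lfloor a_i \log p_i \rfloor_{\log}$ has magnitude of order $\sum_i a_i \log p_i$, which grows with the number of elements in the code, not merely with the maximal entry. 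The bound collapses to the intended triviality only when $N$ is the $\ell_\infty$ norm of the encoded number itself, so that $\log N = \sum_i a_i \log p_i$, which is evidently what the paper's one-line appeal means.

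Separately, the step you yourself flag as ``where the real work lies'' --- that log-floor truncation does not collapse distinct $g$-numbers onto one $l$-code --- is never discharged. Theorem \ref{code} is only an equivalence (a code is reversible iff it yields a $k$-partition); it does not assert that the truncated $l$-codes actually form a $k$-partition. Likewise the ordered-prime inequality (Equation \ref{order}) gives strict monotonicity of the untruncated sums, not a lower bound on gaps that survives retaining only $O(\log)$ digits, which is exactly the characteristic-radius lower bound you would need. So at its crux your argument assumes the separation it sets out to prove. If you adopt the paper's reading of $N$, none of this machinery is required: the radix expansion of the $g$-number is itself an encoding of length $\log N$, and reversibility is just uniqueness of prime factorization --- the ``classic'' argument the paper gestures at without writing down.
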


\subsubsection{l-Functions}
An l-function is defined as a mapping from an arbitrary encoding to a set of $l$-codes. Formally:
\begin{definition}
Let l be an l-function. The following then holds:
\begin{equation}
\l(\mathbf{X})=\left\{ {}_{{{j}_{1}}}^{{{i}_{1}}}\text{L}_{{{N}_{1}}}^{{{n}_{1}}},\ldots ,{}_{{{j}_{k}}}^{{{i}_{k}}}\text{L}_{{{N}_{k}}}^{{{n}_{k}}} \right\}
\end{equation}
\end{definition}
Reversibility and uniqueness properties can be discussed in the context of l-functions.
\begin{definition}
An l-function is said to be reversible if and only if the following holds:
\begin{equation}
f(\mathbf{X})=\bigcup\limits_{k}{{{l}_{k}}({{\mathbf{X}}_{k}})}
\end{equation}	
for some bijective function $f$ defining a unique $l$-code.
\end{definition}

\subsubsection{Complete l-Functions}
Complete encodings define as usual the set of encodings that are Turing complete.
\begin{theorem}\label{tc}
The following sequence of iterates illustrates a Turing machine:
\begin{equation}
{{\mathbf{x}}_{i}}={{\mathbf{x}}_{i-1}}+{{\mathbf{r}}_{i}}{{,}^{{}}}^{{}}{}^{{}}i=1,\ldots 
\end{equation}	
with the following:
\begin{equation}
\bigcup\limits_{k}{{{l}_{k}}({{\mathbf{x}}_{k}})=\bigcup\limits_{k}{{}_{{{j}_{k}}}^{{{i}_{k}}}\text{L}_{{{N}_{k}}}^{{{n}_{k}}}}
(\mathbf{x})}
\end{equation}
\end{theorem}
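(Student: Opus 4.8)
The plan is to establish Turing-completeness of the iterated $l$-function encoding by exhibiting a faithful simulation of an arbitrary single-tape Turing machine $M$, in the spirit of the earlier remark that a computable space is complete whenever it can represent a Turing-complete space. Fix $M$ with transition relation $\delta$, and identify each instantaneous description (tape contents, head position, control state) with a finite sequence over the machine alphabet. Since a $g$-code uniquely encodes an arbitrary finite sequence, each configuration $c_i$ of $M$ admits a unique code value, which I take to be the iterate $\mathbf{x}_i$. The increment $\mathbf{r}_i = \mathbf{x}_i - \mathbf{x}_{i-1}$ is then, by construction, exactly the change in the configuration code induced by one application of $\delta$, so the recurrence $\mathbf{x}_i = \mathbf{x}_{i-1} + \mathbf{r}_i$ traces the computation of $M$ step by step, and the union $\bigcup_k l_k(\mathbf{x}_k)$ records the entire run.

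First I would make the configuration encoding precise: write a configuration as the concatenation of the tape word, a delimiter marking the scanned cell, and the state symbol, and assign it a prime $g$-code so that the map from configurations to code values is injective, invoking the $g$-code reversibility property. Next I would show that $\delta$ lifts to a computable operation on code values, i.e. that from $\mathbf{x}_{i-1}$ one computes $\mathbf{x}_i$, and hence $\mathbf{r}_i$; because a single transition alters only the scanned cell, the head marker, and the state, this operation is a \emph{local} edit of the code, which is what permits writing it as an additive correction $\mathbf{r}_i$ rather than a full recomputation. I would then invoke the definition of a reversible $l$-function together with the recursive definition of iterated encodings to conclude that the history $\{\mathbf{x}_k\}$ is collected as $\bigcup_k l_k(\mathbf{x}_k) = \bigcup_k {}_{j_k}^{i_k}\text{L}_{N_k}^{n_k}(\mathbf{x})$, where the indices $(i_k, j_k, N_k, n_k)$ track step number, branch, code count, and code length respectively. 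Since $M$ was arbitrary, the scheme simulates every Turing machine and is therefore complete.

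The hard part will be twofold. First, the tape of $M$ is unbounded, so the codes $\mathbf{x}_i$ are not drawn from a fixed finite discrete space; I would handle this by letting the cardinality $N_k$ and length $n_k$ of the iterated encoding grow with the step, which the iterated-encoding formalism explicitly permits, while checking that the complexity-preservation clause (iterated encodings preserve single-iteration complexity when the number of iterations is $O(1)$) is not silently assumed over an unbounded run. Second, and more delicately, I must verify that the transition is genuinely expressible as an additive increment in the chosen code: for a bare numerical $g$-code the value under a local symbol change need not move additively, so the real content is to select the encoding --- for instance a positional or log-prime $l$-code, so that editing one symbol perturbs a single summand $\lfloor a_i \log p_i \rfloor_{\log}$ --- for which $\mathbf{r}_i$ is a clean, computable difference. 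Establishing that such an encoding exists and remains reversible is where the argument carries its weight; the remaining steps are bookkeeping over the iterated-encoding indices.
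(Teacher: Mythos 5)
The paper contains no proof of Theorem \ref{tc}: it is stated bare, followed only by the remark that it ``provides a class of simple l-functions built on the permutation space of an algorithmic encoding.'' So there is no argument of the paper's to compare yours against; your proposal supplies a proof where the paper supplies none. Read on its own, your outline --- encode each instantaneous description injectively via a reversible $g$-code, take the iterate $\mathbf{x}_i$ to be the code of the $i$-th configuration, set $\mathbf{r}_i = \mathbf{x}_i - \mathbf{x}_{i-1}$, and collect the run as $\bigcup_k l_k(\mathbf{x}_k)$ --- is the standard simulation argument and is the natural reading consistent with the paper's earlier remark that a computable space is complete if it can represent a Turing-complete space. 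It is almost certainly the intended content.

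Two caveats on your own execution. First, your second ``hard part'' (that the transition be ``genuinely expressible as an additive increment'') is a non-issue as you have framed it: once $\mathbf{r}_i$ is \emph{defined} as the difference of consecutive codes, every injective encoding satisfies the recurrence vacuously, so additivity cannot fail. The recurrence acquires content only if you additionally require $\mathbf{r}_i$ to be computable from $\mathbf{x}_{i-1}$ alone --- decode via $g$-code reversibility, apply $\delta$, re-encode --- and that is already secured by your first step; locality of the edit (perturbing a single summand $\left\lfloor a_i \log p_i \right\rfloor_{\log}$ in a log-prime $l$-code) buys efficiency, not correctness. Since the theorem never says which of these two claims ``illustrates a Turing machine'' means, you should state explicitly which you are proving. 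Second, your unbounded-tape concern is well placed and is where the paper's formalism genuinely strains: the complexity-preservation clause for iterated encodings assumes $O(1)$ iterations, and the finite-discrete-space machinery (finite $N$, characteristic radius) does not cover an unbounded run, so your plan of letting $(N_k, n_k)$ grow with the step, claiming finiteness only per configuration, is the right repair. With those clarifications your sketch is sound, and it is considerably more of a proof than the paper offers.
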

	The $l$-code representation of a universal encoding is denominated a complete l-function. Theorem \ref{tc} provides a class of simple l-functions built on the permutation space of an algorithmic encoding.

\subsubsection{Application of l-Function on (1,0)}
Let some function $f$ define a sorting algorithm known as bead sort \cite{calude:bead} on $N$ random integers $\mathbf{B}$ in their matrix rank order representation such that:
\begin{equation}
\bigcup\limits_{k,j=1,N}^{M,1}f_{i}(b_{j,k},b_{j+1,k})=\left\{ 
\begin{array}{rl}
   \begin{matrix}
   \ddots  & \vdots  & \iddots \\
\ldots  &  {{b}_{j+1,k}}  & \ldots \\
 \ldots  &  {{b}_{j,k}}  & \ldots \\
 \iddots  & \vdots  & \ddots 
\end{matrix}, & (b_{j,k},b_{j+1,k})=(1,0)   \\
   \begin{matrix}
   \ddots  & \vdots  & \iddots \\
\ldots  &  {{b}_{j,k}}  & \ldots \\
 \ldots  &  {{b}_{j,k+1}}  & \ldots \\
 \iddots  & \vdots  & \ddots 
\end{matrix}, & otherwise   
\end{array}\right.
\end{equation}
and:
\begin{equation}
\mathbf{B}=\left[ 
\renewcommand{\arraystretch}{1.5}
\begin{matrix}
   {{b}_{1,1}} & \ldots  & {{b}_{1,M}}  \\
   {{b}_{1,2}} & \ldots  & {{b}_{2,M}}  \\
   \vdots  & \ddots  & \vdots   \\
   {{b}_{N,1}} & \ldots  & {{b}_{N,M}}  \\
\end{matrix} \right]
\end{equation}
with ${{b}_{i,j}}\in \{0,1\}$.

Then the following representation can be used:
\begin{equation}\label{lint}
f_i=\sum_{i=1}^{N!} \frac{a_i}{b_i}^i
\end{equation}
Equation \ref{lint} can be denoted as the $\mathcal{L}$-interpolation of $f$ and defines the upper-bound of   on the following complexity space:
\begin{equation}
\renewcommand{\arraystretch}{1.5}
\begin{array}{c|c}
   O(f) & O(|f|)  \\ \hline
   O(f) & O(N!)  \\
   \vdots  & \vdots   \\
   O(N) & O(1)  \\
\end{array}
\end{equation} 

This is shown in Figure \ref{Numbers}.

\begin{bibdiv}
\begin{biblist}

\bib{calude:bead}{article}{
      author={Arunlanandham, J.J.},
      author={Calude, C.S.},
      author={Dinneen, M.J.},
       title={Bead-sort: a natural sorting algorithm},
        date={2002},
     journal={Bull. Eur. Assoc. Theor. Comput. Sci. EATCS},
      number={76},
       pages={153\ndash 162},
}

\bib{calude:omega}{article}{
      author={Calude, Cristian~S.},
      author={Dinneen, Michael~J.},
       title={Exact approximations of omega numbers},
        date={2007},
     journal={I. J. Bifurcation and Chaos},
      volume={17},
      number={6},
       pages={1937\ndash 1954},
}

\bib{ciaccia:mtree}{article}{
      author={Ciaccia, P.},
      author={Patella, M.},
      author={Zezula, P.},
       title={M-tree: An efficient access method for similarity search in
  metric spaces},
        date={1997},
     journal={Proceedings of the 23rd VLDB International Conference},
       pages={426\ndash 435},
}

\bib{faloutsos:gemini}{incollection}{
      author={Faloutsos, C.},
       title={Searching multimedia databases by content},
        date={1996},
   publisher={Kluwer Academic Press},
}

\bib{goodman:handbook}{incollection}{
      author={Goodman, J.E.},
      author={O'Rourke, J.},
       title={Handbook of discrete computational geometry},
        date={2004},
   publisher={Chapman CRC Press},
}

\bib{jackson:cnf}{inproceedings}{
      author={Jackson, P.},
      author={Sheridan, D.},
       title={Clause form conversions for boolean circuits},
        date={2005},
   booktitle={Theory and applications of satisfiability testing},
   publisher={7th International Conference, SAT 2004},
       pages={183\ndash 198},
}

\bib{loh:dual_full}{article}{
      author={Loh, W.-K.},
      author={Moon, Y.-S.},
      author={Whang, K.-Y.},
       title={Efficient time-series subsequence matching using duality in
  constructing windows},
        date={2000},
     journal={AITrc Technical Report},
}

\bib{salton:vector}{article}{
      author={Salton, G.},
      author={Wong, A.},
      author={Yang, C.~S.},
       title={A vector space model for automatic indexing},
        date={1975},
     journal={Communications of the ACM},
       pages={613\ndash 620},
}

\bib{schulman:geo}{article}{
      author={Schulman, R.S.},
       title={A geometric model of rank correlation},
        date={1979},
     journal={The American Statistician},
      volume={33},
      number={2},
       pages={77\ndash 80},
}

\bib{sibuya:sphere}{article}{,
	author = {M. Sibuya},
	journal = {Annals of the Institute of Statistical Mathematics},
	number = {1},
	pages = {81-85},
	title = {A method for generating uniformly distributed points on N -dimensional spheres},
	volume = {14},
	year = {1962}}

\bib{tseitin:cnf}{inproceedings}{
      author={Tseitin, G.S.},
       title={On the complexity of derivation in propositional calculus},
        date={1968},
   booktitle={Structures in constructive mathematics and mathematical logic},
   publisher={Steklov Mathematical Institute},
       pages={115\ndash 125},
}

\bib{yu:high}{book}{
      author={Yu, C.},
       title={High-dimensional indexing: Transformational approaches to high-dimensional range and similarity searches},
   publisher={Springer},
        date={2002},
}

\end{biblist}
\end{bibdiv}

\section{Acknowledgements}

\emph{I would like to thank all the people who have given me their support during the writing of this paper done during my Ph.D. work in Computer Science.}

\end{document}